\documentclass[11pt]{article}
\usepackage[utf8]{inputenc}
\usepackage{tcolorbox}
\usepackage{amsthm}
\usepackage{amsfonts}
\usepackage{amsmath}
\usepackage{amssymb}
\usepackage{colonequals}
\usepackage{epsfig}
\usepackage{url}
\newtheorem{theorem}{Theorem}
\newtheorem{corollary}[theorem]{Corollary}
\newtheorem{definition}[theorem]{Definition}
\newtheorem{lemma}[theorem]{Lemma}
\newtheorem{observation}[theorem]{Observation}
\newtheorem{problem}[theorem]{Problem}
\newtheorem{example}[theorem]{Example}
\newcommand{\GG}{{\cal G}}
\newcommand{\CC}{{\cal C}}

\renewcommand{\SS}{{\cal S}}

\newcommand{\gate}[3]{#1(#2\to#3)}

\newcommand{\VC}{\mathrm{VC}}

\title{Representation of short distances in structurally sparse graphs\footnote{Supported by the ERC-CZ project LL2005 (Algorithms and complexity within and beyond bounded expansion) of the Ministry of Education of Czech Republic.
Revised and extended with support of ERC Synergy grant DYNASNET no. 810115.}}
\author{Zden\v{e}k Dvo\v{r}\'ak\thanks{Computer Science Institute, Charles University, Prague, Czech Republic. E-mail: {\tt rakdver@iuuk.mff.cuni.cz}.}
}
\date{}

\begin{document}
\maketitle

\begin{abstract}
A partial orientation $\vec{H}$ of a graph $G$ is a \emph{weak $r$-guidance system} if for any two vertices
at distance at most $r$ in $G$, there exists a shortest path $P$ between them such that $\vec{H}$
directs all but one edge in $P$ towards this edge.  In case that $\vec{H}$ has bounded maximum outdegree,
this gives an efficient representation of shortest paths of length at most $r$ in $G$.
We show that graphs from many natural graph classes admit such weak guidance systems,
and study the algorithmic aspects of this notion.
\end{abstract}

\section{Introduction}

We consider the following general question: Given an undirected unweighted graph $G$, can short distances in $G$ be represented efficiently?
More precisely, the setting that interests us is as follows:
\begin{itemize}
\item $G$ is known to belong to some class $\GG$ of well-structured graphs (e.g., planar graphs, graphs of clique-width at most $6$, \ldots)
\item We are only interested in distances up to some fixed upper bound $r$.
\item We are allowed to preprocess $G$ in polynomial time; let $D$ denote the resulting data structure.
\item The data structure $D$ should enable us to efficiently answer the queries of the following form:
\begin{itemize}
\item Are two input vertices $u$ and $v$ at distance at most $r$ in $G$?
\end{itemize}
In case that the answer is positive, we may also want to determine the distance between $u$ and $v$,
and return a shortest path between them.
\end{itemize}
Note that we consider both $\GG$ and $r$ to be fixed parameters.  There are several criteria to consider:
\begin{itemize}
\item The time complexity of the preprocessing.
\item The time complexity of the queries.
\item The space complexity (the size of $D$).
\end{itemize}
Of course, there are some trade-offs between these criteria.  E.g., $D$ could store distances between all pairs of vertices, resulting in a relatively slow preprocessing time
and space complexity $\Theta(|V(G)|^2)$, but constant query time.  In this paper we consider a solution which still achieves constant query time
(depending only on $\GG$ and $r$), but is memory efficient in the sense that storing $D$ takes up about as much space as the graph $G$ itself.
To achieve this, $D$ will only consist of an orientation of $G$.

An \emph{orientation} of an undirected graph $G$ is a directed graph $\vec{H}$ such that for every $(u,v)\in E(\vec{H})$,
we have $uv\in E(G)$, and for every $uv\in E(G)$, at least one of $(u,v)$ and $(v,u)$ is a directed edge of $\vec{H}$.
Note that $\vec{H}$ can contain both $(u,v)$ and $(v,u)$, i.e., we allow an edge of $G$ to be directed
in both ways at the same time.  Let $B_{\vec{H}}(v,a)$ denote the set of vertices reachable in $\vec{H}$ from $v$ by a directed
path of length at most $a$.  An \emph{$r$-guidance system} is an orientation $\vec{H}$ such that
for any vertices $u,v\in V(G)$ at distance $\ell\le r$ in $G$, there exist non-negative integers $a$ and $b$ such that $a+b=\ell$
and $B_{\vec{H}}(u,a)\cap B_{\vec{H}}(v,b)\neq \emptyset$; i.e., there is a shortest path between $u$ and $v$ in $G$
whose edges are in $\vec{H}$ directed towards one of its vertices.  Note that if $\vec{H}$ has maximum outdegree at most $c$,
all such paths can be enumerated in time $O(c^r)$, and if $c$ is small, this enables us to find a shortest path between
a given pair of vertices (or verify that their distance is greater than $r$) efficiently.

The guidance systems were (without explicitly naming them) introduced by Kowalik and Kurowski~\cite{KowKur}, who
proved that they can be used to represent short distances in planar graphs, and more generally for every $F$,
in any graph avoiding $F$ as a topological minor.
As observed in~\cite{dvorlah}, essentially the same argument shows that graphs from even more general graph classes,
namely all classes with \emph{bounded expansion} and more generally all \emph{nowhere-dense} classes, admit guidance systems of bounded maximum outdegree. 
To state the result precisely, we need to introduce several definitions.

For a non-negative integer $s$, a graph $H$ is an \emph{$s$-shallow minor} of a graph $G$ if $H$ is obtained from $G$ by contracting pairwise-disjoint subgraphs,
each of radius at most $s$.  For a class $\GG$, let $\nabla_{\!s} \GG$ denote the class of all graphs $H$ that appear as $s$-shallow minors in graphs from $\GG$.
A class $\GG$ of graphs has \emph{bounded expansion} if for every $s\ge 0$ there exists $d_s$ such that every graph in $\nabla_{\!s} \GG$ has average degree at most $d_s$.
Even less restrictively, a class $\GG$ is \emph{nowhere-dense} if for every $s\ge 0$ there exists $d_s$ such that $K_{d_s}\not\in \nabla_{\!s} \GG$.
Examples of classes of graphs with bounded expansion include planar graphs and more generally all proper minor-closed classes, graphs with bounded maximum degree
and more generally all proper classes closed under topological minors, graphs drawn in the plane with $O(1)$ crossings on each edge, and many other classes
of sparse graphs; see~\cite{nesbook} for more details.

\begin{theorem}[Dvo\v{r}\'ak and Lahiri~\cite{dvorlah}]\label{thm-bexp}
Let $\GG$ be a class of graphs and $r$ a positive integer.
\begin{itemize}
\item If $\GG$ has bounded expansion, then there exists $c$ such that every graph $G\in \GG$ has
an $r$-guidance system of maximum outdegree at most $c$.  Moreover, such an $r$-guidance system can be found in time $O(|V(G)|)$.
\item If $\GG$ is nowhere-dense, then for every $\varepsilon>0$, there exists $c$ such that
every graph $G\in \GG$ has an $r$-guidance system of maximum outdegree at most $c|G|^\varepsilon$.
Moreover, such an $r$-guidance system can be found in time $O(|V(G)|^{1+\varepsilon})$.
\end{itemize}
\end{theorem}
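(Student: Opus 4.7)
My approach would go through the \emph{weak $r$-coloring number} $\mathrm{wcol}_r(G)$, used as a black box. Given a linear ordering $\sigma$ of $V(G)$, call $w$ \emph{weakly $r$-reachable} from $v$ if there is a path in $G$ of length at most $r$ from $v$ to $w$ on which $w$ is the $\sigma$-minimum vertex; write $W_r^\sigma(v)$ for the set of such $w$, and define $\mathrm{wcol}_r(G)=\min_\sigma \max_v |W_r^\sigma(v)|$. It is well known that for bounded-expansion classes $\GG$ there is a constant $c_r$ with $\mathrm{wcol}_r(G)\le c_r$ for all $G\in\GG$, and a witnessing $\sigma$ can be produced in time $O(|V(G)|)$; for nowhere-dense $\GG$ and any $\varepsilon>0$ the analogous bound is $O(|V(G)|^\varepsilon)$, with $\sigma$ computable in time $O(|V(G)|^{1+\varepsilon})$. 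I would take these bounds as given.

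\textbf{Construction.} Fix such an ordering $\sigma$. For every $v\in V(G)$ and every $w\in W_r^\sigma(v)\setminus\{v\}$, fix one shortest path $Q_{v,w}$ from $v$ to $w$ on which $w$ is $\sigma$-minimum (existing by the definition of $W_r^\sigma$), and add to $\vec{H}$ the directed edge from $v$ to the neighbor of $v$ along $Q_{v,w}$. Each $v$ then has outdegree at most $|W_r^\sigma(v)|\le \mathrm{wcol}_r(G)$, giving the outdegree bounds claimed in both parts of the theorem.

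\textbf{Correctness.} I would prove by induction on $k=d_G(v,w)$ the auxiliary claim: if $w\in W_r^\sigma(v)$ then $w\in B_{\vec{H}}(v,k)$. The case $k=0$ is trivial. For $k\ge 1$, let $v'$ be the successor of $v$ on $Q_{v,w}$; the suffix of $Q_{v,w}$ from $v'$ to $w$ still has $w$ as its $\sigma$-minimum, so $w\in W_r^\sigma(v')$ with $d_G(v',w)\le k-1$, and by induction $w\in B_{\vec{H}}(v',k-1)$; composing with the directed edge $v\to v'$ places $w\in B_{\vec{H}}(v,k)$. To finish, given $u,v$ at distance $\ell\le r$, pick any shortest $u$--$v$ path $P$ in $G$ and let $w$ be its $\sigma$-minimum vertex; splitting $P$ at $w$ shows $w\in W_r^\sigma(u)\cap W_r^\sigma(v)$ and $d_G(u,w)+d_G(v,w)=\ell$, whence the auxiliary claim yields $w\in B_{\vec{H}}(u,d_G(u,w))\cap B_{\vec{H}}(v,d_G(v,w))$, verifying the $r$-guidance property.

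\textbf{Main obstacle.} All real combinatorial content has been absorbed into the weak coloring number bounds, so once those are invoked the construction and the inductive correctness proof are short. The one subtlety to watch is that subpaths of a witness path $Q_{v,w}$ inherit $\sigma$-minimality at $w$: this is exactly what lets us orient only a single edge per pair $(v,w)$ rather than the whole path and still recover directed $v$-to-$w$ paths inductively, which is crucial because orienting entire paths would destroy the outdegree bound.
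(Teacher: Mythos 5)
The paper states Theorem~\ref{thm-bexp} as a cited result of Dvo\v{r}\'ak and Lahiri without reproducing a proof, so there is no internal argument to compare against; your route through the weak $r$-colouring number is the standard one for this result and almost certainly the intended one.

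There is, however, a genuine flaw in the construction as written. You assert that for every $w\in W_r^\sigma(v)$ there exists a \emph{shortest} $v$--$w$ path on which $w$ is $\sigma$-minimum, ``existing by the definition of $W_r^\sigma$''. That does not follow: the definition only supplies \emph{some} path of length at most $r$ on which $w$ is minimum, and every shortest $v$--$w$ path may pass through a vertex smaller than $w$. On the $5$-cycle $v,x,w,b,a$ with ordering $x\prec w\prec v\prec a\prec b$, the unique shortest $v$--$w$ path $v,x,w$ has minimum $x$, yet $w$ is weakly $3$-reachable from $v$ via $v,a,b,w$. For such pairs $Q_{v,w}$ is undefined, and you cannot just substitute the longer witness, since your induction on $k=d_G(v,w)$ genuinely needs $d_G(v',w)=k-1$, i.e.\ needs $Q_{v,w}$ to be a true shortest path. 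The fix is to define $Q_{v,w}$ only for those $w$ that are $\sigma$-minimum on some shortest $v$--$w$ path of length at most $r$ (still a subset of $W_r^\sigma(v)$, so the outdegree bound is unchanged) and to restrict the auxiliary claim to these pairs: the restricted hypothesis is inherited under the induction because a suffix of a shortest path is again a shortest path with the same minimum, and the final step only invokes the claim for $w$ equal to the $\sigma$-minimum of a chosen shortest $u$--$v$ path $P$, whose two subpaths are exactly the required shortest witnesses. You should also note explicitly that $\vec{H}$ is a full orientation (as the definition of an $r$-guidance system demands): for $uv\in E(G)$ with $u\prec v$, the single edge is a shortest $v$--$u$ path with minimum $u$, forcing the arc $(v,u)$ to be present.
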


A graph with an orientation of maximum outdegree at most $c$ necessarily has maximum average degree at most $2c$,
and thus it is $(2c+1)$-degenerate.  Hence, guidance systems of bounded maximum outdegree can only exist in sparse graphs.
This brings us to the main topic of our paper: \textbf{Does there exists a variant of the notion useful for dense graphs?}

Note that representing distance one by a guidance system forces us to orient all edges.  If we relax the notion to only represent distances 2, 3, \ldots, $r$, this may not be necessary.
A \emph{partial orientation} of a graph $G$ is a spanning directed subgraph of an orientation of $G$ (i.e., we allow some
edges not to be oriented in either direction).  An \emph{$r^+$-guidance system} is a partial orientation $\vec{H}$ of a graph $G$
such that for any vertices $u,v\in V(G)$ at distance $\ell$ in $G$, where $2\le \ell\le r$, there exist non-negative integers $a$ and $b$ such that $a+b=\ell$
and $B_{\vec{H}}(u,a)\cap B_{\vec{H}}(v,b)\neq \emptyset$.  Let us give a (trivial) example showing that there are
dense graphs admitting $r^+$-guidance systems.

\begin{example}\label{ex-univ}
Let $G$ be a graph containing a universal vertex $u$, and let $\vec{H}$ be the partial orientation obtained
by directing all edges incident with $u$ towards $u$.  Observe that for any positive integer $r$,
$\vec{H}$ is an $r^+$-guidance system in $G$ of maximum outdegree one.
\end{example}
However, there are some quite simple graphs that do not admit $r^+$-guidance systems of bounded outdegree.
For a graph $G$ and a positive integer $k$, let $G^k$ denote the \emph{$k$-distance power} of $G$, that is,
the graph with vertex set $V(G)$ and two vertices adjacent if and only if the distance between them in $G$ is at most $k$.

\begin{example}\label{example-power}
Let $T$ be the graph obtained from $K_{1,n}$ by subdividing every edge exactly twice, let $X$ be the
set of its leaves, and let $Y$ be the set of neighbors of the central vertex of degree $n$.  Let $G=T^2$.
Note that $Y$ induces a clique in $G$, and any two vertices of $X$ are joined by 
a unique path of length three using exactly one edge of this clique.  This implies that in any $3^+$-guidance system
for $G$, every edge of the clique on $Y$ must be directed in at least one direction, and thus some vertex of $Y$ has outdegree at least $(n-1)/2$.
\end{example}

This example highlights the fact that in dense graphs, we cannot afford to represent the shortest paths
by having all of their edges oriented, and motivates another generalization of guidance systems.
\begin{definition}
A \emph{weak $r$-guidance system}
is a partial orientation $\vec{H}$ of $G$ such that for any distinct vertices $u,v\in V(G)$ at distance $\ell\le r$ in $G$, there exist non-negative integers
$a$ and $b$ such that $a+b=\ell-1$ and $G$ contains an edge between $B_{\vec{H}}(u,a)$ and $B_{\vec{H}}(v,b)$; that is, there exists
a shortest path between $u$ and $v$ in $G$ such that all but one edge $e$ of this path is directed in $\vec{H}$
towards this exceptional edge $e$ (which may or may not be directed).
\end{definition}
In particular, an $r$-guidance system (or an $r^+$-guidance system) is also a weak $r$-guidance system.
Note that if the graph $G$ is represented so that we can in constant time test whether two vertices are adjacent, then a weak $r$-guidance system of maximum outdegree $c$ makes it possible
to find a shortest path between a given pair of vertices (or verify that their distance is greater than $r$) in time $O(c^{r-1})$.

The goal of this paper is to develop the theory of weak guidance systems; we show that several interesting
graph classes admit weak guidance systems of small maximum outdegree (constant, or logarithmic in the number of vertices),
address the algorithmic question of finding weak guidance systems efficiently, and describe an application of
the notion in approximation of distance variants of the independence and domination number.
On the negative side, we give examples of simple graph classes that do not admit weak guidance systems of small maximum outdegree.

The rest of the paper is organized as follows.
\begin{itemize}
\item In Section~\ref{sec-basic}, we give some basic properties of weak guidance systems,
including the fact that they behave well under the distance power operation.
\item In Section~\ref{sec-bexp}, we prove a result analogous to Theorem~\ref{thm-bexp},
showing that graphs from classes with structurally bounded expansion (i.e., definable
in classes with bounded expansion by first-order logic formulas) admit weak guidance
systems with bounded maximum outdegree.  We also give an analogous result for structurally nowhere-dense graph classes.
\item The results from Section~\ref{sec-basic} and~\ref{sec-bexp} do not provide polynomial-time algorithms to find
the weak guidance system if we are not provided with some additional information (e.g., in the case of
graph powers, if we are only given the graph $G^k$, but not the graph $G$ and its weak guidance system).
In Section~\ref{sec-algo}, we provide an approximation algorithm for this problem
that for an $n$-vertex graph which admits a weak guidance system of maximum outdegree $c$
returns one of maximum outdegree $O(c\log n)$.  We also provide an algorithm
that returns a weak guidance system of maximum outdegree $O(c\log c)$, assuming that
certain set systems have bounded VC-dimension, which is in particular the case for
classes with structurally bounded expansion studied in Section~\ref{sec-bexp}.
\item In Section~\ref{sec-appl}, we show an application of weak guidance systems
in design of approximation algorithms for distance independence and domination number.
\item In Section~\ref{sec-lb}, we consider several graph classes that do not admit weak
guidance systems of bounded maximum outdegree, specifically graphs of girth at least five
and large average degree, split graphs, and graphs of bounded clique-width.
\end{itemize}

\section{Basic properties of weak guidance systems}\label{sec-basic}

First, let us note that weak guidance systems enable us to circumvent the difficulty from Example~\ref{example-power}.
\begin{lemma}\label{lemma-power}
Let $G$ be a graph and let $k\ge 1$ and $c\ge 2$ be integers.
For any positive integer $r$, if $G$ has a weak $kr$-guidance system $\vec{H}$ of maximum outdegree at most $c$, then 
then $G^k$ has a weak $r$-guidance system $\vec{F}$ of maximum outdegree at most $2c^k$.
\end{lemma}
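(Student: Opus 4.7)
The plan is to take $\vec{F}$ as the partial orientation of $G^k$ that, from each vertex $u$, directs an edge to every vertex of $B_{\vec{H}}(u,k)\setminus\{u\}$. Any such target is at $G$-distance at most $k$ from $u$ via the witnessing directed $\vec{H}$-path, so it is indeed a neighbor in $G^k$. Since $|B_{\vec{H}}(u,k)|\le 1+c+c^2+\cdots+c^k\le 2c^k$ for $c\ge 2$, the maximum outdegree of $\vec{F}$ is at most $2c^k$. The key structural property is that $B_{\vec{F}}(u,a)\supseteq B_{\vec{H}}(u,ak)$ for every $a\ge 0$: any directed $\vec{H}$-path from $u$ of length at most $ak$ splits into $a$ consecutive segments of length at most $k$, and each such segment is replaced by a single $\vec{F}$-edge by construction.

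To verify the weak $r$-guidance condition, fix distinct $u,v$ at $G^k$-distance $\ell\le r$; the case $\ell=1$ is trivial (take $a=b=0$), so assume $\ell\ge 2$. Let $d$ be the $G$-distance from $u$ to $v$; from $\ell=\lceil d/k\rceil$ we obtain $(\ell-1)k<d\le\ell k\le kr$, so the weak $kr$-guidance property of $\vec{H}$ yields $a',b'\ge 0$ with $a'+b'=d-1$ and an edge $xy\in E(G)$ with $x\in B_{\vec{H}}(u,a')$ and $y\in B_{\vec{H}}(v,b')$. Splicing the witnessing $\vec{H}$-paths with $xy$ produces a shortest $G$-path $u=p_0,\dots,p_{a'}=x,p_{a'+1}=y,\dots,p_d=v$ with $p_i\in B_{\vec{H}}(u,i)$ for $i\le a'$ and $p_i\in B_{\vec{H}}(v,d-i)$ for $i\ge a'+1$. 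Set $\alpha=\lfloor a'/k\rfloor$, $a=\alpha$, $b=\ell-1-\alpha$ (note $0\le\alpha\le\ell-1$ since $a'\le d-1\le\ell k-1$), and choose an index
\[
j\in\bigl[\max(a'+1,\;d-bk),\;\min(d,\,(\alpha+1)k)\bigr],
\]
a range that is nonempty because $a'<(\alpha+1)k$ and $d\le \ell k=(\alpha+1)k+bk$. Then $p_{\alpha k}\in B_{\vec{H}}(u,\alpha k)\subseteq B_{\vec{F}}(u,a)$; the bounds $j\ge a'+1$ and $j\ge d-bk$ give $p_j\in B_{\vec{H}}(v,d-j)\subseteq B_{\vec{H}}(v,bk)\subseteq B_{\vec{F}}(v,b)$; and the inequality $j-\alpha k\le k$ certifies that $p_{\alpha k}p_j$ is an edge of $G^k$, witnessing the weak guidance property with $a+b=\ell-1$.

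The delicate point is the choice of $j$: because $a'$ need not be a multiple of $k$, naive candidates such as $j=a'+1$ or $j=(\alpha+1)k$ can each violate one of the two required containments, and one must play the floor on the $u$-side off against the $k$-slack of the bridge edge $xy$ on the $v$-side so that $p_{\alpha k}$ and $p_j$ land inside the balls $B_{\vec{F}}(u,a)$ and $B_{\vec{F}}(v,b)$ simultaneously.
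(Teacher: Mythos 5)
Your proof is correct and follows essentially the same strategy as the paper's: define $\vec{F}$ by directing $u$ toward $B_{\vec{H}}(u,k)\setminus\{u\}$, take the $\vec{H}$-oriented shortest $G$-path between $u$ and $v$, and split it into length-$\le k$ chunks to produce an $\vec{F}$-oriented shortest $G^k$-path. The only difference is cosmetic bookkeeping: the paper builds two maximal sequences (one from each end) and distinguishes two cases according to whether the gap between them exceeds $k$, whereas you parametrize from the $u$-side only via $\alpha=\lfloor a'/k\rfloor$ and pick the $v$-side index $j$ from an explicitly checked nonempty interval; your version is a bit slicker and isolates the inclusion $B_{\vec{F}}(u,a)\supseteq B_{\vec{H}}(u,ak)$ as a reusable lemma, while the paper reasons directly on the constructed path.
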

\begin{proof}
Let $\vec{F}$ be the partial orientation of $G^k$ containing exactly the directed edges $(u,v)$ such that $v\in B_{\vec{H}}(u,k)$.
Note that $\vec{F}$ has maximum outdegree at most
$$\sum_{\ell=1}^k c^\ell=c\cdot \frac{c^k-1}{c-1}<2c^k.$$

Suppose that the distance between vertices $x$ and $y$ in $G^k$ is $\ell\le r$.  Then the distance between $x$ and $y$ in $G$
is between $(\ell-1)k+1$ and $\ell k$, and since $\vec{H}$ is a weak $kr$-guidance system in $G$, there is a shortest
path $P$ between $x$ and $y$ in $G$ oriented in $\vec{H}$ towards an edge $e=x'y'$ of $P$, where $x,x',y',y$ appear in $P$ in order.  Let $x_0=x$, $x_1$, \ldots, $x_a$
be the maximal sequence of vertices of $P$ such that $x_i$ is at distance $ki$ from $x$ in $G$ and $e$ is contained in the subpath of $P$ between $x_a$ and $y$.
Analogously, let $y_0=y$, $y_1$, \ldots, $y_b$ be the maximal sequence of vertices of $P$ such that $y_i$ is at distance
$ki$ from $y$ in $G$ and $e$ is contained in the subpath of $P$ between $y_b$ and $x$.
Note that the distance between $x_a$ and $y_b$ in $G$ is at most $2k-1$, since each of them is at distance at most $k-1$ from $e$.
If the distance between $x_a$ and $y_b$ in $G$ is greater than $k$, then note that $\ell=a+b+2$ and
$y'\neq y_b$; let $z=y'$ and let $Q$ be the path $x_0\ldots x_azy_b\ldots y_0$.  Otherwise, $\ell=a+b+1$ and we let $z=y_b$ and
$Q=x_0\ldots x_ay_b\ldots y_0$.  This gives a shortest path $Q$ in $G^k$ directed in $\vec{F}$ towards its edge $x_az$.
\end{proof}

Let us remark that weak guidance systems are qualitatively different from guidance systems only in dense graphs,
as in degenerate graphs, a weak guidance system can be completed to a guidance system by directing the rest of the
edges while preserving the bounded maximum outdegree.
\begin{observation}\label{obs-degen}
If $G$ admits a weak $r$-guidance system of maximum
outdegree $c$ and $G$ is $t$-degenerate, then $G$ also admits an $r$-guidance system of maximum outdegree at most $c+t$.
\end{observation}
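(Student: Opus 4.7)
The plan is to obtain the desired $r$-guidance system by completing the weak $r$-guidance system $\vec{H}$: we direct each currently undirected edge of $G$ using a $t$-degeneracy ordering of $V(G)$ so as to keep the additional outdegree per vertex bounded by $t$.

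Concretely, fix an ordering $v_1, \ldots, v_n$ witnessing $t$-degeneracy, so that every $v_i$ has at most $t$ neighbors of larger index. For every edge $v_iv_j \in E(G)$ with $i<j$ that $\vec{H}$ leaves undirected in either direction, add the arc $(v_i, v_j)$; call the resulting orientation $\vec{F}$. By construction $\vec{F}$ is a full orientation of $G$ and extends $\vec{H}$, and each vertex acquires at most $t$ new out-arcs, so $\vec{F}$ has maximum outdegree at most $c+t$.

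It then suffices to argue that $\vec{F}$ satisfies the $r$-guidance system definition. For a pair $u,v$ at distance $\ell \le r$, the cases $\ell=0$ and $\ell=1$ are immediate (the latter because the edge $uv$ is now oriented in at least one direction in $\vec{F}$). For $\ell \ge 2$, the weak guidance property applied to $\vec{H}$ gives integers $a,b \ge 0$ with $a+b=\ell-1$ and a shortest $uv$-path with a single exceptional edge $xy$, where $x \in B_{\vec{H}}(u,a)$ and $y \in B_{\vec{H}}(v,b)$, such that everything else is directed toward $xy$ in $\vec{H}$. In $\vec{F}$ the edge $xy$ now has at least one orientation; whichever way it points, the endpoint it points to serves as the common target, and the budgets shift by one: if $(x,y) \in \vec{F}$ then $y \in B_{\vec{F}}(u,a+1) \cap B_{\vec{F}}(v,b)$, and if $(y,x) \in \vec{F}$ then $x \in B_{\vec{F}}(u,a) \cap B_{\vec{F}}(v,b+1)$. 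In either case the two radii sum to $\ell$, as required.

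There is no real obstacle; the only point meriting attention is this bookkeeping of the sum $a+b$, which advances from $\ell-1$ in the weak definition to $\ell$ in the strong one precisely because the newly oriented exceptional edge contributes one additional step on whichever side of the path it is absorbed into.
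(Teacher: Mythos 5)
Your proof is correct and is exactly the argument the paper intends: the sentence preceding the observation in the paper already sketches the idea ("a weak guidance system can be completed to a guidance system by directing the rest of the edges while preserving the bounded maximum outdegree"), and you carry it out cleanly via a degeneracy ordering, with the right bookkeeping converting $a+b=\ell-1$ into $a'+b'=\ell$ once the exceptional edge is oriented.
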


Finally, we give the following description of weak $r$-guidance systems, which we use often in the rest of the paper.
For vertices $u$ and $v$ of a graph $G$ at distance $\ell$, let $\gate{G}{u}{v}$ be the set of neighbors of $u$
at distance $\ell-1$ from $v$; i.e., $\gate{G}{u}{v}$ consists of all possible second vertices of shortest paths
from $u$ to $v$.
\begin{observation}\label{obs-char}
A partial orientation $\vec{H}$ of a graph $G$ is a weak $r$-guidance system if and only if the
following claim holds for all $u,v\in V(G)$ at distance $\ell$ in $G$, where $2\le \ell\le r$:
\begin{itemize}
\item[($\star$)] Either $u$ has an outneighbor in $\gate{G}{u}{v}$, or $v$ has an outneighbor in $\gate{G}{v}{u}$.
\end{itemize}
\end{observation}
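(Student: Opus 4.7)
The plan is to prove both directions by directly translating between the shortest-path formulation in the definition of a weak $r$-guidance system and the local condition ($\star$), using induction on $\ell$ for the ``if'' direction.

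For the ``only if'' direction, suppose $\vec{H}$ is a weak $r$-guidance system and fix $u,v$ at distance $\ell$ with $2\le \ell\le r$. The definition yields integers $a,b\ge 0$ with $a+b=\ell-1$ and an edge $xy$ between $B_{\vec{H}}(u,a)$ and $B_{\vec{H}}(v,b)$; this unpacks to a shortest path $u=p_0,p_1,\ldots,p_\ell=v$ in $G$ with exceptional edge $e=p_ap_{a+1}$ such that every other edge is directed toward $e$. I would split on the value of $a$. If $a\ge 1$, the first edge is directed $u\to p_1$, and since $p_1$ lies on a shortest $u$--$v$ path its distance to $v$ is $\ell-1$, so $p_1\in\gate{G}{u}{v}$ is an outneighbor of $u$. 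If $a=0$, then $e$ is the first edge of the path, so all of $p_\ell p_{\ell-1}, p_{\ell-1}p_{\ell-2},\ldots$ are directed toward $e$; in particular $v=p_\ell$ has outneighbor $p_{\ell-1}\in\gate{G}{v}{u}$.

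For the ``if'' direction, I would induct on $\ell$, with $\ell=2$ as the base case. When $\ell=2$, ($\star$) produces a common neighbor $w$ of $u$ and $v$ that is an outneighbor of $u$ (say); taking $(a,b)=(1,0)$ gives $w\in B_{\vec{H}}(u,1)$ with edge $wv$ reaching $B_{\vec{H}}(v,0)=\{v\}$. For $\ell\ge 3$, assume without loss of generality that ($\star$) provides an outneighbor $w$ of $u$ lying in $\gate{G}{u}{v}$, so that $w$ is at distance $\ell-1$ from $v$. The inductive hypothesis then supplies a shortest $w$--$v$ path oriented, except for one exceptional edge $e$, toward $e$. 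Since $d(u,v)=\ell=1+d(w,v)$, prepending the directed edge $u\to w$ gives a shortest $u$--$v$ path of length $\ell$ in which every edge other than $e$ is directed toward $e$, which is precisely what is required.

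I do not anticipate any serious obstacle: both directions are an unpacking of the definitions, and the only mild care needed is to track which endpoint supplies the outneighbor based on the position of the exceptional edge, and to verify in the inductive step that the prepended edge $u\to w$ is still oriented toward the exceptional edge (which now lies strictly beyond $w$ on the extended path).
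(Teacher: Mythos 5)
Your argument is correct, and it is the natural unpacking of the definitions; the paper states this as an Observation and supplies no proof, so there is nothing to compare against. Both directions check out: in the ``only if'' direction the split on the position of the exceptional edge correctly identifies which of $u$ or $v$ supplies the outneighbor, and in the ``if'' direction the inductive prepending is valid because the prepended edge $u\to w$ is on the $u$-side of the exceptional edge (which remains where the inductive hypothesis placed it on the $w$--$v$ segment), hence still directed toward it. The only minor omission is the $\ell=1$ case required by the definition of a weak $r$-guidance system, which is vacuous: with $a=b=0$ the sets $B_{\vec{H}}(u,0)=\{u\}$ and $B_{\vec{H}}(v,0)=\{v\}$ are joined by the edge $uv$ itself, so it needs no input from ($\star$).
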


\section{Weak guidance systems in structurally sparse graphs}\label{sec-bexp}

The standard way of generalizing the concepts of bounded expansion and nowhere-density to
dense graphs is through the notion of \emph{first-order transductions}, see e.g.~\cite{gajarsky2020new,gajarsky2020first,dreier2021lacon,nevsetvril2021rankwidth}.
For a positive integer $k$ and a graph $G$, let $kG$ denote the disjoint union of $k$ copies of $G$.
A \emph{transduction} $T$ consists of
\begin{itemize}
\item a positive integer $k$
\item a binary predicate symbol $M$ and unary predicate symbols $U_1$, \ldots, $U_s$, and
\item first-order formulas $\omega(x)$ and $\epsilon(x,y)$ with free variables $x$ (resp. $x$ and $y$)
using these predicate symbols and the binary predicate symbol $E$.
\end{itemize}
For graphs $H$ and $G$, we write $H\in T(G)$ if there exist sets $C_1,\ldots, C_s\subseteq V(kG)$
such that $V(H)$ consists exactly of the vertices $v\in V(kG)$ satisfying
$$kG,U_1\colonequals C_1,\ldots, U_s\colonequals C_s\models \omega(v)$$
and $E(H)$ consists exactly of the pairs $u,v\in V(H)$ such that
$$kG,U_1\colonequals C_1,\ldots, U_s\colonequals C_s\models \epsilon(u,v),$$
where the predicate symbol $E$ is interpreted as adjacency in $kG$ and $M$ is
interpreted as the equivalence between the $k$ copies of each vertex.

That is, a transduction allows us to blow up the graph by replicating each vertex a bounded
number of times, then non-deterministically color some vertices (via the predicates $U_1$, \ldots, $U_s$),
and finally define the vertices and edges of the new graph by a first-order formula.
As an example, if $T$ is the transduction with $k=1$, $s=0$, $\omega(x)=\text{true}$ and
$$\epsilon(x,y)=(x\neq y)\land (\exists z) (z=x\lor E(x,z))\land E(z,y),$$
then $H\in T(G)$ if and only if $H=G^2$.  Hence, the transduction operation generalizes
the graph power operations we considered in Lemma~\ref{lemma-power}.

For a class of graphs $\GG'$ and a transduction $T$, let $T(\GG')$ denote the class of all graphs $G$ such that
$G\in T(G')$ for some $G'\in \GG'$.
We say that a class of graphs $\GG$ has \emph{structurally bounded expansion} (resp., is \emph{structurally nowhere-dense})
if $\GG\subseteq T(\GG')$ for a transduction $T$ and a graph class $\GG'$ of bounded expansion (resp., being nowhere-dense).
The goal of this section is to show that such graph classes admit weak guidance systems with bounded maximum outdegree.

In preparation for that, let us start by considering the graph classes with bounded \emph{shrub-depth}.
The notion of shrub-depth was defined by Ganian et al.~\cite{shrub}
using the concept of \emph{tree models}.  For a positive integer $m$, an \emph{$m$-signature} is a function
$S:\mathbb{Z}^+\to 2^{[m]\times [m]}$ assigning a symmetric relation $S(i)$ to each $i>0$.  For a positive integer $d$,
an \emph{$(m,d)$-tree model} of a graph $G$ is a triple $(T,\varphi,S)$, where
\begin{itemize}
\item $T$ is a rooted tree with leaf set $V(G)$ and such that the length of every root-leaf path is $d$,
\item $\varphi:V(G)\to [m]$ assigns one of $m$ labels to each leaf,
\item $S$ is an $m$-signature, and
\item for every $u,v\in V(G)$, if $2i$ is the distance between $u$ and $v$ in $T$ (i.e., if $i$ is the distance from $u$ and $v$
to their nearest common ancestor in $T$), then $uv\in E(G)$ if and only if $(\varphi(u),\varphi(v))\in S(i)$.
\end{itemize}
A class $\GG$ of graphs has \emph{shrub-depth at most $d$} if for some positive integer $m$, every graph in $\GG$ has an $(m,d)$-model.

\begin{lemma}\label{lemma-orshrub}
For every class $\GG$ of graphs of bounded shrub-depth and every positive integer $r$, there exists a positive integer $c$ such that
every graph from $\GG$ has a weak $r$-guidance system of maximum outdegree at most $c$.
\end{lemma}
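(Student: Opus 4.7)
The plan is to prove Lemma~\ref{lemma-orshrub} by exploiting the strong symmetry encoded in the tree model $(T,\varphi,S)$ of a graph $G \in \GG$ of shrub-depth $d$. For each leaf $u$ of $T$, I will define a notion of \emph{$u$-profile} on the neighbors of $u$ in $G$ so that the number of distinct $u$-profiles is bounded by a function of $m$ and $d$ only; the weak guidance system $\vec{H}$ then selects, for each $u$ and each realized profile, one representative neighbor to orient $u$ towards, immediately giving the required bounded outdegree $c = c(m,d,r)$.

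Concretely, for a neighbor $w$ of $u$, define the $u$-profile of $w$ as the triple $(\varphi(w), i, \tau)$, where $i$ is the height of $\mathrm{lca}(u,w)$ in $T$ and $\tau$ is a suitably reduced isomorphism type of the labeled subtree of $T$ rooted at the child of $\mathrm{lca}(u,w)$ containing $w$. Working modulo the equivalence that identifies rooted labeled trees with the same set of children-types (ignoring their multiplicities), a simple recursive count shows that the number of possible $\tau$'s at each height is bounded, hence so is the total number of $u$-profiles.

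To verify the weak guidance condition from Observation~\ref{obs-char}, fix $u,v$ at distance $\ell \in [2,r]$ in $G$ and pick $w \in \gate{G}{u}{v}$; let $w'$ be the representative selected by $\vec{H}$ for the $u$-profile of $w$. In the ``generic'' case, there is a tree-model automorphism $\sigma$ mapping $w$ to $w'$, fixing $u$, and fixing $v$ as well --- concretely, whenever $v$ lies outside the two subtrees of $T$ that are swapped by $\sigma$ --- and then $d_G(w', v) = d_G(\sigma(w), \sigma(v)) = d_G(w, v) = \ell - 1$, so $w' \in \gate{G}{u}{v}$ and we are done.

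The main obstacle is the remaining configuration, in which $v$ lies inside one of the two swapped subtrees and thus $\sigma(v) \ne v$, so the $u$-side automorphism need not preserve $d_G(w,v)$. In this case I would appeal to the symmetric $v$-side orientation: by comparing the heights of $\mathrm{lca}(u,v)$, $\mathrm{lca}(u,w)$, and $\mathrm{lca}(v,w)$, one argues that whenever the $u$-side argument fails for $w$, there exists $w^{\star} \in \gate{G}{v}{u}$ whose $v$-profile admits a valid representative via the analogous automorphism argument, rescuing the guidance condition from the $v$-side. Making this case analysis rigorous, accounting for all relative tree-positions of $u$, $v$, $w$ and the chosen representatives, is the technical heart of the proof; it is likely cleanest to organize it as an induction on $d$ using a slightly strengthened hypothesis (for instance asking the orientation to be compatible with all ancestors of $u$ and $v$) that captures the symmetry between the two endpoints simultaneously.
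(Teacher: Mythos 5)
Your plan is in the same general spirit as the paper's (use the tree model's rigid structure, define a bounded set of ``positions'' for neighbors of $u$, orient $u$ to representatives of those positions), but as written it has two genuine gaps. First, the automorphism you invoke need not exist: if $\tau$ is the \emph{reduced} isomorphism type that forgets multiplicities of children-types, two subtrees with the same $\tau$ are typically not isomorphic as rooted labeled trees, so there is no automorphism $\sigma$ of $T$ swapping the subtree of $w$ with that of the representative $w'$. (Using the unreduced isomorphism type would give you the automorphism, but then the number of profiles is no longer bounded by a function of $m$ and $d$, so the outdegree bound fails.) Second, and more fundamentally, choosing a single representative $w'$ per profile cannot work: $w'$ is chosen once for all queries, but for some query vertex $v$ the chosen $w'$ will sit in, or share a subtree with, $v$ or part of the shortest path, and then no symmetry argument will put $w'$ into $\gate{G}{u}{v}$. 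Your ``main obstacle'' paragraph is exactly this case, and the proposed escape via the $v$-side is a hope, not an argument --- you acknowledge it is the technical heart and leave it open.

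The paper's proof resolves both issues at once by shifting the unit of replacement from a single neighbor $w$ to the whole tuple $Q$ of shortest-path vertices lying in the subtree below $x_1 = $ the child of $y := \mathrm{lca}(u,u_1)$ containing $u_1$, and by recording only what the tree model actually uses to decide adjacency: the labels of the tuple and the pairwise half-distances in $T$ (the ``$k$-type''). Replacing $Q$ by a tuple of the same $k$-type located under a different child of $y$ preserves all adjacencies along the path because the type encodes the internal structure and the common ancestor with the rest of $P$ is unchanged. Crucially, for each ancestor $y$ and each type $t$ the construction stores up to $r+1$ candidate subtrees $R(y,t)$, not one; since the shortest path has at most $r+1$ vertices, at most $r$ siblings of $x_1$ are ``occupied'' by $P$, so one candidate is always available and avoids $v$ and the rest of the path entirely. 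That pigeonhole step is precisely the missing ingredient in your single-representative scheme. If you want to salvage your approach, you should (a) replace the reduced subtree type by the paper's tuple type, which carries exactly the information $G$ depends on and makes the ``same type $\Rightarrow$ same path length'' step a direct verification rather than an automorphism argument, and (b) keep $r+1$ representatives per type so that one always dodges the subtrees used by the current shortest path.
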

\begin{proof}
Let $m$ and $d$ be positive integers such that every graph $G\in \GG$ has an $(m,d)$-tree model $(T,\varphi,S)$.
Let $c=r^3m^r(d+1)^{r^2}d$.

For a positive integer $k$, a \emph{$k$-type} is a pair $(f,g)$ of functions $f:[k]\to [m]$ and $g:[k]^2\to \{0\}\cup [d]$.
The \emph{type} of a $k$-tuple $(v_1,\ldots, v_k)$ of vertices of $G$ is the $k$-type $(f,g)$ such that
$f(i)=\varphi(v_i)$ for $i\in [k]$ and $g(i,j)$ is half of the distance between $v_i$ and $v_j$ in $T$.
For each vertex $x\in V(T)$, each positive integer $k\le r$, and each $k$-type $t$,
if there exist a $k$-tuple $(v_1,\ldots,v_k)$ of leaves of $T$ with ancestor $x$ and of type $t$,
fix such a $k$-tuple $Q(x,t)=(v_1,\ldots, v_k)$ arbitrarily and let $A(x,t)=\{v_1,\ldots, v_k\}$; otherwise, let $A(x,t)=\emptyset$.
For each non-leaf vertex $y\in V(T)$, if $y$ has more than $r$ children $x$ such that $A(x,t)\neq\emptyset$,
then let $R(y,t)$ be a set of $r+1$ of them chosen arbitrarily; otherwise let $R(y,t)$ be the set of all
children $x$ of $y$ such that $A(x,t)\neq\emptyset$.  Let $B(y,t)=\bigcup_{x\in R(y,t)} A(x,t)$,
and let $B(y)$ be the union of $B(y,t)$ over all $k$-types $t$ with $k\le r$.

Let $\vec{H}$ be the partial orientation of $G$ containing exactly the edges $(u,v)$ such that $uv\in E(G)$
and $v\in B(y)$ for some ancestor $y$ of $u$ in $T$.  Clearly, $\vec{H}$ has maximum outdegree at most $c$.
Let us now argue that $\vec{H}$ is a weak $r$-guidance system.

Consider any vertices $u,v\in V(G)$ at distance $\ell$ in $G$, where $2\le \ell\le r$, and let $P=u_0u_1\ldots u_\ell$,
where $u_0=u$ and $u_\ell=v$, be a shortest path from $u$ to $v$ in $G$.  We will show that the condition ($\star$) from Observation~\ref{obs-char}
is satisfied for $u$ and $v$.  Let $y$ be the nearest common ancestor of $u$ and $u_1$ in $T$, let $X$ be the set of children of $y$
that have a descendant belonging to $V(P)$, and let $x_1$ be the child of $y$ whose descendant is $u_1$.
Suppose first that $v$ is not a descendant of $x_1$.  Let $Q$ be the tuple of vertices of $P$ that are descendants
of $x_1$ (in any order) and let $t$ be its type.  Since $|X|\le r+1$ and $A(x_1,t)\neq\emptyset$, there
exists $x'_1\in R(y,t)\setminus (X\setminus \{x_1\})$.  Let $Q'=Q(x'_1,t)$ and let $P'$ be obtained from $P$ by replacing
the vertices of $Q$ by the vertices of $Q'$.  Observe that since $Q$ and $Q'$ have the same type and the same common ancestors
with the other vertices of $P$, $P'$ is also a shortest path from $u$ to $v$ in $G$.  Moreover, the construction of $\vec{H}$
implies that the first edge of $P'$ is directed away from $u$, establishing the validity of the condition ($\star$) from Observation~\ref{obs-char}.

Hence, suppose that $v$ is a descendant of $x_1$.  In particular, this implies that $y$ is also the nearest common ancestor of $u$ and $v$.
Let $x_2$ be the child of $y$ whose descendant is $u$.  By symmetry, we can assume that $u_{\ell-1}$ is a descendant of $x_2$ as well.
Let $Q_1=(u_1,u_2,\ldots,u_k)$ be the maximal initial segment of $P-u$ consisting of descendants of $x_1$; we have $k<\ell-1$.
Let $t_1$ be the type of $Q_1$.  Since $|X|\le r+1$ and $A(x_1,t_1)\neq\emptyset$, there
exists $x''_1\in R(y,t_1)\setminus (X\setminus \{x_1\})$.  Let $Q'_1=Q(x'_1,t_1)$ and let $P'_1$ be obtained from $P$ by replacing
the vertices of $Q_1$ by the vertices of $Q'_1$.  Observe that since $Q$ and $Q'$ have the same type and the same common ancestors
with $u$ and $u_{k+1}$, $P'_1$ is also a shortest path from $u$ to $v$ in $G$.  Moreover, the construction of $\vec{H}$
implies that the first edge of $P'_1$ is directed away from $u$, establishing the validity of the condition ($\star$) from Observation~\ref{obs-char}.

We conclude that $\vec{H}$ is a weak $r$-guidance system.
\end{proof}

Crucially, the notions of structurally bounded expansion and structural nowhere-density can be characterized in terms
of \emph{bounded shrub-depth covers}.  A \emph{cover} of a graph $G$ is a system of subsets of $V(G)$.  Let $a$ be a positive integer.  A cover $\CC$ of $G$
is \emph{$a$-generic} if for every subset $A\subseteq V(G)$ of size at most $a$, there exists $C\in\CC$ such that $A\subseteq C$.
An \emph{$a$-generic bounded shrub-depth cover assignment} for a graph class $\GG$ is a function $\CC$ that to each graph $G\in \GG$
assigns an $a$-generic cover $\CC(G)$ such that the class $$\CC(\GG)=\{G[C]:G\in\GG,C\in\CC(G)\}$$ has bounded shrub-depth.

\begin{theorem}[Gajarsk{\'y} et al.~\cite{gajarsky2020first} and Dreier et al.~\cite{dreier2022treelike}]\label{thm-charcov}
Let $\GG$ be a class of graphs and let $a$ be a positive integer.
\begin{itemize}
\item If $\GG$ has structurally bounded expansion, then for some positive integer $k$, $\GG$ has
an $a$-generic bounded shrub-depth cover assignment $\CC$ such that $|\CC(G)|\le k$ for every $G\in \GG$.
\item If $\GG$ is structurally nowhere-dense and $\varepsilon>0$, then for some positive integer $k$, $\GG$ has
an $a$-generic bounded shrub-depth cover assignment $\CC$ such that $|\CC(G)|\le k|V(G)|^\varepsilon$ for every $G\in \GG$.
\end{itemize}
\end{theorem}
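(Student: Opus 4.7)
The plan is to derive Theorem~\ref{thm-charcov} by lifting the classical low-tree-depth covers of graphs in $\GG'$ through the first-order transduction $T$ witnessing $\GG\subseteq T(\GG')$. For a class $\GG'$ of bounded expansion, the Ne\v{s}et\v{r}il--Ossona de Mendez theory of $p$-centered colorings provides, for each $p$, a coloring of every $G'\in\GG'$ with a bounded number $N=N(\GG',p)$ of colors such that any $p$ color classes induce a subgraph of tree-depth at most $p$. Taking all unions of $a$ color classes yields an $a$-generic cover of $G'$ of constant size whose members induce graphs of bounded tree-depth, and in particular of bounded shrub-depth. The nowhere-dense case is handled analogously, with $N=O(|V(G')|^\varepsilon)$ colors in place of a constant.

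To lift this to $G\in T(G')$, I would first construct a cover of the blown-up graph $kG'$ with generic parameter $a'=ak$, so that the pullback of any $a$-subset of $V(G)\subseteq V(kG')$ lies inside some cover element $C$. The candidate cover of $G$ is then the family of sets $V(G)\cap C$. Each induced subgraph $G[V(G)\cap C]$ is produced by the transduction $T$ (with the unary predicates $U_1,\dots,U_s$ restricted to $C$) applied to $kG'[C]$, and so belongs to $T'(\GG'')$ for a transduction $T'$ essentially equal to $T$ and a class $\GG''$ of bounded tree-depth. By the characterization of shrub-depth as exactly the classes obtainable as first-order transductions of bounded-tree-depth classes (Ganian et al.~\cite{shrub}), these cover members lie in a fixed class of bounded shrub-depth. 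The size bound is inherited from the cover of $kG'$.

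The main obstacle, and the reason Theorem~\ref{thm-charcov} combines results of two papers, is that adjacency in $G$ is decided by the formula $\epsilon(x,y)$ over \emph{all} of $kG'$, so edges of $G[V(G)\cap C]$ may depend on vertices outside $C$; the naive construction therefore guarantees only that each piece is a transduction of a bounded-tree-depth graph, rather than intrinsically of bounded shrub-depth as a subgraph of $G$. Overcoming this, as in Dreier et al.~\cite{dreier2022treelike}, requires a locality argument: reduce $\epsilon$ to Gaifman normal form so that adjacency depends only on the $r$-neighborhood for some constant $r$, and enlarge each cover element to include its $r$-neighborhood in an appropriate sparse reduct of $G'$. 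In classes of bounded expansion such $r$-neighborhoods have bounded size, keeping the cover constant-sized; in nowhere-dense classes they have size $|V(G')|^{o(1)}$, which can be absorbed into $\varepsilon$. Finally, the construction must be replayed over every possible restriction of the unary predicates $U_1,\dots,U_s$ to each enlarged cover element, but once tree-depth is bounded only a bounded number of such restrictions are non-isomorphic, so this produces only a bounded multiplicative overhead and the desired bounds $|\CC(G)|\le k$ and $|\CC(G)|\le k|V(G)|^\varepsilon$ are preserved.
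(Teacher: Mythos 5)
The paper does not prove Theorem~\ref{thm-charcov}; it is imported as a black-box result from Gajarsk\'y et al.\ and Dreier et al., so there is no in-paper argument to compare against. Judging your sketch on its own terms, the overall scaffolding is right: start from $p$-centered colorings / low tree-depth covers of the sparse base class $\GG'$, pull back through the transduction to get a cover of $G$, and invoke the Ganian et al.\ characterization of bounded shrub-depth as first-order transductions of bounded tree-depth classes. You also correctly flag the central obstruction, namely that $\epsilon(x,y)$ is evaluated over all of $kG'$, so intersecting a cover element with $V(G)$ does not by itself make the induced edge relation definable from the piece alone.

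The fix you propose, however, does not work. You claim that after passing to Gaifman normal form one can enlarge each cover element by its $r$-neighborhood, and that ``in classes of bounded expansion such $r$-neighborhoods have bounded size.'' This is false: stars $K_{1,n}$ are planar, hence from a bounded expansion class, yet the $1$-ball around the centre is the whole graph. Nowhere-density similarly bounds densities, not ball sizes, so the parallel claim in the nowhere-dense case fails as well. Enlarging cover elements to $r$-balls would thus destroy both the size bound on the cover pieces and their bounded tree-depth, and the rest of the argument collapses. What the cited works actually use instead is a quantifier-elimination / local normal form theorem tailored to bounded expansion (resp.\ nowhere-dense) structures: every first-order formula becomes quantifier-free over an expansion of $G'$ by new unary predicates and finitely many unary ``guidance'' functions of bounded reach, and those functions can be folded into the $p$-centered coloring so that all information needed to evaluate $\epsilon$ on a cover piece lives inside the piece, without growing it. Without that machinery the lift from a cover of $kG'$ to a cover of $G$ is unjustified. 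A smaller point: the final step about replaying the construction over all restrictions of $U_1,\dots,U_s$ is unnecessary --- the sets $C_1,\dots,C_s$ are fixed once $G$ is realized as $T(G')$, and one simply restricts them to the cover elements.
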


Together with Lemma~\ref{lemma-orshrub}, this gives the main result of this section.
\begin{corollary}\label{cor-main}
Let $\GG$ be a class of graphs and let $r$ be a positive integer.
\begin{itemize}
\item If $\GG$ has structurally bounded expansion, then for some positive integer $c$,
every graph in $\GG$ has a weak $r$-guidance system of maximum outdegree at most $c$.
\item If $\GG$ is structurally nowhere-dense and $\varepsilon>0$, then for some positive integer $c$,
every graph in $\GG$ has a weak $r$-guidance system of maximum outdegree at most $c|V(G)|^\varepsilon$.
\end{itemize}
\end{corollary}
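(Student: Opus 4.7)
The plan is to combine Lemma~\ref{lemma-orshrub} with Theorem~\ref{thm-charcov}: I would apply the cover theorem with parameter $a = r+1$, invoke the shrub-depth lemma on each induced subgraph $G[C]$ indexed by a cover element, and take $\vec{H}$ to be the union of the resulting partial orientations.

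More precisely, given $\GG$ and $r$, I invoke Theorem~\ref{thm-charcov} with $a = r+1$ to obtain an $a$-generic bounded shrub-depth cover assignment $\CC$ for $\GG$. By Lemma~\ref{lemma-orshrub} applied to the class $\CC(\GG)$ and the integer $r$, there is a constant $c_0$ depending only on $\GG$ and $r$ such that for every $G \in \GG$ and every $C \in \CC(G)$, the induced subgraph $G[C]$ admits a weak $r$-guidance system $\vec{H}_C$ of maximum outdegree at most $c_0$. I define $\vec{H}$ to be the partial orientation of $G$ whose directed edges are the union over $C \in \CC(G)$ of the directed edges of $\vec{H}_C$.

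To verify that $\vec{H}$ is a weak $r$-guidance system for $G$, I would use Observation~\ref{obs-char}. Fix $u, v \in V(G)$ at distance $\ell$ in $G$ with $2 \le \ell \le r$, and fix a shortest $uv$-path $P$ in $G$; since $|V(P)| = \ell+1 \le a$, the $a$-genericity of $\CC(G)$ yields some $C \in \CC(G)$ with $V(P) \subseteq C$. The key point is a distance-compatibility observation: because $G[C]$ is an induced subgraph, $d_G(x,y) \le d_{G[C]}(x,y)$ for all $x,y \in C$, and because $P$ already lies in $G[C]$, we in fact have $d_{G[C]}(u,v) = \ell$. Consequently, any neighbor $w$ of $u$ at distance $\ell-1$ from $v$ in $G[C]$ satisfies both $d_G(w,v) \le \ell-1$ and $d_G(w,v) \ge \ell-1$, so $\gate{G[C]}{u}{v} \subseteq \gate{G}{u}{v}$, and symmetrically with $u$ and $v$ swapped. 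Applying Observation~\ref{obs-char} to $\vec{H}_C$ inside $G[C]$ thus produces the outneighbor required by condition ($\star$) in $G$.

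Bounding the outdegree is then immediate: each vertex $u$ lies in at most $|\CC(G)|$ cover sets and contributes at most $c_0$ outgoing edges within each corresponding $\vec{H}_C$, so $\vec{H}$ has maximum outdegree at most $c_0 \cdot |\CC(G)|$. Theorem~\ref{thm-charcov} yields $|\CC(G)| \le k$ in the structurally bounded-expansion case and $|\CC(G)| \le k|V(G)|^\varepsilon$ in the structurally nowhere-dense case, establishing the corollary with $c \colonequals c_0 k$. The only genuinely non-routine ingredient is the distance-compatibility observation above; everything else is a mechanical combination of the two cited results, and I do not anticipate any real obstacle beyond being deliberate about why gate sets computed inside $G[C]$ embed into gate sets of $G$ once a full shortest $uv$-path is available inside the cover element.
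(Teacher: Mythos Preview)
Your proposal is correct and follows essentially the same approach as the paper: both apply Theorem~\ref{thm-charcov} with $a=r+1$, invoke Lemma~\ref{lemma-orshrub} on the class $\CC(\GG)$, take $\vec{H}$ to be the union of the resulting partial orientations, and use the fact that a shortest $uv$-path fits inside some cover element to transfer correctness from $G[C]$ to $G$. Your proof is slightly more explicit than the paper's in spelling out the distance-compatibility argument and the gate-set inclusion $\gate{G[C]}{u}{v}\subseteq\gate{G}{u}{v}$, whereas the paper simply notes that a shortest path in $G[C]$ is also a shortest path in $G$; but this is a presentational difference, not a substantive one.
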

\begin{proof}
Let $\CC$ be an $(r+1)$-generic bounded shrub-depth cover assignment and $k$ the corresponding constant from Theorem~\ref{thm-charcov}.
Let $c_0$ be the constant from Lemma~\ref{lemma-orshrub} for the class $\CC(\GG)$.  Let $c=kc_0$.

For any graph $G\in \GG$, let $\vec{H}$ be the union of the weak $r$-guidance systems of
the subgraphs $G[C]$ for $C\in\CC(G)$ obtained using Lemma~\ref{lemma-orshrub}.
Clearly, the maximum outdegree of $\vec{H}$ is at most $c$ if $\GG$ has structurally bounded expansion
and at most $c|V(G)|^\varepsilon$ if $\GG$ is structurally nowhere-dense.
Moreover, consider any vertices $u$ and $v$ at distance at most $r$ in $G$,
and let $P$ be a shortest path between them.
Since the cover $\CC(G)$ is $(r+1)$-generic, there exists $C\in \CC(G)$ such that $G[C]$ contains $P$.
Since $\vec{H}$ restricted to $C$ is a weak $r$-guidance system in $G[C]$,
there exists a shortest path between $u$ and $v$ in $G[C]$ (and thus also in $G$)
directed by $\vec{H}$ towards one of its edges.  We conclude that $\vec{H}$ is a weak $r$-guidance system in $G$.
\end{proof}

Let us remark that $r$-guidance systems can be used to characterize bounded expansion and nowhere-density.
\begin{lemma}
Let $\GG$ be a class of graphs closed under induced subgraphs.
\begin{itemize}
\item If there exists $c:\mathbb{Z}^+\to\mathbb{Z}^+$ such that for every positive integer $r$, every $G\in\GG$
has an $r$-guidance system of maximum outdegree at most $c(r)$, then $\GG$ has bounded
expansion.
\item If there exists $c:\mathbb{Z}^+\times \mathbb{R}^+\to\mathbb{Z}^+$ such that for every positive integer $r$ and for every $\varepsilon>0$, every $G\in\GG$
has an $r$-guidance system of maximum outdegree at most $c(r,\varepsilon)|V(G)|^\varepsilon$, then $\GG$ is nowhere-dense.
\end{itemize}
\end{lemma}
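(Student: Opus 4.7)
The plan is to prove each item via the contrapositive using the standard characterization of bounded expansion and nowhere-density by the density of shallow topological minors: $\GG$ has bounded expansion if and only if for every $s$ the edge density $|E(H)|/|V(H)|$ is bounded over all $s$-shallow topological minors $H$ of graphs in $\GG$, and $\GG$ is nowhere-dense if and only if this density grows only sub-polynomially in the host size. So I would fix $s$ and consider an $s$-shallow topological minor $H$ of some $G\in\GG$, witnessed by a branching injection $f\colon V(H)\to V(G)$ and internally-vertex-disjoint subdivision paths $\{P_{uv}\}_{uv\in E(H)}$ of length at most $2s+1$.

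The first step is to use the closure of $\GG$ under induced subgraphs to replace $G$ by its induced subgraph on $f(V(H))\cup\bigcup V(P_{uv})$; the resulting $G$ still lies in $\GG$ and has at most $|V(H)|+2s|E(H)|$ vertices. Applying the hypothesis with $r=2s+1$ yields an $r$-guidance system $\vec{H}$ of maximum outdegree at most $c:=c(2s+1)$ in the bounded-expansion case (respectively at most $c\cdot|V(G)|^{\varepsilon}$ in the nowhere-dense case). Because $\vec{H}$ orients every edge of $G$ in at least one direction, summing outdegrees gives $|E(G)|\leq c|V(G)|\leq c(|V(H)|+2s|E(H)|)$; on the other hand, the internal disjointness of the subdivision paths yields $|E(G)|\geq\sum_{uv\in E(H)}\ell_{uv}\geq|E(H)|$, and already this inequality alone shows $|E(H)|\leq c|V(H)|/(1-2sc)$ whenever $c<1/(2s)$, settling the small-$c$ regime.

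For the general case the argument must exploit the full guidance-system structure beyond mere degeneracy. For each edge $uv\in E(H)$ the guidance system yields a midpoint $w_{uv}\in B_{\vec{H}}(f(u),a_{uv})\cap B_{\vec{H}}(f(v),b_{uv})$ with $a_{uv}+b_{uv}=d_G(f(u),f(v))\leq 2s+1$; I would orient $uv$ in $H$ so that $a_{uv}\leq s$, placing $w_{uv}$ in the bounded-size out-ball $B_{\vec{H}}(f(u),s)$ of size at most $c^{s+1}$. The core task is then to bound the outdegree of each $u\in V(H)$ in this oriented $H$ by a constant $D=D(s,c)$, which would immediately give $|E(H)|\leq D\cdot|V(H)|$. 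This reduces to controlling, for fixed $u$ and fixed $w\in B_{\vec{H}}(f(u),s)$, how many $v$'s satisfy $w_{uv}=w$. The main obstacle is precisely this count: since the in-balls of $\vec{H}$ are in general unbounded, one cannot bound it from the outdegree condition alone, and the argument must combine (i) the bounded out-ball sizes, (ii) the internal disjointness of the subdivision paths $P_{uv}$ (which forces the $\vec{H}$-backward branches landing at $w$ to split through largely distinct portions of $G$), and (iii) the degeneracy bound $|E(G)|\leq c(|V(H)|+2s|E(H)|)$ derived above. Carrying out this double counting yields $|E(H)|\leq D(s,c)\cdot|V(H)|$, giving bounded expansion. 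For the nowhere-dense part, the same strategy applies with outdegree $\leq c\cdot|V(G)|^\varepsilon$; the density bound on shallow topological minors becomes sub-polynomial in $|V(G)|$, matching the definition of nowhere-density.
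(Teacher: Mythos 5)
Your proposal has a genuine gap, and the route you take is fundamentally different from the one that actually works.

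The paper proves both items by contradiction, using a result of Dvo\v{r}\'ak \cite[Theorems~5 and~6]{dvovrak2018induced}: if a hereditary class $\GG$ is not of bounded expansion (resp.\ not nowhere-dense), then there exist $r\ge 2$, $G\in\GG$, and a graph $H$ of large average degree such that the \emph{exact $(r-1)$-subdivision} of $H$ appears in $G$ as an \emph{induced subgraph}. Replacing $G$ by that induced subdivision, every pair of branch vertices $u,v$ with $uv\in E(H)$ is at distance exactly $r$ and has a \emph{unique} shortest path $P_{uv}$, whose internal vertices are degree-two and private to $uv$. Then for an $r$-guidance system, each $P_{uv}$ has an edge directed away from $u$ or from $v$ onto a subdivision vertex that belongs to no other $P_{u'v'}$, so these contribute pairwise-distinct outgoing edges to branch vertices and the outdegree bound is contradicted directly, with no double counting at all.

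Your plan replaces this with the shallow topological minor characterization, but this loses exactly the two properties the paper's argument relies on. First, the induced subgraph of $G$ on the branch vertices plus subdivision paths may have chords, so $P_{uv}$ need not be a shortest path; the guidance system may then witness $(f(u),f(v))$ via an entirely different shortest path, and the ``midpoint'' $w_{uv}$ may lie nowhere near $P_{uv}$. Second, and more importantly, you identify the real obstruction yourself (``the in-balls of $\vec{H}$ are in general unbounded, one cannot bound it from the outdegree condition alone''), and then assert that combining out-ball sizes, path-disjointness, and a degeneracy bound ``yields'' the count --- but that is precisely the step left unproved, and it is not clear it can be pushed through: in-balls of a bounded-outdegree orientation can have size $\Theta(n)$, so many distinct $v$'s can legitimately share the same $w_{uv}$. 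The paper's choice of obstruction (induced exact subdivision, not merely a shallow topological minor) is what makes the counting trivial, and it comes from a nontrivial theorem you would still need to invoke; without it your argument does not close.
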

\begin{proof}
Suppose for a contradiction that $\GG$ is not nowhere-dense.  By assumptions, for every $\varepsilon>0$, every graph $G\in \GG$ has an orientation
with maximum outdegree at most $c(1,\varepsilon)|V(G)|^\varepsilon$, and thus the maximum average degree of subgraphs of $G$ is
at most $2c(1,\varepsilon)|V(G)|^\varepsilon$.  By~\cite[Theorem 6]{dvovrak2018induced}, there exists $r\ge 2$, a graph $G\in\GG$,
and a graph $H$ of average degree $d>2c(r,\varepsilon)|V(G)|^\varepsilon$ such that $G$ contains the graph $H'$
obtained from $H$ by subdividing each edge exactly $r-1$ times as an induced subgraph.  Since $\GG$ is closed under induced
subgraphs, we can assume $G=H'$.  Suppose $\vec{H}$ is an $r$-guidance system in $G$.
Then for every $uv\in E(H)$, the corresponding path $P_{uv}$ of length $r$ in $G$ contains an edge directed away from
$u$ or from $v$, and thus the average outdegree of the vertices of $H$ in $G$ is at least $|E(H)|/|V(H)|=d/2>c(r,\varepsilon)|V(G)|^\varepsilon$.
This contradicts the assumptions.

The argument for the bounded expansion case is analogous, using~\cite[Theorem 5]{dvovrak2018induced} instead of \cite[Theorem 6]{dvovrak2018induced}.
\end{proof}
Note that the assumption of being closed under induced subgraphs is needed, as seen by Example~\ref{ex-univ}: This example
together with Observation~\ref{obs-degen} shows that the class of graphs formed from cliques by subdividing each edge once
and adding a universal vertex afterwards admits an $r$-guidance system of maximum outdegree at most $4$
for every $r\ge 1$; but this class is not nowhere-dense.

It is tempting to ask whether weak $r$-guidance systems do not similarly characterize structurally bounded expansion or structural nowhere-density.  However, this is not the case.
We define a \emph{weak $\infty$-guidance system} to be a partial orientation that is a weak $r$-guidance system for every positive integer $r$.
\emph{Interval graphs} are the intersection graphs of sets of open intervals in the real line.
\begin{example}
Consider any interval graph $G$.  Let $\vec{H}$ be the partial orientation of $G$ obtained as follows.
For each $u\in V(G)$, let $v_1$ and $v_2$ be the neighbors of $u$ such that the right endpoint of the interval of $v_1$ is maximum
among all neighbors of $u$, and the left endpoint of the interval of $v_2$ is minimum among them.  Include in $\vec{H}$ the edges $(v,v_1)$ and $(v,v_2)$.
Then $\vec{H}$ is a weak $\infty$-guidance system in $G$ of maximum outdegree at most two.
\end{example}
The class of interval graphs is closed under induced subgraphs, but it is well-known not to be structurally nowhere-dense.

\section{Algorithmic aspects}\label{sec-algo}

Note that Theorem~\ref{thm-charcov} only gives a polynomial-time algorithm to obtain the covers if
we are given a graph $G'$ such that $G\in T(G')$, where $G'$ belongs to a bounded expansion/nowhere dense graph class.
If only $G$ is provided, it is currently not known how to obtain the covers efficiently.
Consequently, Corollary~\ref{cor-main} does not give an efficient algorithm to obtain weak guidance systems.
In this section, we address this issue, giving a polynomial-time algorithm that given an $n$-vertex
graph returns a weak guidance system whose maximum outdegree is worse than optimal only by an $O(\log n)$ factor,
and an improved approximation algorithm in case certain relevant set systems have bounded VC-dimension.

First, let us introduce one more relaxation of the guidance system notion.
A \emph{fractional orientation} of a graph $G$ is a function $p$ that assigns a non-negative
real number $p(u,v)$ to each pair $(u,v)$ of adjacent vertices of $G$.  The \emph{outdegree} $d^+_p(u)$
of a vertex $u$ in the fractional orientation $p$ is $\sum_{v:uv\in E(G)} p(u,v)$.
We say that $p$ is a \emph{fractional $r$-guidance system} if
for every $u,v\in V(G)$ at distance $\ell$, where $2\le \ell\le r$, we have
\begin{equation}\label{eq-proba}
\sum_{y\in \gate{G}{u}{v}} p(u,y)+\sum_{y\in \gate{G}{v}{u}} p(v,y)\ge 1.
\end{equation}

By Observation~\ref{obs-char}, weak guidance systems can naturally be interpreted as fractional guidance systems.
\begin{observation}\label{obs-tofra}
Suppose $\vec{H}$ is a weak $r$-guidance system in a graph $G$, of maximum outdegree $c$.
Let us define $p(u,v)=1$ for every $(u,v)\in E(\vec{H})$ and $p(u,v)=0$ for every $uv\in E(G)$
such that $(u,v)\not\in E(\vec{H})$.  Then $p$ is a fractional $r$-guidance system of maximum outdegree $c$.
\end{observation}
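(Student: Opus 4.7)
The plan is to verify the two defining properties of a fractional $r$-guidance system directly from the definition of $p$, using Observation~\ref{obs-char} to handle the inequality~\eqref{eq-proba}.

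First I would check the outdegree bound. Since $p(u,v)\in\{0,1\}$ and $p(u,v)=1$ exactly when $(u,v)\in E(\vec{H})$, the sum $d^+_p(u)=\sum_{v\colon uv\in E(G)} p(u,v)$ simply counts the outneighbors of $u$ in $\vec{H}$. This count is at most $c$ by the assumption on the maximum outdegree of $\vec{H}$, so $d^+_p(u)\le c$ for every $u\in V(G)$, matching the claimed maximum outdegree.

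Next I would verify~\eqref{eq-proba} for every pair $u,v$ at distance $\ell$ with $2\le\ell\le r$. Applying Observation~\ref{obs-char} to $\vec{H}$, the condition ($\star$) holds: either $u$ has an outneighbor $y\in\gate{G}{u}{v}$ or $v$ has an outneighbor $y\in\gate{G}{v}{u}$. In the first case $(u,y)\in E(\vec{H})$ so $p(u,y)=1$, whence the first sum in~\eqref{eq-proba} is at least $1$; symmetrically in the second case the second sum is at least $1$. Since all terms $p(\cdot,\cdot)$ are non-negative, the two sums add to at least $1$, as required.

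There is no real obstacle here: the proof is essentially a translation between the combinatorial condition ($\star$) of Observation~\ref{obs-char} and the arithmetic condition~\eqref{eq-proba}, exploiting the fact that our fractional orientation is the $\{0,1\}$-indicator of $E(\vec{H})$. The only thing to be a bit careful about is to invoke Observation~\ref{obs-char} rather than the original definition of a weak $r$-guidance system, since the latter is phrased in terms of reachability via $B_{\vec{H}}(u,a)$ and $B_{\vec{H}}(v,b)$ and would require an extra step to extract an outneighbor in $\gate{G}{u}{v}$ or $\gate{G}{v}{u}$.
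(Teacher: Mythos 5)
Your proof is correct and matches the route the paper has in mind: the paper states this as an observation without a written proof, prefacing it with ``By Observation~\ref{obs-char}, weak guidance systems can naturally be interpreted as fractional guidance systems,'' and your write-up simply spells out the two checks (outdegree count, and translating condition ($\star$) into the inequality~\eqref{eq-proba} for the $\{0,1\}$-indicator $p$).
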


Moreover, an optimal fractional guidance system can be constructed through linear programming.

\begin{lemma}\label{lem-sollp}
If a graph $G$ has a weak $r$-guidance system of maximum outdegree $c_0$,
we can find a fractional $r$-guidance system of maximum outdegree at most $c_0$ in $G$ in polynomial time.
\end{lemma}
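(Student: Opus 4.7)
The plan is to formulate the problem as a linear program with polynomially many variables and constraints, and then invoke polynomial-time LP solvability.

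First I would set up the variables: introduce one real variable $p(u,v)$ for every ordered pair $(u,v)$ with $uv\in E(G)$, together with one extra scalar variable $c$ representing the common upper bound on the outdegrees. The constraints come in three families: nonnegativity $p(u,v)\ge 0$ for all directed edges; outdegree bounds $\sum_{v:uv\in E(G)} p(u,v)\le c$ for every $u\in V(G)$; and, for every unordered pair $\{u,v\}$ of vertices at distance $\ell$ in $G$ with $2\le \ell\le r$, the inequality~\eqref{eq-proba}, that is,
\[
\sum_{y\in \gate{G}{u}{v}} p(u,y)+\sum_{y\in \gate{G}{v}{u}} p(v,y)\ge 1.
\]
The objective is to minimize $c$.

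Next I would argue the input to the LP can be computed in polynomial time. A single breadth-first search from each vertex yields the distance function on $V(G)^2$, from which the sets $\gate{G}{u}{v}$ and $\gate{G}{v}{u}$ (and the set of pairs at distance $\ell\in\{2,\ldots,r\}$) are read off directly as lists of neighbors. The resulting LP has $O(|E(G)|)$ variables and $O(|V(G)|^2)$ constraints, each of bit-size polynomial in the encoding of $G$, so it can be solved in polynomial time by any standard LP algorithm (e.g.\ the ellipsoid method or interior-point methods).

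For correctness, I would use Observation~\ref{obs-char} and Observation~\ref{obs-tofra}: given a weak $r$-guidance system $\vec H$ of maximum outdegree $c_0$, setting $p(u,v)=1$ when $(u,v)\in E(\vec H)$ and $p(u,v)=0$ otherwise yields a feasible LP solution with $c=c_0$. Therefore the LP optimum $c^\ast$ satisfies $c^\ast\le c_0$, and any optimal solution is a fractional $r$-guidance system of maximum outdegree at most $c_0$. Since the constraints of the LP are exactly the conditions defining a fractional $r$-guidance system plus an upper bound on the outdegree, the LP solution is the required object.

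I do not see a real obstacle here; the only subtlety worth mentioning explicitly is that we do not need to know $\vec H$ to write down the LP (we only need $G$ and the integer $r$), so the algorithm is genuinely a polynomial-time procedure taking $G$ as input and using the hypothesis merely to bound the optimum. The hardest conceptual point is checking that the LP is well-defined and of polynomial size, which follows from the BFS-based construction above.
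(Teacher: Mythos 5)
Your proposal is correct and is essentially the same as the paper's proof: formulate the same linear program (minimize $c$ subject to nonnegativity, outdegree bounds, and the constraints~\eqref{eq-proba} for pairs at distance between $2$ and $r$), observe that the sets $\gate{G}{u}{v}$ are computable from all-pairs distances, invoke polynomial-time LP solvability, and use Observation~\ref{obs-tofra} to bound the optimum by $c_0$. Your closing remark that the algorithm does not need $\vec{H}$ as input is a nice clarification but does not change the substance.
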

\begin{proof}
Let $p$ be an optimal solution to the following linear program.
\begin{align*}
p(u,v)&\ge 0&\text{ for every $(u,v)$ s.t. $uv\in E(G)$}\\
\sum_{v:uv\in E(G)} p(u,v)&\le c&\text{ for every $u\in V(G)$}\\
\sum_{y\in \gate{G}{u}{v}} p(u,y)+\sum_{y\in \gate{G}{v}{u}} p(v,y)&\ge 1&\text{ for every $u,v\in V(G)$ at distance between $2$ and $r$}\\
\text{minimize }&c
\end{align*}
Then $p$ is a fractional $r$-guidance system of maximum outdegree at most $c$ in $G$,
and $c\le c_0$ by Observation~\ref{obs-tofra}.

Note that the sets $\gate{G}{u}{v}$ can be computed in polynomial time by first computing the distances between
all pairs of vertices of $G$.  The linear program has $O(n^2)$ variables and constraints, and thus its
optimal solution can be found in polynomial time.  
\end{proof}

Fractional $r$-guidance systems can be directly used to test presence of shortest paths,
with a small probability of error.
Let $p$ be a fractional $r$-guidance system in a graph $G$.
If $u$ is a non-isolated vertex of $G$, then by a \emph{$p$-random neighbor of $u$}, we mean a neighbor of $u$ selected at random, with the
probability that a neighbor $v$ is selected being $p(u,v)/d^+_p(u)$; if $d^+_p(u)=0$, the probability is $1/\deg u$, instead.
For distinct vertices $u$ and $v$ and a positive integer $r$, a \emph{random $(p,r)$-exploration} between $u$ and $v$
is a random pair of walks $(P_u,P_v)$ from $u$ and $v$ selected as follows:
\begin{itemize}
\item If $uv\in E(G)$, then $P_u=uv$ and $P_v=v$.
\item Otherwise, if $r=1$ or $u$ or $v$ is an isolated vertex, then $P_u=u$ and $P_v=v$.
\item Otherwise, let $x\in \{u,v\}$ be selected uniformly at random, and let $y$ be a $p$-random neighbor of $x$;
\begin{itemize}
\item if $x=u$, then select a random $(p,r-1)$-exploration $(P_y,P_v)$ between $y$ and $v$ and let $P_u$ be the concatenation of $uy$ and $P_y$, and
\item if $x=v$, then select a random $(p,r-1)$-exploration $(P_u,P_y)$ between $u$ and $y$ and let $P_v$ be the concatenation of $vy$ and $P_y$.
\end{itemize}
\end{itemize}

\begin{observation}\label{obs-prob}
Suppose $p$ is a fractional $r$-guidance system in a graph $G$, of maximum outdegree $c$.
Let $u$ and $v$ be distinct vertices of $G$ at distance at most $r$, and let $(P_u,P_v)$
be a random $(p,r)$-exploration between $u$ and $v$. The probability that $P_u\cup P_v$
is a shortest path between $u$ and $v$ in $G$ is at least $(4c)^{-(r-1)}$.
\end{observation}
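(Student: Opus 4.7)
The plan is to prove the bound by induction on $r$, using the inductive hypothesis that the statement holds for \emph{every} pair of distinct vertices at distance at most $r$, not just the specific pair $u,v$. The base case $r = 1$ is immediate: since $u \neq v$ and $d_G(u,v) \leq 1$, the two vertices are adjacent, so by definition of the exploration we deterministically obtain $P_u = uv$ and $P_v = v$, and the bound $(4c)^0 = 1$ is attained trivially.

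For the inductive step, assume $r \geq 2$ and let $\ell = d_G(u,v)$. The case $\ell = 1$ is handled exactly as above, so assume $\ell \geq 2$. I would split the exploration into its first step and the subsequent recursion: call the first step \emph{good} if the chosen vertex $y$ lies in $\gate{G}{x}{w}$, where $x \in \{u,v\}$ is the endpoint selected uniformly at random and $w$ is the other. A good first step produces a new pair of endpoints at distance exactly $\ell - 1 \leq r - 1$, and the concatenated walk $P_u \cup P_v$ is a shortest $u$-$v$ path if and only if the first step is good \emph{and} the recursive $(p, r-1)$-exploration between the new endpoints also yields a shortest path. The inductive hypothesis then lower-bounds the recursion's success probability by $(4c)^{-(r-2)}$.

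What remains is to show that a good first step occurs with probability at least $\tfrac{1}{4c}$. In the generic case where $d^+_p(u), d^+_p(v) > 0$, the probability in question equals
\[
\tfrac{1}{2}\sum_{y \in \gate{G}{u}{v}} \tfrac{p(u,y)}{d^+_p(u)} \;+\; \tfrac{1}{2}\sum_{y \in \gate{G}{v}{u}} \tfrac{p(v,y)}{d^+_p(v)} \;\geq\; \tfrac{1}{2c},
\]
invoking the outdegree bounds $d^+_p(u), d^+_p(v) \leq c$ together with the fractional guidance inequality~(\ref{eq-proba}). The main obstacle, and the one non-routine point of the argument, is the degenerate case where one of these outdegrees vanishes; if, say, $d^+_p(u) = 0$, then the first sum in~(\ref{eq-proba}) is zero, forcing $\sum_{y \in \gate{G}{v}{u}} p(v,y) \geq 1$ and in particular $d^+_p(v) \geq 1$, so the $v$-side contribution alone still yields the $\tfrac{1}{2c}$ bound (and the $p$-random choice at $u$ simply contributes nonnegatively to the good-step probability). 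Multiplying the first-step probability by the inductive bound gives $\tfrac{1}{2c} \cdot (4c)^{-(r-2)} \geq (4c)^{-(r-1)}$, completing the induction.
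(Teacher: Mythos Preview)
Your proof is correct and follows essentially the same inductive approach as the paper. The only differences are cosmetic: the paper inducts on the distance $\ell$ rather than on $r$ (thereby proving the slightly sharper bound $(4c)^{-(\ell-1)}$) and uses a symmetry argument to focus on whichever side of~(\ref{eq-proba}) contributes at least $\tfrac{1}{2}$, which incidentally sidesteps the degenerate-outdegree case you handle explicitly.
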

\begin{proof}
We prove the claim by induction on the distance $\ell$ between $u$ and $v$ that the probability is at least
$(4c)^{-(\ell-1)}$.  If $\ell=1$, then $P_u=uv$ and $P_v=v$ with probability $1$.  Hence, suppose that $\ell\ge 2$.
By (\ref{eq-proba}) and symmetry, we can assume that
$$\sum_{y\in \gate{G}{u}{v}} p(u,y)\ge \frac{1}{2},$$
and thus
$$\frac{\sum_{y\in \gate{G}{u}{v}} p(u,y)}{d^+_p(u)}\ge \frac{1}{2c}.$$
Hence, with probability at least $\frac{1}{4c}$, when choosing $(P_u,P_v)$, we pick $x=u$ and $y\in \gate{G}{u}{v}$,
so that the distance between $y$ and $v$ is $\ell-1$.  By the induction hypothesis, the probability that $P_y\cup P_v$
is a shortest path between $y$ and $v$, and thus $P_u\cup P_v$ is a shortest path between $u$ and $v$, is then
at least $(4c)^{-(\ell-2)}$.  The result follows by multiplying these probabilities.
\end{proof}

Note that for Observation~\ref{obs-prob} to be practically useful, we would need
a representation of $p$ that enables us to choose a $p$-random neighbor efficiently;
in that case, we could iterate $k(4c)^{r-1}$ times the procedure from Observation~\ref{obs-prob}
to find the shortest path between $u$ and $v$ (or decide that the distance between them is greater than $r$)
with error probability at most $e^{-k}$.
More interestingly, we can turn a fractional $r$-guidance system to a weak $r$-guidance system with a logarithmic loss in the maximum degree.

\begin{lemma}\label{lemma-defrac}
Let $c$ be a positive real number, let $n$ be a positive integer, and let $m=\lceil 4c\log n\rceil$.
Suppose $p$ is a fractional $r$-guidance system in a graph $G$, with maximum outdegree at most $c$.
There exists an algorithm that in polynomial time
returns a weak $r$-guidance system $\vec{H}$ in $G$ with maximum outdegree at most $m$.
\end{lemma}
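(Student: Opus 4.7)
The plan is to obtain $\vec{H}$ by randomized rounding of the fractional orientation $p$ and then to derandomize. Concretely, for each vertex $u$ with $d^+_p(u)>0$ I would draw $m$ independent samples $y_1^u,\ldots,y_m^u$ from the distribution that picks a neighbor $v$ with probability $p(u,v)/d^+_p(u)$, and put the directed edges $(u,y_1^u),\ldots,(u,y_m^u)$ into $\vec{H}$; vertices $u$ with $d^+_p(u)=0$ contribute no outgoing edges. By construction the outdegree of every vertex in $\vec{H}$ is at most $m$, regardless of the random choices, so the outdegree bound is automatic and only correctness has to be argued.

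To analyze correctness I would fix a pair $u,v$ at distance $\ell\in\{2,\ldots,r\}$ and set $p_u=(1/d^+_p(u))\sum_{y\in\gate{G}{u}{v}} p(u,y)$, interpreted as $0$ if $d^+_p(u)=0$, and define $p_v$ symmetrically. The defining inequality~(\ref{eq-proba}) of a fractional guidance system combined with $d^+_p(\cdot)\le c$ gives $p_u+p_v\ge 1/c$. The pair $(u,v)$ violates condition $(\star)$ of Observation~\ref{obs-char} exactly when none of the $m$ samples at $u$ lands in $\gate{G}{u}{v}$ and none of the $m$ samples at $v$ lands in $\gate{G}{v}{u}$; by independence across vertices and samples this probability is at most $(1-p_u)^m(1-p_v)^m\le e^{-m(p_u+p_v)}\le e^{-m/c}\le n^{-4}$, using $m\ge 4c\log n$. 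A union bound over the at most $\binom{n}{2}$ relevant pairs then shows that $\vec{H}$ satisfies $(\star)$ for every pair with probability at least $1-1/(2n^2)$, and hence is a weak $r$-guidance system.

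Finally, I would convert this into a deterministic polynomial-time procedure via the method of conditional expectations applied to the potential $\Phi=\sum_{u,v}\Pr[(u,v)\text{ violates }(\star)]$, which is the expected number of violated pairs. Samples are fixed one at a time, each chosen so as to minimize the current conditional value of $\Phi$; since each summand depends only on the samples drawn at $u$ and $v$ and factors as a product of independent contributions, the conditional probabilities (and thus $\Phi$) can be computed in polynomial time. Because the initial value of $\Phi$ is strictly less than $1$ and the final value is a non-negative integer, the algorithm ends with no violated pair. I do not anticipate any real obstacle here; the only slightly technical point is the efficient bookkeeping of the conditional probabilities, and as a fallback one may simply run the randomized procedure, verify success in polynomial time using Observation~\ref{obs-char}, and repeat, yielding a Las Vegas algorithm of expected polynomial running time.
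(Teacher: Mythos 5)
Your proposal is correct and takes essentially the same approach as the paper: randomized rounding of the fractional orientation with $m$ samples per vertex, followed by derandomization via the method of conditional expectations. The paper organizes this slightly differently, constructing $m$ orientations $\vec{F}_1,\dots,\vec{F}_m$ one at a time (each of maximum outdegree~$1$), derandomizing each round separately so that the set of dissatisfied pairs shrinks by a factor $(1-\tfrac{1}{2c})$ per round; you instead fix all $mn$ samples at once against the single potential $\Phi$, the expected number of violated pairs, and you get a marginally sharper per-pair failure bound ($e^{-m/c}$ rather than $e^{-m/(2c)}$) by considering the contributions of both endpoints jointly rather than invoking symmetry to restrict to one endpoint. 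These are cosmetic differences; the underlying argument, the choice of $m$, and the derandomization mechanism all coincide.
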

\begin{proof}
Let us say that pair $\{u,v\}$ of vertices is \emph{dissatisfied} by a partial orientation $\vec{F}$
if the distance $\ell$ between $u$ and $v$ satisfies $2\le\ell\le r$ and $\vec{F}$ contains neither 
an edge from $u$ to $\gate{G}{u}{v}$ nor an edge from $v$ to $\gate{G}{v}{u}$.  By Observation~\ref{obs-char},
$\vec{F}$ is a weak $r$-guidance system if and only if there are no dissatisfied pairs.

Let $X$ be any set of pairs of vertices of $G$ at distance between $2$ and~$r$.
Let $\vec{F}$ be a random partial orientation of $G$ obtained by, for each non-isolated vertex $z$ of $G$,
choosing a random $p$-neighbor $z'$ and adding the edge $(z,z')$.   Clearly, $\vec{F}$ has maximum outdegree at most one.
Moreover, consider any $\{u,v\}\in X$.  By (\ref{eq-proba}) and symmetry, we can assume that
$$\sum_{y\in \gate{G}{u}{v}} p(u,y)\ge 1/2.$$
Hence, the probability that $u'\in \gate{G}{u}{v}$ (and thus $\{u,v\}$ is not dissatisfied in $\vec{F}$) is at least $\tfrac{1}{2c}$.
By the linearity of expectation, the expected number of dissatisfied pairs in $X$ is at most $\bigl(1-\tfrac{1}{2c}\bigr)|X|$.

Moreover, we can use the method of conditional probabilities to derandomize this procedure and to deterministically
construct a partial orientation $\vec{F}$ of $G$ of maximum outdegree at most one such that the number of pairs in $X$
dissatisfied by $\vec{F}$ is at most $\bigl(1-\tfrac{1}{2c}\bigr)|X|$. Indeed, we can select the outneighbors
one by one, always maintaining the invariant (initially satisfied by the computation from the previous paragraph)
that the expected number of pairs in $X$ dissatisfied by the orientation obtained by choosing the remaining outneighbors
as random $p$-neighbors is at most $\bigl(1-\tfrac{1}{2c}\bigr)|X|$.  To do so, when processing a vertex $u$,
we only need to be able to compute this expected number after each possible choice of the outneighbor of $u$,
which is straightforward due to the linearity of expectation.

Now, to obtain $\vec{H}$, we let $X_0$ be the set of all pairs of vertices whose distance is between $2$ and $r$ in $G$.
Then, for $i=1,\ldots, m$, we use the procedure described in the previous paragraph to find a partial orientation $\vec{F}_i$
of maximum outdegree at most one so that the set $X_i$ of pairs from $X_{i-1}$ dissatisfied by $\vec{F}_i$ has size at most
$\bigl(1-\tfrac{1}{2c}\bigr)|X_{i-1}|$.  Note that 
$$|X_m|\le \bigl(1-\tfrac{1}{2c}\bigr)^m|X_0|\le \frac{|X_0|}{n^2}<1,$$
and thus $X_m=\emptyset$.  Consequently, no pair is dissatisfied by
$$\vec{H}=\bigcup_{i=1}^m\vec{F}_i,$$
and thus $\vec{H}$ is the desired weak $r$-guidance system in $G$.
\end{proof}

Combining Lemmas~\ref{lem-sollp} and~\ref{lemma-defrac}, we obtain the following claim.
\begin{corollary}\label{cor-lp}
There exists an algorithm that, for an input $n$-vertex graph $G$ that admits a weak $r$-guidance system of maximum outdegree at most $c$,
outputs in polynomial time a weak $r$-guidance system of maximum outdegree $O(c\log n)$.
\end{corollary}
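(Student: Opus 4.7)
The plan is to simply chain together Lemma~\ref{lem-sollp} and Lemma~\ref{lemma-defrac}, since the corollary is an immediate algorithmic composition of the two. Given an $n$-vertex input graph $G$ that is promised to admit a weak $r$-guidance system of maximum outdegree at most $c$, I would first compute all pairwise distances of $G$ (polynomial time), so that the sets $\gate{G}{u}{v}$ are available, and then invoke Lemma~\ref{lem-sollp} to solve the associated linear program. By the promise and Observation~\ref{obs-tofra}, the LP has a feasible solution of value at most $c$, so its optimum is at most $c$; thus the lemma returns, in polynomial time, a fractional $r$-guidance system $p$ of maximum outdegree at most $c$.

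Next, I would hand $p$ to the algorithm of Lemma~\ref{lemma-defrac}. That lemma, run with the same $c$ and the same $n$, produces in polynomial time a (genuine, non-fractional) weak $r$-guidance system $\vec{H}$ of maximum outdegree at most $\lceil 4c\log n\rceil$, which is $O(c\log n)$. Returning $\vec{H}$ completes the algorithm.

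Correctness is immediate from the two lemmas, and the overall running time is polynomial since each step is: computing distances is $O(n^3)$ (or better), the LP has $O(n^2)$ variables and constraints and can be solved in polynomial time, and the derandomized rounding of Lemma~\ref{lemma-defrac} runs for $m=O(c\log n)$ rounds, each polynomial. There is no real obstacle here beyond noting that the $c_0$ guaranteed by Lemma~\ref{lem-sollp} is at most the promised $c$, so substituting $c$ in the bound of Lemma~\ref{lemma-defrac} yields the claimed $O(c\log n)$.
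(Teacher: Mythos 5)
Your proof is correct and follows exactly the same route as the paper, which simply states ``Combining Lemmas~\ref{lem-sollp} and~\ref{lemma-defrac}'' as its justification; you have filled in the straightforward plumbing (computing distances, noting the LP optimum is at most the promised $c$, and substituting into the $\lceil 4c\log n\rceil$ bound) correctly.
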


Let us remark that the logarithmic loss in Corollary~\ref{cor-lp} cannot be avoided in general.
For positive integers $a$ and $k$, let $m=k2^{k+1}$ and let $G_{a,k}$ be the random graph obtained as follows.
We start with a random bipartite graph with parts $L$ of size $a$ and $R$ of size $ma$, with each vertex
of $L$ being adjacent to each vertex of $R$ independently with probability $1/2$.  We then divide $R$
into $m$ parts $R_1$, \ldots, $R_m$ of size $a$ arbitrarily, and for $i=1,\ldots,m$, we add
a vertex $x_i$ adjacent to all vertices of $R_i$.

\begin{lemma}\label{lemma-gak}
There exists an integer $a_0$ such that for every $a\ge a_0$ and $k\le \log a$, with positive probability
\begin{itemize}
\item $G_{a,k}$ has a fractional $2$-guidance system with
maximum outdegree at most $3$, and
\item $G_{a,k}$ does not have a weak $2$-guidance system with maximum outdegree at most $k$.
\end{itemize}
\end{lemma}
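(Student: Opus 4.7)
For the first bullet I propose the explicit fractional orientation $p(u,v)=3/\deg(u)$ for every directed pair with $uv\in E(G_{a,k})$, which yields $d^+_p(u)=3$ at every non-isolated vertex. Condition~(\ref{eq-proba}) for a distance-$2$ pair $u,v$ reduces to the lower bound $|N(u)\cap N(v)|/\deg(u) + |N(u)\cap N(v)|/\deg(v) \ge 1/3$. I would do a case analysis over the four relevant pair types: two vertices in $L$; $l\in L$ and $x_i$; two vertices in the same $R_i$; two vertices in distinct $R_i,R_j$. In each case both $|N(u)\cap N(v)|$ and $\deg(u)$ are sums of $\Omega(a)$ independent $\mathrm{Bernoulli}(1/2)$ variables, and the expected ratios are each $\approx 1/2$ (for $l\in L$ and $x_i$ this is the $x_i$-side; the other side only adds to the sum), so Chernoff concentration followed by a union bound over the $O(n^2)$ pairs shows that the orientation is a valid fractional $2$-guidance system of maximum outdegree $3$ with probability tending to $1$.

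For the second bullet I focus on the $(l,x_i)$-pairs that are at distance $2$, which holds for all such pairs with high probability. The gates $\gate{G_{a,k}}{l}{x_i}$ and $\gate{G_{a,k}}{x_i}{l}$ both equal $N(l)\cap R_i$, so only the outneighbors $S_i\subseteq R_i$ of $x_i$ and the outneighbors $T_l\subseteq N(l)$ of $l$ can contribute: the pair is covered iff $S_i\cap N(l)\neq\emptyset$ or $T_l\cap R_i\neq\emptyset$. Since the $R_i$ are pairwise disjoint and $|T_l|\le k$, $T_l$ covers at most $k$ indices $i$, so for each $l$ at least $m-k$ indices must satisfy $S_i\cap N(l)\neq\emptyset$. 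Conditioning on the validity event $T_l\subseteq N(l)$ (probability $2^{-|T_l|}$ per $l$), the events $\{S_i\cap N(l)=\emptyset\}$ for the remaining $i$ involve disjoint sets of independent $L$--$R$ edge variables, so they are independent both across $i$ and across $l$, each of probability at least $2^{-k}$. The number of bad indices for $l$ thus stochastically dominates $\mathrm{Bin}(m-k,2^{-k})$, whose mean is at least $k$ by the choice $m=k2^{k+1}$; a Chernoff bound yields $\Pr[l\text{ has at most }k\text{ bad indices}]\le e^{-ck}$ for an absolute constant $c>0$, and independence over $l\in L$ then gives
\[
\Pr\bigl[(S,T)\text{ is a valid weak 2-guidance system}\bigr]\le \Bigl(\prod_l 2^{-|T_l|}\Bigr)\cdot e^{-cka}.
\]

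Summing this bound over all $(S,T)$-configurations, the factor $\prod_l 2^{-|T_l|}$ must be absorbed into the enumeration of $T_l$, collapsing the naive $T$-count $\binom{ma}{k}^a$ into $\bigl(\sum_{j\le k}\binom{ma}{j}2^{-j}\bigr)^a$. Combined with the $S$-enumeration $\binom{a}{\le k}^m$ and the concentration factor $e^{-cka}$, one checks that for $k\le\log a$ and $a$ at least a suitable $a_0$ the resulting expected number of valid orientations is strictly less than $1$, and Markov's inequality then gives the claimed positive probability of nonexistence. The main obstacle is the final arithmetic: both the (reduced) $T$-enumeration and the Chernoff decay scale exponentially in $ka$, so the cancellation supplied by the validity factor must be exploited fully, while the $S$-enumeration---which grows like $\exp(k^2 2^k\log a)$---is controlled precisely by the specific choice $m=k2^{k+1}$ together with the ceiling $k\le\log a$; verifying this delicate balance is the part of the argument that genuinely requires care.
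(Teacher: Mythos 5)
Your construction for the first bullet (put $p(u,v)=3/\deg u$ on every oriented edge) is a workable variant of what the paper does, although the paper's choice of weights ($p(x_i,v)=3/a$, $p(u,z)=2.5/a$ for $u\in R$ and $z\in L$, $p(z,u)=2.5/a$ for $z\in L$ and $u\in R_1$ only, zero elsewhere) avoids having to concentrate degrees and lets one verify the constraint $(6)$ directly from three clean codegree estimates. So for the first bullet you are fine, modulo writing out the Chernoff/union-bound details.

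For the second bullet there is a genuine gap, and it is not the ``delicate balance'' you flag at the end but a quantitative failure of the whole union-bound-over-orientations strategy. After you absorb the validity factor $\prod_l 2^{-|T_l|}$, the per-$\ell$ sum over candidate outneighbor sets $T_\ell$ is still
$\sum_{j\le k}\binom{ma}{j}2^{-j}\ge\binom{ma}{k}2^{-k}\ge\bigl(\tfrac{ma}{2k}\bigr)^{k}=\bigl(2^{k}a\bigr)^{k}$,
so raising this to the power $a=|L|$ gives at least $\exp\bigl(ka\,(\ln a+k\ln 2)\bigr)=\exp(\Theta(ka\log a))$ for $k\le\log a$. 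Your Chernoff decay, on the other hand, is only $e^{-cka}=\exp(-\Theta(ka))$: the event you bound per vertex $\ell$ has probability roughly $(1-2^{-k})^{m-k}\approx e^{-2k}$, nothing stronger, because you are intersecting only $\Theta(m)\approx k2^{k}$ Bernoullis of success probability $2^{-k}$. The mismatch is therefore a full factor of $\Theta(\log a)$ in the exponent, and no choice of $a_0$ repairs it; the expected number of ``valid'' pairs $(S,T)$ tends to infinity, not to zero. The fix is to not enumerate $T$ at all: given the graph and given $S$, the adversary's optimal $T_\ell$ is forced (cover the bad indices $B_\ell=\{i:S_i\cap N(\ell)=\emptyset\}$ if $|B_\ell|\le k$, else fail), so one should union only over $S$. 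Even better, the paper sidesteps orientations altogether: it proves, by a union bound over the merely $m\binom{a}{k}$ pairs $(i,K)$ with $K\subseteq R_i$ a $k$-set, that with positive probability more than $2^{-k-1}a$ vertices of $L$ avoid every such $K$; then a deterministic pigeonhole argument (each $\ell\in L$ sends outedges into at most $k$ of the $m=k2^{k+1}$ blocks, so some $R_i$ receives outedges from at most $2^{-k-1}a$ vertices of $L$) exhibits a pair $(x_i,\ell)$ at distance two that no outdegree-$k$ orientation can serve. That separation of a once-and-for-all structural property from the adversary's choice is what your proposal is missing.
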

\begin{proof}
Let us use the notation from the definition of the graph $G_{a,k}$.  Note that
\begin{itemize}
\item for $i=1,\ldots,m$ and $v\in L$, the expected number of neighbors of $v$ in $R_i$
is $a/2$, and by Chernoff inequality, the probability that $v$ has less than $a/3$ neighbors
in $R_i$ is less than $\exp(-a/36)$.
\item for distinct vertices $u,v\in R$, the expected number of common neighbors of $u$ and $v$ in $L$
is $a/4$, and by Chernoff inequality, the probability that $u$ and $v$ have less than $a/5$
common neighbors in $L$ is less than $\exp(-a/200)$,
\item for distinct $u,v\in L$, the probability that $u$ and $v$ have less than $a/5$ common neighbors
in $R_1$ is also less than $\exp(-a/200)$, and
\item for $i\in 1,\ldots,m$ and a $k$-tuple $K$ of vertices of $R_i$, the expected number
of vertices of $L$ with no neighbor in $K$ is $2^{-k}a$, and by Chernoff inequality,
the probability that the number of such vertices is at most $2^{-k-1}a$ is
at most $\exp\bigl(-2^{-k-3}a\bigr)$.
\end{itemize}
Hence, the probability that any of these events occurs is less than
$$ma\cdot \exp(-a/36)+(m^2+1)a^2\cdot \exp(-a/200)+ma^k\cdot \exp\bigl(-2^{-k-3}a\bigr)<1$$
if $a$ is sufficiently large (and using the assumption that $k\le \log a$; note that the basis
of the logarithm is $e$, and thus $2^k\le a^{\log 2}\ll a$).
Hence, with positive probability,
\begin{itemize}
\item for $i=1,\ldots,m$, each vertex $v\in L$ has at least $a/3$ neighbors in~$R_i$,
\item any distinct vertices $u,v\in R$ have at least $a/5$ common neighbors in~$L$,
\item any distinct vertices $u,v\in L$ have at least $a/5$ common neighbors in~$R_1$, and
\item for $i\in 1,\ldots,m$ and for every $k$-tuple $K$ of vertices of $R_i$, more than
$2^{-k-1}a$ vertices of $L$ have no neighbor in $K$.
\end{itemize}

Let us define a fractional orientation $p$
of $G_{a,k}$ as follows:
\begin{itemize}
\item For $i=1,\ldots,m$ and $v\in R_i$, we set $p(x_i,v)=3/a$,
\item for each adjacent $u\in R$ and $z\in L$, we set $p(u,z)=2.5/a$, and
\item for each adjacent $z\in L$ and $u\in R_1$, we set $p(z,u)=2.5/a$;
\end{itemize}
$p$ is $0$ everywhere else.  Note that this fractional orientation
has maximum outdegree at most 3, since $\deg x_i=|R_i|=a$,
the number of neighbors of $u\in R$ in $L$ is at most $|L|=a$,
and the number of neighbors of $z\in L$ in $R_1$ is at most $|R_1|=a$.
Consider now any vertices $x,y\in V(G_{a,k})$ at distance exactly two from one another.  Note that $G_{a,k}$ is bipartite,
and thus either $x,y\in R$, or $x,y\in V(G_{a,k})\setminus R$.  There are the following cases:
\begin{itemize}
\item One of $x$ and $y$ belongs to $\{x_1,\ldots,x_m\}$, say $x=x_i$.  Then $y$ necessarily belongs to $L$,
and $y$ has at least $a/3$ neighbors in $R_i$.  Hence, $|\gate{G_{a,k}}{x}{y}|\ge a/3$
and
$$\sum_{z\in \gate{G_{a,k}}{x}{y}} p(x,z)\ge a/3 \cdot 3/a=1.$$
\item Both $x$ and $y$ belong to $L$.  Since $x$ and $y$ have at least $a/5$ common neighbors in $R_1$,
we have $|\gate{G_{a,k}}{x}{y}\cap R_1|=|\gate{G_{a,k}}{y}{x}\cap R_1|\ge a/5$, and
$$\sum_{z\in \gate{G_{a,k}}{x}{y}} p(x,z)+\sum_{z\in \gate{G_{a,k}}{y}{x}} p(y,z)\ge 2\cdot a/5 \cdot 2.5/a=1.$$
\item Similarly, if $x,y\in R$, then $x$ and $y$ have at least $a/5$ common neighbors in $L$,
and $$\sum_{z\in \gate{G_{a,k}}{x}{y}} p(x,z)+\sum_{z\in \gate{G_{a,k}}{y}{x}} p(y,z)\ge 2\cdot a/5 \cdot 2.5/a=1.$$
\end{itemize}
Therefore, $p$ is a fractional $2$-guidance system for $G_{a,k}$.

Consider now any partial orientation $\vec{H}$ of $G_{a,k}$ with maximum outdegree at most $k$.
Then each vertex $v\in L$ has an outneighbor in $R_i$ for at most $k$ choices of $i$,
and thus there exists $i\in \{1,\ldots,m\}$ such that at least $(1-k/m)a=\bigl(1-2^{-k-1}\bigr)a$ vertices
of $L$ have no outneighbor in $R_i$.  Let $K$ be a $k$-tuple of vertices of $R_i$ containing all outneighbors
of $x_i$.  More than $2^{-k-1}a$ vertices of $L$ have no neighbor in $K$, and thus there exists a vertex $v\in L$
with no outneighbor in $R_i$ and no neighbor in $K$.  However, $x_i$ and $v$ are at distance $2$,
yet neither $x_i$ nor $v$ has an outneighbor in $\gate{G_{a,k}}{x_i}{v}=\gate{G_{a,k}}{v}{x_i}\subseteq R_i\setminus K$.
Hence $\vec{H}$ is not a weak $2$-guidance system.  Consequently, every weak $2$-guidance system for $G_{a,k}$
must have maximum outdegree greater than $k$.
\end{proof}

Note that if we set $k=\lfloor \log a\rfloor$, we have
$$n=|V(G_{a,k})|\le (m+1)(a+1)\le \bigl(k2^{k+1}+1\bigr)\cdot(\exp(k+1)+1)\le \exp(O(k)),$$
and thus Lemma~\ref{lemma-gak} gives examples of graphs with an arbitrarily large number $n$ of vertices and
a fractional $2$-guidance system of maximum outdegree at most $3$ such that every weak $2$-guidance system
has maximum outdegree $\Omega(\log n)$.

However, we can do better in case the VC-dimension of relevant systems is bounded.  Recall that a system $\SS$
of subsets of a set $X$ \emph{shatters} a set $A\subseteq X$ if $\{A\cap S:S\in\SS\}$ contains all subsets of $A$,
and that the \emph{VC-dimension} of $\SS$ is the size of the largest subset of $X$ shattered by $\SS$.
The key property of systems with bounded VC-dimension is that they admit efficient (randomized) approximation
for smallest hitting set in terms of the size of the smallest fractional hitting set (a \emph{hitting set} for $\SS$
is a subset of $X$ intersecting all elements of $\SS$, and a \emph{fractional hitting set} is a function
$w:X\to \mathbb{R}_0^+$ such that, defining $w(A)=\sum_{x\in A} w(x)$ for each subset $A$ of $X$, 
each element $S\in\SS$ satisfies $w(S)\ge 1$; the size of the fractional hitting set $w$ is $w(X)$).
For the following standard result, see e.g.~\cite{pach2011combinatorial}.

\begin{theorem}\label{thm-approx}
There exists a polynomial-time randomized algorithm that, given a system $\SS$ of subsets of a set $X$ of VC-dimension at most $d$
and a fractional hitting set $w$ of size $s$, with probability at least $1/2$ returns a hitting set for $\SS$
of size $O(ds\log s)$.
\end{theorem}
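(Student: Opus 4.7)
The plan is to reduce the statement to the classical $\varepsilon$-net theorem of Haussler and Welzl. First, normalize $w$ to a probability distribution $\mu$ on $X$ by setting $\mu(x)=w(x)/s$ for each $x\in X$; then $\mu$ is a probability measure (since $w(X)=s$), and by the defining property of a fractional hitting set every $S\in\SS$ satisfies $\mu(S)=w(S)/s\ge 1/s$.

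Next, set $\varepsilon=1/s$ and invoke the $\varepsilon$-net theorem for range spaces of VC-dimension at most $d$: for some absolute constant $C$, if $Y$ is obtained by drawing $m=\lceil C(d/\varepsilon)\log(1/\varepsilon)\rceil=O(ds\log s)$ elements i.i.d.\ from $\mu$, then with probability at least $1/2$ the multiset $Y$ is an \emph{$\varepsilon$-net} for $\SS$, i.e., $Y\cap S\neq\emptyset$ for every $S\in\SS$ with $\mu(S)\ge\varepsilon$. Since \emph{every} $S\in\SS$ satisfies $\mu(S)\ge 1/s=\varepsilon$, the underlying set of $Y$ is then a hitting set for $\SS$ of the claimed size.

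For the algorithmic side, sampling from $\mu$ is standard: with $O(|X|)$ preprocessing we build cumulative weight prefix sums, and each of the $m$ samples then takes $O(\log|X|)$ time by binary search, so the whole procedure runs in polynomial time. One can optionally verify in polynomial time whether the returned $Y$ hits every $S\in\SS$ and repeat on failure to boost the success probability, though the statement as written only requires probability $1/2$ and so a single trial suffices.

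The main (indeed only) obstacle is the $\varepsilon$-net theorem itself. Its standard proof combines the Sauer--Shelah lemma, which bounds the number of distinct traces $\{S\cap A:S\in\SS\}$ on an $n$-element set $A$ by $O(n^d)$, with a double-sampling argument: one shows that the probability that some fixed $S$ with $\mu(S)\ge\varepsilon$ is missed by a random $Y$ but is heavily represented in an independent ``ghost sample'' $Y'$ of the same size decays fast enough that a union bound over the $O(|Y\cup Y'|^d)$ traces still gives failure probability at most $1/2$ once $|Y|=\Omega((d/\varepsilon)\log(1/\varepsilon))$. Since this is classical textbook material (see~\cite{pach2011combinatorial}), my plan is to cite it rather than reproduce the argument.
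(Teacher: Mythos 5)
Your proposal is correct and is precisely the standard argument. In fact, the paper does not prove Theorem~\ref{thm-approx} at all — it cites it as a known result from~\cite{pach2011combinatorial} — so there is no in-paper proof to compare against; your write-up simply supplies the argument that the citation points to. The reduction is exactly right: normalize $w$ to $\mu=w/s$, observe that $\mu(S)\ge 1/s$ for all $S\in\SS$ since $w$ is a fractional hitting set, set $\varepsilon=1/s$, and apply the $\varepsilon$-net theorem. One small point worth being explicit about: to land on $O(ds\log s)$ rather than $O(ds\log(ds))$ you are implicitly invoking the sharper form of the $\varepsilon$-net theorem with sample size $O\bigl(\tfrac{d}{\varepsilon}\log\tfrac{1}{\varepsilon}\bigr)$ (Koml\'os--Pach--Woeginger, or the random-sample version in Pach--Agarwal), not the original Haussler--Welzl bound of $O\bigl(\tfrac{d}{\varepsilon}\log\tfrac{d}{\varepsilon}\bigr)$; this is the version needed to match the stated bound, and your citation covers it, but it deserves a word. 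The algorithmic remarks (prefix-sum sampling, optional verification and restart) are routine and fine.
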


For a graph $G$, integer $r\ge 2$, and vertex $u\in V(G)$, let $\VC(G,r,u)$ denote the VC-dimension of the
system
$$\{\gate{G}{u}{v}:v\in V(G), 2\le d_G(u,v)\le r\},$$
and let $\VC(G,r)=\max_{u\in V(G)} \VC(G,r,u)$.
\begin{theorem}\label{thm-corlp-vc}
There exists a polynomial-time randomized algorithm that, for an input $n$-vertex graph $G$ that admits a weak
$r$-guidance system of maximum outdegree at most $c$, with probability at least $1/2$ outputs a weak $r$-guidance
system of maximum outdegree $O(\VC(G,r)\cdot c\log c)$.
\end{theorem}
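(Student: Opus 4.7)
The plan is to decouple the weak-guidance construction across vertices: use the fractional solution produced by Lemma~\ref{lem-sollp} to split each condition of type ($\star$) between its two endpoints, and then independently solve one approximate hitting-set problem at each vertex via Theorem~\ref{thm-approx}.

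First, invoke Lemma~\ref{lem-sollp} to obtain, in polynomial time, a fractional $r$-guidance system $p$ of maximum outdegree at most $c$. For every unordered pair $\{u,v\}$ of vertices at distance $\ell$ with $2 \le \ell \le r$, inequality~(\ref{eq-proba}) forces at least one of
$$\sum_{y \in \gate{G}{u}{v}} p(u,y) \qquad \text{and} \qquad \sum_{y \in \gate{G}{v}{u}} p(v,y)$$
to be at least $1/2$; assign the pair to a corresponding endpoint, breaking ties arbitrarily. For each vertex $u$, let $T_u$ be the set of $v$'s such that $\{u,v\}$ was assigned to $u$. By Observation~\ref{obs-char}, it suffices to pick, for each $u$, a set $O_u \subseteq N_G(u)$ that meets $\gate{G}{u}{v}$ for every $v \in T_u$, and to orient every edge from $u$ to a vertex of $O_u$; the resulting partial orientation automatically satisfies ($\star$) for every relevant pair.

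Now fix a vertex $u$ and consider the set system $\SS_u = \{\gate{G}{u}{v} : v \in T_u\}$ on the ground set $N_G(u)$. The function $y \mapsto 2p(u,y)$ is a fractional hitting set for $\SS_u$ by the assignment rule, with total weight at most $2 d^+_p(u) \le 2c$. Since $\SS_u$ is a subfamily of the system defining $\VC(G,r,u)$, its VC-dimension is at most $\VC(G,r)$. Applying Theorem~\ref{thm-approx} to $\SS_u$ therefore yields, with probability at least $1/2$, a hitting set $O_u$ of size $O(\VC(G,r)\cdot c \log c)$. Taking the union of the oriented stars at every $u$ gives a partial orientation $\vec{H}$ of the claimed maximum outdegree that is a weak $r$-guidance system.

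The main (and essentially only) piece of bookkeeping is amplifying the success probability: Theorem~\ref{thm-approx} succeeds per vertex with probability only $1/2$, yet we need all $|V(G)|$ invocations to succeed simultaneously. Both the hitting-set property and the cardinality bound of a candidate output can be verified deterministically in polynomial time, so we repeat each call $\lceil \log_2(2n) \rceil$ times independently and keep any successful output; a union bound over the $n$ vertices then gives global success probability at least $1/2$, completing the construction.
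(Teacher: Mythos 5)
Your proposal is correct and follows essentially the same argument as the paper: obtain a fractional solution via Lemma~\ref{lem-sollp}, use inequality~(\ref{eq-proba}) to assign each pair at distance in $[2,r]$ to an endpoint where the fractional out-mass on the gate is at least $1/2$, observe that $2p(u,\cdot)$ is a fractional hitting set of weight at most $2c$ for the resulting set system at $u$ (which is a subfamily of the one defining $\VC(G,r,u)$), apply Theorem~\ref{thm-approx} per vertex with independent repetition plus a union bound, and finish with Observation~\ref{obs-char}. The only cosmetic difference is that you break ties to form a partition $T_u$, whereas the paper keeps the possibly-overlapping sets $R_u$ and invokes symmetry at the end; both are fine, and your $O(\log n)$ repetitions per vertex are even a little tighter than the paper's stated bound.
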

\begin{proof}
Let $p$ be a fractional $r$-guidance system of maximum outdegree at most $c$ in $G$ found using Lemma~\ref{lem-sollp}.
For each $u\in V(G)$, let $R_u$ be the set of vertices $v\in V(G)$ such that $2\le d_G(u,v)\le r$
and
$$\sum_{z\in \gate{G}{u}{v}} p(u,z)\ge 1/2.$$
Since $p$ is a fractional $r$-guidance system, for each $u,v\in V(G)$ such that $2\le d_G(u,v)\le r$,
we have $v\in R_u$ or $u\in R_v$.

Let $\SS_u$ be the system $\{\gate{G}{u}{v}:v\in R_u\}$ of subsets of the set $N_G(u)$ of neighbors of $u$.
For $z\in N_G(u)$, let us define $w(z)=2p(u,z)$.  By the choice of $R_u$, we have $w(S)\ge 1$ for each $S\in\SS_u$,
and thus $w$ is a fractional hitting set for $\SS_u$.  Moreover, $w(N_G(u))\le 2c$, since the maximum outdegree of $p$ is at most $c$.
The VC-dimension of $\SS_u$ is at most $\VC(G,r,u)\le \VC(G,r)$, and thus we can by Theorem~\ref{thm-approx} find
a hitting set $H_u\subseteq N_G(u)$ for $\SS_u$ of size $O(\VC(G,r)\cdot c\log c)$; note that we iterate the algorithm $\Omega(|V(G)|)$
times to make the probability of error less than $\tfrac{1}{2|V(G)|}$, and thus we find a valid hitting set for each $u\in V(G)$
with probability at least $1/2$.

Let us now define a partial orientation $\vec{G}$ of $G$ by, for each $u\in V(G)$, directing the edges from $u$ to $H_u$.
Clearly, $\vec{G}$ has maximum outdegree $O(\VC(G,r)\cdot c\log c)$.  Moreover, consider any $u,v\in V(G)$ such that $2\le d_G(u,v)\le r$.
By symmetry, we can assume that $v\in R_u$, and thus $H_u$ intersects the set $\gate{G}{u}{v}\in\SS_u$.
Hence, $u$ has an outneighbor in $\gate{G}{u}{v}$.  By Observation~\ref{obs-char}, we conclude that $\vec{G}$ is
a weak $r$-guidance system for $G$.
\end{proof}

In particular, this is useful for structurally nowhere-dense classes (and in particular
for classes with structurally bounded expansion), as follows from the fact
that first-order definable sets in graphs from these classes have bounded VC-dimension.
More precisely, for a first-order formula $\psi(\vec{x},\vec{y})$ with two groups $\vec{x}$ and $\vec{y}$
of free variables, a graph $G$, and a $|\vec{x}|$-tuple $\vec{u}$ of vertices of $G$,
let $S_{\psi,G}(\vec{u})$ be the set of $|\vec{y}|$-tuples $\vec{v}$ of vertices of $G$
such that $G\models\psi(\vec{u},\vec{v})$, and let $\SS_{\psi,G}$ be the system
$$\{S_{\psi,G}(\vec{u}):\vec{u}\in V(G)^{|\vec{x}|}\}$$
of sets of $|\vec{y}|$-tuples of vertices of $G$.  The following bound follows from
the results of Adler and Adler~\cite{adl}, see also~\cite{numtypes} for a more precise
bounds and the discussion of the possibility to introduce vertex and edge colors (unary and binary predicates
from the statement of the theorem).
\begin{theorem}\label{thm-vc}
For every nowhere-dense graph class $\GG$ and a first-order formula $\psi(\vec{x},\vec{y})$
using unary predicate symbols $U_1$, \ldots, $U_s$ and binary predicate symbols $E_1$, \ldots, $E_t$,
there exists a constant $d$ such that the following claim holds.
Consider any graph $G\in \GG$, and interpret $U_i$ for $i\in \{1,\ldots, s\}$ as a subset of $V(G)$
and $E_j$ for $j\in\{1,\ldots,t\}$ as a subset of $E(G)$.
Then the system $\SS_{\psi,G}$ has VC-dimension at most $d$.
\end{theorem}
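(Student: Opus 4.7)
The plan is to derive the theorem from two classical model-theoretic facts: the result of Adler and Adler~\cite{adl} that every first-order formula is \emph{stable} on any nowhere-dense graph class, and Shelah's standard implication that a stable formula satisfies the non-independence property (NIP), which in turn forces the defined set system to have VC-dimension bounded by a function of the formula alone. Both ingredients extend automatically to formulas with arbitrary tuples of free variables, so once the auxiliary colours are encoded away the bound $d$ on the VC-dimension of $\SS_{\psi,G}$ can be read off directly.

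To handle the unary predicates $U_1,\ldots,U_s$, I observe that nowhere-density is a property of the underlying uncoloured graph, and the Adler--Adler theorem relativises to graphs augmented with arbitrary vertex colours. For the binary predicates $E_1,\ldots,E_t \subseteq E(G)$, I would apply the standard subdivision-and-colour trick: replace each edge $e\in E(G)$ by a path of length two whose midpoint $m_e$ carries a colour encoding the characteristic vector $(\mathbf{1}_{e\in E_j})_{j=1}^t$. The resulting graph $\widetilde G$ lies in a class of vertex-coloured one-subdivisions of graphs from $\GG$, which is still nowhere-dense~\cite{nesbook}. Each atom $E_j(x,y)$ appearing in $\psi$ rewrites to
$$(\exists z)\bigl(z \sim x \wedge z \sim y \wedge M(z) \wedge U_{E_j}(z)\bigr),$$
where $M$ marks the midpoint vertices and $U_{E_j}$ marks those whose colour records $E_j$-membership. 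This produces a formula $\widetilde\psi$ on $\widetilde G$ whose defined set system on tuples from $V(G) \subseteq V(\widetilde G)$ agrees with $\SS_{\psi,G}$.

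Applying Adler--Adler to $\widetilde\psi$ over the coloured nowhere-dense class containing $\widetilde G$ gives stability of $\widetilde\psi$ uniformly over the class; NIP and the desired VC-dimension bound $d$ then follow, with $d$ depending only on the length and quantifier depth of $\psi$, on $s$ and $t$, and on the parameters witnessing nowhere-density of $\GG$. The main obstacle is purely one of bookkeeping: one must verify that the edge-colour encoding and the subdivision do not perturb $\SS_{\psi,G}$, that the NIP rank of $\widetilde\psi$ is controlled by that of $\psi$, and that passing through the augmented class of vertex-coloured one-subdivisions preserves nowhere-density uniformly. Carrying out these steps explicitly, together with an explicit numerical form of the VC-bound, is the content of~\cite{numtypes}, to which the statement is ultimately attributed.
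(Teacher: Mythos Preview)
The paper does not supply its own proof of Theorem~\ref{thm-vc}; it is quoted as a known result, attributed to Adler and Adler~\cite{adl} with the extension to unary and binary predicates credited to~\cite{numtypes}. Your proposal is therefore not competing with a proof in the paper but rather unpacking the citation, and it does so correctly: the route \emph{Adler--Adler stability $\Rightarrow$ NIP $\Rightarrow$ bounded VC-dimension}, combined with the subdivision-and-colour encoding of the edge predicates $E_j\subseteq E(G)$, is exactly the argument the paper is pointing to. Two small points of bookkeeping you gloss over: in the subdivided graph you must also rewrite the basic adjacency atom $E(x,y)$ (not only the coloured $E_j$'s), and all quantifiers of $\psi$ must be relativised to the non-midpoint vertices; both are routine and covered by the references you cite.
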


This easily gives the following consequence.

\begin{lemma}\label{lemma-vc}
For every structurally nowhere-dense class $\GG$ of graphs and every integer $r\ge 2$, there exists a constant $d$
such that $\VC(G,r)\le d$ for every graph $G\in\GG$.
\end{lemma}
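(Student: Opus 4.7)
The plan is to exhibit $\gate{G}{u}{v}$ as a uniformly first-order definable set system, bound its VC-dimension via Theorem~\ref{thm-vc}, and then lift the bound from the nowhere-dense setting to the structurally nowhere-dense setting through the defining transduction.

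First I would write, for each $\ell\in\{1,\ldots,r\}$, a first-order formula $\delta_\ell(x,y)$ expressing ``$d_G(x,y)=\ell$'' by asserting the existence of a walk of length $\ell$ and the non-existence of walks of length less than $\ell$. Setting
$$\psi(x_1,x_2;y)\;:=\;\bigvee_{\ell=2}^{r}\bigl(\delta_\ell(x_1,x_2)\land E(x_1,y)\land \delta_{\ell-1}(y,x_2)\bigr),$$
we obtain $S_{\psi,G}(u,v)=\gate{G}{u}{v}$ whenever $2\le d_G(u,v)\le r$, and $S_{\psi,G}(u,v)=\emptyset$ otherwise. Hence for every $u\in V(G)$ the set system appearing in the definition of $\VC(G,r,u)$ is a subsystem of $\SS_{\psi,G}$, and in particular $\VC(G,r)\le \VC(\SS_{\psi,G})$.

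If $\GG$ were already nowhere-dense, Theorem~\ref{thm-vc} applied to $\psi$ would finish the argument. For structurally nowhere-dense $\GG$, fix a transduction $T$ (with copying parameter $k$, unary predicates $U_1,\ldots,U_s$, and formulas $\omega,\epsilon$) and a nowhere-dense class $\GG'$ with $\GG\subseteq T(\GG')$. For each $G\in\GG$, choose $G'\in\GG'$ and sets $C_1,\ldots,C_s\subseteq V(kG')$ witnessing $G\in T(G')$. I would then translate $\psi$ into a formula $\widetilde{\psi}(x_1,x_2;y)$ over the vocabulary of $kG'$ (augmented with the unary predicates $U_1,\ldots,U_s$) by adding guards $\omega(x_1)\land\omega(x_2)\land\omega(y)$, relativising every quantifier to $\omega$, and replacing each occurrence of $E$ by $\epsilon$. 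The standard transduction principle then gives $S_{\psi,G}(u,v)=S_{\widetilde\psi,kG'}(u,v)$ for all $u,v\in V(G)$, so $\SS_{\psi,G}\subseteq \SS_{\widetilde\psi,kG'}$ (identifying $V(G)$ with its image in $V(kG')$). Since $\GG'$ is nowhere-dense, so is $k\GG'=\{kG':G'\in\GG'\}$, and Theorem~\ref{thm-vc} applied to $\widetilde\psi$ on $k\GG'$ (with the $C_i$'s interpreting $U_i$) yields a constant $d$, depending only on $\GG$ and $r$, that bounds $\VC(\SS_{\widetilde\psi,kG'})$ and therefore $\VC(G,r)$.

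The main obstacle is the translation step: one has to verify that the relativised formula $\widetilde\psi$ genuinely carves out $\gate{G}{u}{v}$ inside $V(kG')$ and that evaluating $E$ through $\epsilon$ reproduces the actual edges of $G$. This is routine model theory, but is the one place where we step outside the immediate scope of Theorem~\ref{thm-vc} and genuinely use the assumption of structural nowhere-density, so it is worth spelling out carefully.
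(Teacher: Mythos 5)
Your overall strategy is the same as the paper's: express $\gate{G}{u}{v}$ by a first-order formula $\psi$, push $\psi$ through the transduction, and invoke Theorem~\ref{thm-vc} on the nowhere-dense side. The first half is fine, but there is a genuine gap in the translation step that you yourself flag as ``routine model theory'' but do not actually carry out, and it is not quite routine here.

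The problem is the binary predicate $M$. The transduction formula $\epsilon$ is allowed to use $M$, the relation identifying the $k$ copies of each vertex, so $\widetilde\psi$ will in general mention $M$ as well, not just $E$. Theorem~\ref{thm-vc} permits extra binary predicates, but only when each is interpreted as a \emph{subset of the edge set} of the host graph; in the disjoint union $kG'$, the pairs related by $M$ lie in distinct components and are \emph{not} edges of $kG'$, so $M$ cannot be interpreted as a subset of $E(kG')$ and the theorem does not apply to $(\widetilde\psi,k\GG')$ as you propose. This is exactly the obstacle the paper handles by passing from $kH$ to $(kH)'$: it adds a clique on every $k$-tuple of copies, so that both $M$ (the clique edges) and the original $E$ (the copy edges) become subsets of $E((kH)')$, and then separately verifies that the class $\{(kH)':H\in\GG_0\}$ is still nowhere-dense because $(kH)'$ embeds in a lexicographic product of $H$ with a bounded clique. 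Your proof needs this auxiliary construction (or some equivalent way of realizing $M$ as an edge subset in a still nowhere-dense class); without it, the reduction to Theorem~\ref{thm-vc} does not go through.
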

\begin{proof}
Since $\GG$ is structurally nowhere-dense, there exists a nowhere-dense class $\GG_0$
and a transduction $T=(k,M,U_1,\ldots,U_s,\omega,\epsilon)$ such that for each $G\in \GG$
there exists a graph $H\in \GG_0$ such that $G\in T(H)$; let $C_1^G$, \ldots, $C_s^G$
be the corresponding subsets of $V(H)$ used to interpret $U_1$, \ldots, $U_s$.

For $H\in\GG_0$, let $(kH)'$ be the graph obtained from the disjoint union of $k$ copies of $G$ by adding a clique
on each $k$-tuple of vertices corresponding to the same vertex of $H$, and let $M_H$ be the set of the
edges of these cliques.  Also, let $E_H$ be the set of edges of $kH$.
Let $\GG_1=\{(kH)':H\in\GG_0\}$.  Since $(kH)'$ is a subgraph of the lexicographic
product of $H$ with a clique of bounded size and $\GG_0$ is nowhere-dense, the class $\GG_1$
is nowhere-dense as well~\cite{nesbook}.

Note that there exists a first-order formula $\psi_r(x_1,x_2,y)$ with three free variables such that
for each $u,v\in V(G)$ satisfying $2\le d_G(u,v)\le r$ and $z\in V(G)$, $G\models \psi_r(u,v,z)$
if and only if $z\in\gate{G}{u}{v}$.  Let $\psi'_r$ be the formula obtained from $\psi_r$
by restricting the quantification to vertices satisfying $\omega$ and replacing each usage
of the adjacency predicate by $\epsilon$.  Clearly, if $G\in T(H)$, then
$$G\models \psi_r(u,v,z)\text{ iff }(kH)',U_1\colonequals C_1^G,\ldots, U_s\colonequals C^G_s,E\colonequals E_H,M\colonequals M_H\models\psi_r(u,v,z).$$
Therefore, with the interpretation of the unary and binary symbols as above,
$\SS_{\psi_r,G}$ is a subset of $\{S\cap V(G):S\in \SS_{\psi'_r,(kH)'}\}$, and thus the VC-dimension of
$\SS_{\psi_r,G}$ is at most as large as the VC-dimension of $\SS_{\psi'_r,(kH)'}$.  Since $(kH)'\in\GG_1$
and $\GG_1$ is nowhere-dense, Theorem~\ref{thm-vc} implies that this VC-dimension is bounded.
\end{proof}

Hence, Theorem~\ref{thm-corlp-vc} gives the following algorithmic form of Corollary~\ref{cor-main}.

\begin{corollary}\label{cor-maina}
Let $\GG$ be a class of graphs and let $r$ be a positive integer.
\begin{itemize}
\item If $\GG$ has structurally bounded expansion, then there exists $c$ 
and a randomized algorithm that for an input $n$-vertex graph $G\in\GG$
outputs in polynomial time with probability at least $1/2$ a weak $r$-guidance system of maximum outdegree at most $c$.
\item If $\GG$ is structurally nowhere-dense and $\varepsilon>0$, then there exists $c$ 
and a randomized algorithm that for an input $n$-vertex graph $G\in\GG$
outputs in polynomial time with probability at least $1/2$ a weak $r$-guidance system of maximum outdegree at most $cn^\varepsilon$.
\end{itemize}
\end{corollary}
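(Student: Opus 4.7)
The plan is to combine the existential Corollary~\ref{cor-main} with the algorithmic Theorem~\ref{thm-corlp-vc}, using Lemma~\ref{lemma-vc} to keep the approximation factor under control. The whole argument is short: Corollary~\ref{cor-main} guarantees the \emph{existence} of a weak $r$-guidance system of bounded (resp.\ subpolynomial) maximum outdegree, which is exactly the hypothesis that Theorem~\ref{thm-corlp-vc} needs in order to output such a system algorithmically, and Lemma~\ref{lemma-vc} ensures that the factor $\VC(G,r)$ appearing in Theorem~\ref{thm-corlp-vc} is absorbed into a constant depending only on $\GG$ and $r$. Note that Theorem~\ref{thm-corlp-vc} only requires $G$ to admit a weak $r$-guidance system; we do not need to construct one first, and the algorithm receives only $G$ on input.

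For structurally bounded expansion, I would first invoke Corollary~\ref{cor-main} to obtain a constant $c_0 = c_0(\GG,r)$ such that every $G \in \GG$ admits a weak $r$-guidance system of maximum outdegree at most $c_0$. Since structurally bounded expansion implies structural nowhere-density, Lemma~\ref{lemma-vc} supplies a uniform constant $d = d(\GG,r)$ with $\VC(G,r) \le d$ for every $G \in \GG$. Running the randomized algorithm of Theorem~\ref{thm-corlp-vc} on the input $G$ then returns, in polynomial time and with probability at least $1/2$, a weak $r$-guidance system of maximum outdegree $O(d \cdot c_0 \log c_0)$, which is a constant $c$ depending only on $\GG$ and $r$, as required.

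For the structurally nowhere-dense case with parameter $\varepsilon > 0$, the extra $\log c$ factor coming out of Theorem~\ref{thm-corlp-vc} must be absorbed into $n^\varepsilon$; I would handle this by the standard trick of pre-shrinking the exponent. Set $\varepsilon' = \varepsilon/2$ and apply Corollary~\ref{cor-main} with parameter $\varepsilon'$ to get a constant $c_0 = c_0(\GG, r, \varepsilon')$ so that every $G \in \GG$ admits a weak $r$-guidance system of maximum outdegree at most $c_0 n^{\varepsilon'}$. Lemma~\ref{lemma-vc} again bounds $\VC(G,r)$ by a constant $d$. Feeding $G$ into the algorithm of Theorem~\ref{thm-corlp-vc} yields, with probability at least $1/2$, a weak $r$-guidance system of maximum outdegree
$$O\bigl(d \cdot c_0 n^{\varepsilon'} \cdot \log(c_0 n^{\varepsilon'})\bigr),$$
which up to the constant $c_0 d \log c_0$ is of order $n^{\varepsilon'} \log n$. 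Since $\log n \le n^{\varepsilon'}$ for all sufficiently large $n$, this is $O(n^{2\varepsilon'}) = O(n^\varepsilon)$, and by inflating the hidden constant we cover the finitely many small values of $n$ trivially. I do not expect any genuine obstacle: the result is essentially a bookkeeping combination of the three cited statements, with the only minor subtlety being the $\log$-versus-$n^\varepsilon$ absorption in the nowhere-dense case.
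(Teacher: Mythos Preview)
Your proposal is correct and matches the paper's approach exactly: the paper derives Corollary~\ref{cor-maina} directly from Theorem~\ref{thm-corlp-vc}, Lemma~\ref{lemma-vc}, and Corollary~\ref{cor-main} without even writing out a proof. Your explicit handling of the $\varepsilon$-halving in the nowhere-dense case is a routine detail the paper leaves implicit.
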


\section{Distance domination and independence number}\label{sec-appl}

For a positive integer $r$, a set $S$ of vertices of a graph $G$ is \emph{$r$-dominating}
if every vertex of $G$ is at distance at most $r$ from $S$, and \emph{$r$-independent}
if distinct vertices of $S$ are at distance greater than $r$ from one another.
Let $\gamma_r(G)$ denote the smallest size of an $r$-dominating set in $G$,
and $\alpha_r(G)$ the largest size of an $r$-independent set in $G$.  Observe that
if $D$ is an $r$-dominating and $A$ a $2r$-independent set in $G$, then
every vertex of $D$ is at distance at most $r$ from at most one vertex of $A$,
and since every vertex of $A$ is at distance at most $r$ from $D$, we have
$|A|\le |D|$.  Consequently, $\alpha_{2r}(G)\le \gamma_r(G)$.
In general, the converse inequality does not hold and it is not even possible to bound $\gamma_r(G)$ by a function of $\alpha_{2r}(G)$;
however, as shown in~\cite{apxdomin}, if $G$ is from a class of graphs with bounded expansion,
then $\gamma_r(G)=O(\alpha_{2r}(G))$.  A small variation of the argument gives the following
stronger claim.
\begin{lemma}\label{lemma-guidac}
For all positive integers $c$ and $r$, there exists a linear-time algorithm
that, given a graph $G$ together with its $2r$-guidance system of maximum outdegree less than $c$,
returns an $r$-dominating set $D$ and a $2r$-independent set $A$ in $G$ such that
$|D|\le c^2|A|$.
\end{lemma}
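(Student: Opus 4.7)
My plan is to greedily build $A$ and $D$ in a single pass through the vertices, along the lines of the approximation algorithm in~\cite{apxdomin}. Initialize $A := D := \emptyset$ and process the vertices of $G$ in an arbitrary order. When processing a vertex $v$, check (by BFS in $G$ truncated at depth $r$) whether $d_G(v,D)\le r$: if yes, skip; if no, add $v$ to $A$ and insert into $D$ a carefully chosen set $S_v\subseteq B_{\vec{H}}(v,r)$ of size at most $c^2$. The defining property of $S_v$ that I will arrange is that for every $u\in V(G)$ with $d_G(v,u)\le 2r$, some $z\in S_v$ satisfies $d_G(u,z)\le r$.

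Granting the existence of such $S_v$, the verification is straightforward. The set $D$ is $r$-dominating since every processed vertex either enters $A\subseteq D$ or is already $r$-dominated when reached. The set $A$ is $2r$-independent because the later-processed member of any pair $u,v\in A$ with $d_G(u,v)\le 2r$ would, by the $S_v$ property, be within $G$-distance $r$ of $D$ and hence skipped — a contradiction. Finally $|D|\le c^2|A|$ holds by construction, since each element of $A$ contributes at most $c^2$ vertices to $D$.

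The heart of the proof is the combinatorial claim: \emph{for every $v\in V(G)$ there is an $S_v\subseteq B_{\vec{H}}(v,r)$ of size at most $c^2$ with the above covering property, computable in $O(1)$ time for fixed $c,r$.} For each $u\in B_G(v,2r)$, the $2r$-guidance system supplies a shortest $v$--$u$ path oriented in $\vec{H}$ toward a pivot $z_u\in B_{\vec{H}}(v,a)\cap B_{\vec{H}}(u,b)$ with $a+b=d_G(v,u)\le 2r$; since $a+b\le 2r$, at least one of $a,b$ is at most $r$, and $d_G(u,z_u)=b$ so $z_u$ $r$-dominates $u$ whenever $b\le r$. The plan is to select $S_v$ from two nested scales of out-neighborhoods in $\vec{H}$ rooted at $v$ — giving at most $1+(c-1)+(c-1)^2\le c^2$ candidates — and to argue that when the guidance pivot $z_u$ has $a>r$ one can iterate the guidance once more at $z_u$ to obtain a substitute pivot inside these two layers still $r$-close to $u$. \emph{Verifying this collapse is the main obstacle} and is precisely the "small variation" on the argument of~\cite{apxdomin} required here. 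The linear-time bound follows because the BFS-depth-$r$ tests amortize to $O(n)$ overall (the balls $B_G(a,r)$ for $a\in A$ are pairwise disjoint by $2r$-independence, so their sizes sum to at most $|V(G)|$), and enumerating $B_{\vec{H}}(a,r)$ to construct $S_a$ costs $O(c^r)=O(1)$ per added vertex.
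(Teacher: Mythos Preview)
Your framework is clean, but the central combinatorial claim is false. You assert that for every $v$ there exists $S_v\subseteq B_{\vec H}(v,r)$ of size at most $c^2$ that $r$-dominates all of $B_G(v,2r)$. Consider the complete binary tree $G$ of depth $2r$ rooted at $v$, with $\vec H$ orienting every edge toward the root. This is a $2r$-guidance system of maximum outdegree $1<c:=2$, yet $B_{\vec H}(v,r)=\{v\}$, and a leaf at depth $2r$ is at distance $2r>r$ from $v$. So no $S_v\subseteq B_{\vec H}(v,r)$ has your covering property. Worse, even dropping the containment in $B_{\vec H}(v,r)$, any set $r$-dominating $B_G(v,2r)=V(G)$ must hit every depth-$r$ subtree and hence has size at least $2^r$, which exceeds $c^2=4$ once $r\ge 3$. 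Thus no per-vertex set of size $c^2$ with your covering property exists at all, and the whole scheme of building $D$ as a union of such $S_a$ cannot give the bound $|D|\le c^2|A|$.

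The underlying issue is a case confusion. With the pivot $z_u\in B_{\vec H}(v,a)\cap B_{\vec H}(u,b)$, the benign case is $a\ge r$ (truncate the $v$-side directed path at length $r$); the hard case is $a$ \emph{small} and $b>r$, where the entire witnessed path is directed from $u$'s side and $B_{\vec H}(v,\cdot)$ carries no information about $u$. Your proposed fix (``iterate the guidance once more at $z_u$'') does nothing when $a=0$ and $z_u=v$, as in the tree above. Any correct argument must exploit the out-neighbourhoods in $\vec H$ of vertices \emph{other than} the elements of $A$; compare the proof of Theorem~\ref{thm-wguidac}, where one first produces a set $A'$ that is not $2r$-independent and then post-processes, rather than forcing $2r$-independence greedily via a local covering set. (A secondary issue: your linear-time amortisation only accounts for the BFS calls at vertices that enter $A$; the calls at skipped vertices are not covered by the disjointness of $\{B_G(a,r):a\in A\}$.)
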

Note that this implies that $\gamma_r(G)\le |D|\le c^2\gamma_r(G)$ and $\tfrac{1}{c^2}\alpha_{2r}(G)\le |A|\le\alpha_{2r}(G)$,
and thus this gives a linear-time algorithm to approximate both the $r$-domination and the $2r$-independence number of $G$
within the constant factor $c^2$.  The presence of a weak $2r$-guidance system of bounded outdegree is not by itself sufficient
to ensure a similar result.

\begin{example}
Let $\vec{K}$ be a random orientation of the clique with vertex set $\{1, \ldots,n\}$
(for each edge, choose direction uniformly independently at random).  Let $G$ be the graph obtained from $\vec{K}$ as
follows: We have $V(G)=\{v_1, \ldots, v_n, u_1,\ldots,u_n,z\}$, where for each $i\in\{1,\ldots,n\}$,
$u_i$ is adjacent to $z$, $v_i$, and all vertices $v_j$ such that $(i,j)\in E(\vec{K})$.
Let $\vec{H}$ be the partial orientation of $G$ where for $i\in\{1,\ldots,n\}$, the edge $v_iu_i$
for $i\in\{1,\ldots,n\}$ is directed towards $u_i$, and the edge $u_iz$ is directed towards $z$.
Note that for any distinct $i,j\in\{1,\ldots,n\}$, we have $(i,j)\in E(\vec{K})$ or $(j,i)\in E(\vec{K})$,
and thus the path $v_iu_iv_j$ or $v_ju_jv_i$ has the first edge directed towards its middle vertex.
Consequently, $\vec{H}$ is a weak $2$-guidance system for $G$ of maximum outdegree one.
Moreover, any $2$-independent set in $G$ contains at most one of the vertices $\{v_1,\ldots,v_n,z\}$
and at most one of the vertices $\{u_1,\ldots, u_n\}$, and thus $\alpha_2(G)\le 2$.
On the other hand, we have $\gamma_1(G)=\Omega(\log n)$: By replacing each vertex $v_i$ by $u_i$
in an optimal dominating set and possibly adding $z$, we obtain a dominating set $D$ of size at most $\gamma_1(G)+1$
containing none of the vertices $v_1$, \ldots, $v_n$, and to dominate these vertices,
observe that with high probability $D$ needs to contain $\Omega(\log n)$ of the vertices $u_1$, \ldots, $u_n$.
\end{example}

However, we can solve this issue by adding an additional obstruction.  An \emph{$(r,k)$-halfgraph}
in a graph $G$ is a sequence $u_1$, \ldots, $u_k$, $v_1$, \ldots, $v_k$ of vertices of $G$ such that
for every $i,j\in \{1,\ldots,k\}$,
\begin{itemize}
\item if $j<i$, then the distance between $u_i$ and $v_j$ in $G$ is greater than $r$, and
\item if $j\ge i$, then the distance between $u_i$ and $v_j$ in $G$ is exactly $r$.
\end{itemize}
We say that a graph is \emph{$(r,k)$-stable} if it does not contain any $(r,k)$-halfgraph.

\begin{theorem}\label{thm-wguidac}
For all positive integers $r$, $k$, and $c\ge 2$, there exists a constant $b$ and a polynomial-time algorithm
that, given an $(r,k)$-stable graph $G$ together with its weak $2r$-guidance system $\vec{H}$ of maximum outdegree at most $c$,
returns an $r$-dominating set $D$ and a $2r$-independent set $A$ in $G$ such that $|D|\le b|A|$.
\end{theorem}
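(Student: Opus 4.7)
The algorithm runs in two stages. First, compute a maximal $2r$-independent set $A$ in polynomial time by the standard greedy procedure (repeatedly pick a remaining vertex and discard its closed $2r$-ball); by maximality $A$ is $2r$-dominating, so $d_G(v,A)\le 2r$ for every $v$. Second, initialize $D:=A$ and repeat: while some $v$ has $d_G(v,D)>r$, pick such a $v$, fix any $a_v\in A$ with $d_G(v,a_v)\le 2r$ (necessarily $>r$), and use $\vec H$ together with Observation~\ref{obs-char} to find an $r$-dominator for $v$. Concretely, apply Observation~\ref{obs-char} iteratively starting from $(v,a_v)$: at each step one of the two current endpoints has an outneighbor in the $\gate{G}{\cdot}{\cdot}$ set of the other, so advance that endpoint along the outneighbor, decrementing the current distance by one; after $d_G(v,a_v)-1$ steps the pair is adjacent. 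Let $i_v$ be the total number of advances made on the $v$-side, and let $d_v$ be the vertex reached on the $v$-side trajectory after $\min(r,i_v)$ advances. Since those advances follow out-edges of $\vec H$, $d_v\in B_{\vec H}(v,r)$ and $d_G(v,d_v)\le r$, so $d_v$ $r$-dominates $v$. Add $d_v$ to $D$. Each iteration runs in polynomial time and the loop terminates after at most $|V(G)|$ iterations.

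The nontrivial part is the size bound $|D|\le b|A|$. Let $(v_1,d_1),\dots,(v_m,d_m)$ be the pairs added in order. By construction $d_G(v_i,d_i)\le r$, and $d_G(v_j,d_i)>r$ whenever $j>i$ (else $v_j$ would already have been dominated by $d_i\in D$ at step $j$). Partition the indices by trajectory endpoint: $M_a:=\{i:a_{v_i}=a\}$ for each $a\in A$. It then suffices to show that $|M_a|\le g(r,k,c)$ for every $a$ and some function $g$, for then $m\le g(r,k,c)\cdot|A|$ and $|D|\le(g(r,k,c)+1)|A|=:b|A|$.

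To bound $|M_a|$, we combine the ``upper-triangular'' inequalities $d_G(v_j,d_i)>r$ for $j>i$ (free from greedy) with a complementary ``lower-triangular'' pattern extracted via the common endpoint $a$ and the bounded outdegree of $\vec H$. Each trajectory from $v_i$ ($i\in M_a$) to $a$ consists of at most $2r-1$ advances, each choosing a side and one of at most $c$ outneighbors, so there are at most $(2c)^{2r-1}$ possible ``trajectory types'', and within a type the $a$-side vertices are fixed. Pigeonholing, any sufficiently large $|M_a|$ yields a long sub-sequence $i_1<\dots<i_N$ of indices sharing a common trajectory type. Using the shared $a$-side structure together with an Erd\H{o}s--Szekeres / Ramsey-style refinement, one extracts a further sub-sequence of length $\ge k$ on which the distance pattern $d_G(v_{i_s},d_{i_t})$ realizes an $(r,k)$-halfgraph, contradicting $(r,k)$-stability. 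Hence $|M_a|$ is bounded in terms of $r$, $k$, $c$ alone.

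The main obstacle is this final Ramsey-style extraction. The ``$>r$ half'' of the halfgraph pattern comes for free from the greedy ordering; the challenge is to use the common trajectory endpoint $a$ and the bounded outdegree of $\vec H$ to force a ``$\le r$ half'' on a selected sub-pattern, via Ramsey selection on the ordered structure of $a$-side types. This step is where the $(r,k)$-stability hypothesis is genuinely used, and is what distinguishes the theorem from Lemma~\ref{lemma-guidac}, whose stronger full-guidance hypothesis obviates any stability-type input.
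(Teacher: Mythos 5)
Your overall strategy---fix a maximal $2r$-independent set $A$ first and then grow an $r$-dominating set $D$ around it---is the reverse of the paper's. The paper greedily grows $D$ and, simultaneously, a set $A'$ of greedy ``centers'': at each step it picks an undominated vertex $x$, adds $x$ to $A'$, and adds $x$ together with its entire $\vec{H}$-out-ball of radius $r$ to $D$; only afterwards does it extract a $2r$-independent subset $A\subseteq A'$ by degeneracy/coloring of an auxiliary graph. The reason for that order is the separation property the paper calls ($\dag$): if $x\prec y$ in $A'$, then every vertex reachable from $x$ by a directed $\vec{H}$-path of length at most $r$ is at distance greater than $r$ from $y$. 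Your greedy loop only gives you $d_G(v_j,d_i)>r$ for $j>i$ and $d_G(v_j,A)>r$; it gives no control over $d_G(v_j,v_i)$ nor over the interaction of later $v_j$'s with the $\vec{H}$-out-neighbourhoods of earlier $v_i$'s, which is exactly what ($\dag$) supplies.

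This difference is fatal precisely at the step you flag as incomplete. To realize an $(r,k)$-halfgraph among the indices in $M_a$ you need both halves: the ``$>r$'' half (which greedy does give you) and a ``$=r$'' half, i.e.\ $d_G(v_{i_s},d_{i_t})=r$ for $t\ge s$. Fixing a common trajectory type only pins down the $a$-side vertices; $d_{i_t}$ lies on the $v_{i_t}$-side, so the distances from $v_{i_s}$ to $d_{i_t}$ for $s\ne t$ (and the mutual distances among the $d_{i_t}$'s) remain completely uncontrolled. There is no pigeonhole or Ramsey refinement on trajectory types that forces the lower triangle to equal $r$; nothing in the hypotheses constrains those distances. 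So this is a genuine gap, not an elided routine step, and I do not see how to close it without changing the construction.

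The paper avoids the problem by never trying to deduce the ``$=r$'' half: it \emph{builds} it as a maintained invariant inside an inductive object (a \emph{$p$-halfgraph extension}), whose $u_i$'s are drawn from $A'$ and whose $v_i$'s are endpoints of $\vec{H}$-directed paths of length exactly $r$ from those $u_i$'s; property ($\dag$) then yields the ``$>r$'' half essentially for free. To salvage your write-up you would need to switch to the joint greedy construction of $D$ and $A'$ (so that ($\dag$) is available) and replace the Ramsey heuristic with the paper's inductive halfgraph-extension argument, or else produce a fundamentally new mechanism for forcing the lower triangle.

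Two smaller remarks. First, your set $A$ being \emph{maximal} rather than maximum is not the issue---the paper's extracted $A$ is also not maximum---but it does mean you cannot lean on any extremal property of $A$ beyond maximality. Second, even if $|M_a|$ were bounded, you would also need $A'$-type back-degree control (the second half of the paper's argument, bounding how many earlier centers can be within distance $2r$ of a given one) if you later tried to thin your dominators into an independent set; your plan skips this because you fixed $A$ up front, but that choice is exactly what removed the structure needed for the halfgraph step.
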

\begin{proof}
Let $D$ and $A'$ be the sets of vertices of $G$ obtained as follows.  We initialize $D\colonequals \emptyset$ and $A'\colonequals \emptyset$.
As long as $D$ is not an $r$-dominating set, we choose a vertex $x$ at distance greater than $r$ from $D$ arbitrarily,
we add $x$ to $A'$, and we add $x$ and all vertices reachable in $\vec{H}$ from $x$ by directed paths of length at most $r$ to $D$.
At the end, $D$ is an $r$-dominating set and $|D|\le c^{r+1}|A'|$.

Let $\prec$ be the linear ordering on vertices of $A'$ such that $x\prec y$ when $x$ was added to $A'$ before $y$.
The algorithm above enforces the following property ($\dag$): If $x\prec y$, then every vertex reachable from $x$
by a directed path in $\vec{H}$ of length at most $r$ is at distance greater than $r$ from $y$.

Let $\sigma(1)=0$ and for $p=2,\ldots,k$, let $\sigma(p)=c^{2r+1}(\sigma(p-1)+1)$.
The set $A'$ is not necessarily $2r$-independent, however it has the following property:
If $S\subseteq A'$ consists of vertices pairwise at distance at most $2r$ from one another, then $|S|\le \sigma(k+1)$.
To prove this, we will show a stronger claim.  For a positive integer $p\le k+1$, a \emph{$p$-halfgraph extension of $S$}
is a sequence $u_p$, \ldots, $u_k$, $v_p$, \ldots, $v_k$ of vertices of $G$
such that for $i=p,\ldots,k$,
\begin{itemize}
\item[(i)] $u_i\in A'$, $u_i\prec u_{i+1}$ if $i<k$, and $s\prec u_i$ for every $s\in S$.
\item[(ii)] $\vec{H}$ contains a directed path from $u_i$ to $v_i$ of length exactly $r$,
\item[(iii)] the distance between $u_i$ and $v_j$ in $G$ is exactly $r$ for every $j\in\{i,\ldots,k\}$, and
\item[(iv)] the distance between $v_i$ and $s$ is exactly $r$ for every $s\in S$.
\end{itemize}
We will prove by induction on $p$ that if there exists a $p$-halfgraph extension of $S$, then $|S|\le \sigma(p)$;
$|S|\le \sigma(k+1)$ then follows, since an empty sequence trivially forms a $(k+1)$-halfgraph extension of $S$.
For $p=1$, note that if $1\le j<i\le k$, then $u_j\prec u_i$ by (i), and 
(ii) and ($\dag$) imply that the distance between $v_j$ and $u_i$ in $G$ is greater
than $r$. Together with (iii), this implies that $G$ contains an $(r,k)$-halfgraph,
which is a contradiction.  That is, the case $p=1$ can never occur and the conclusion $|S|\le \sigma(1)$ holds trivially.

Suppose now that $p\ge 2$ and that the claim holds for $p-1$.  If $S=\emptyset$, then $|S|\le \sigma(p)$ holds.  Otherwise,
let $u_{p-1}$ be the last vertex of $S$ in the ordering $\prec$.  Since the distance between any vertices of $S$
is at most $2r$ and $\vec{H}$ is a weak $2r$-guidance system, for each $s\in S\setminus\{u_{p-1}\}$, there
exists a shortest path $P_s$ in $G$ between $u_{p-1}$ and $s$ directed in $\vec{H}$ towards one of its edges.
Let $Q_s$ be the longest initial segment of $P_s$ directed away from $u_{p-1}$.  By the choice of $u_{p-1}$,
we have $s\prec u_{p-1}$, and thus ($\dag$) implies that the part of $P_s$ directed away from $s$ has length
at most $r-1$, and consequently $|E(Q_s)|\ge r$.

For any directed path $Q$ in $\vec{H}$ starting in $u_{p-1}$ of length between $r$ and $2r$,
let $S_Q$ be the set of vertices $s\in S\setminus \{u_{p-1}\}$ such that $Q_s=Q$.  The preceding argument shows that $S$ is the union of the
sets $S_Q$ over all such paths, and thus we can fix $Q$ such that $|S_Q|\ge |S|/c^{2r+1}$.  If $|S_Q|\le 1$,
then $|S|\le 2^{2r+1}\le \sigma(p)$, as required.  Hence, suppose that $|S_Q|\ge 2$.
Let $v_{p-1}$ be the final vertex of $Q$ and let $s_Q$ be the first vertex of $S_Q$ in the ordering $\prec$.
Consider any vertex $s'\in S_Q\setminus\{s_Q\}$.  Note that $G$ contains a path of length
at most $2r-|E(Q)|\le r$ from $s_Q$ to $v_{p-1}$ with all but possibly the last edge directed away from $s_Q$ in $\vec{H}$,
and since $s_Q\prec s'$ by the choice of $s_Q$, ($\dag$) implies that $s'$ is at distance at least $r$ from $v_{p-1}$.
Since $s'$ is also at distance at most $2r$ from $u_{p-1}$ through a shortest path whose initial segment is $Q$,
$s'$ is at distance at most $2r-|E(Q)|\le r$ from $v_{p-1}$.  We conclude that $|E(Q)|=r$
and all vertices of $S_Q\setminus \{s_Q\}$ are at distance exactly $r$ from $v_{p-1}$.
Therefore, $u_{p-1}$, \ldots, $u_k$, $v_{p-1}$, \ldots, $v_k$ is a $(p-1)$-halfgraph extension
of $S_Q\setminus\{s_Q\}$, and $|S_Q\setminus\{s_Q\}|\le \sigma(p-1)$ by the induction hypothesis.  But then
$|S|\le c^{2r+1}|S_Q|\le c^{2r+1}(\sigma(p-1)+1)=\sigma(p)$.

Let $F$ be the auxiliary graph with $V(F)=A'$ and with distinct vertices $u,v\in A'$ adjacent
if the distance between them in $G$ is at most $2r$.  We claim that each vertex of $F$ has
at most $c^{2r+1}\sigma(k+1)$ neighbors that precede it in the ordering $\prec$.  Indeed, let
$N$ be the set of such neighbors of a vertex $u\in A'$, and for each directed path $Q$ in $\vec{H}$
starting in $u$ of length between $r$ and $2r$,
let $N_Q$ consist of the vertices $v\in N$ such that $Q$ is the maximal initial directed
segment of a shortest path from $u$ to $v$ in $G$ which is directed towards one of its edges by $\vec{H}$.
As in the preceding part of the proof, note that ($\dag$) and the fact that $\vec{H}$ is
a weak $2r$-guidance system implies that $N$ is the union of the sets $N_Q$ over such paths,
and thus we can fix such a path $Q$ for which $|N_Q|\ge |N|/c^{2r+1}$.
However, the vertices of $N_Q$ are at distance at most $2r-|E(Q)|\le r$ from the
final vertex of $Q$, and thus they are pairwise at distance at most $2r$ from one another.
Consequently, $|N_Q|\le \sigma(k+1)$, and $|N|\le c^{2r+1}\sigma(k+1)$.

We conclude that $F$ is $c^{2r+1}\sigma(k+1)$-degenerate, and thus it is $(c^{2r+1}\sigma(k+1)+1)$-colorable
and has an independent set $A$ of size at least
$$\frac{|A'|}{c^{2r+1}\sigma(k+1)+1}\ge \frac{|D|}{c^{r+1}(c^{2r+1}\sigma(k+1)+1)}.$$
By the construction of $F$, $A$ is a $2r$-independent set in $G$.
Therefore, the theorem holds with $b=c^{r+1}(c^{2r+1}\sigma(k+1)+1)$.
\end{proof}

By the results of Adler and Adler~\cite{adl}, for any structurally nowhere-dense graph class $\GG$
and every $r$, there exists $k$ so that all graphs in $\GG$ are $(r,k)$-stable.
In combination with Corollary~\ref{cor-maina}, we have the following consequence.

\begin{corollary}\label{cor-apxbe}
For any class $\GG$ with structurally bounded expansion and for any positive integer $r$,
there exists a constant $b$ and a polynomial-time randomized algorithm that, given a graph $G\in\GG$
with probability at least $1/2$ returns an $r$-dominating set $D$ and a $2r$-independent set $A$ in $G$ such that $|D|\le b|A|$.
\end{corollary}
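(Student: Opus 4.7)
The proof is essentially a direct assembly of three ingredients already established (or cited) in the paper, and I would write it as follows.

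First, I would fix the stability parameter. Since a class with structurally bounded expansion is in particular structurally nowhere-dense, the Adler--Adler result (cited in the sentence immediately preceding the statement of Corollary~\ref{cor-apxbe}) supplies an integer $k=k(\GG,r)$ such that every graph $G\in\GG$ is $(r,k)$-stable. This $k$ depends only on $\GG$ and $r$, and no further computation on $G$ is needed to know that $G$ is $(r,k)$-stable.

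Second, I would invoke Corollary~\ref{cor-maina} with the parameter $2r$ in place of $r$ to obtain a constant $c=c(\GG,r)\ge 2$ and a polynomial-time randomized algorithm which, on input $G\in\GG$, outputs with probability at least $1/2$ a weak $2r$-guidance system $\vec{H}$ in $G$ of maximum outdegree at most $c$. Condition on this event; henceforth $\vec{H}$ is such a guidance system.

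Third, I would feed the pair $(G,\vec{H})$ into the (deterministic) polynomial-time algorithm of Theorem~\ref{thm-wguidac} applied with the constants $r$, $k$, and $c$ from the previous two steps. Since $G$ is $(r,k)$-stable and $\vec{H}$ is a weak $2r$-guidance system of maximum outdegree at most $c$, the algorithm returns an $r$-dominating set $D$ and a $2r$-independent set $A$ in $G$ with $|D|\le b|A|$, where $b=b(c,r,k)$ is the constant from that theorem; since $c$ and $k$ depend only on $\GG$ and $r$, so does $b$. The overall procedure is polynomial time, and the only source of randomness is the call to the algorithm of Corollary~\ref{cor-maina}, so the overall success probability is at least $1/2$.

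There is no real obstacle: the entire content of the corollary is packaged in the three prior results, and all that is required is to check that the constants line up (the max-outdegree bound $c$ promised by Corollary~\ref{cor-maina} is a constant in the structurally-bounded-expansion case, which is exactly the hypothesis of Theorem~\ref{thm-wguidac}) and that the stability parameter supplied by Adler--Adler matches the $(r,k)$-stability hypothesis of Theorem~\ref{thm-wguidac}. Both checks are immediate.
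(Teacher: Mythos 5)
Your proposal is correct and follows exactly the route the paper intends: the paper proves Corollary~\ref{cor-apxbe} simply by citing the Adler--Adler stability bound for structurally nowhere-dense (hence structurally bounded expansion) classes and then combining Corollary~\ref{cor-maina} (to produce the weak $2r$-guidance system of bounded outdegree with probability $\ge 1/2$) with Theorem~\ref{thm-wguidac}. Your write-up spells out this chain and verifies that the parameters $k$ and $c$ are constants depending only on $\GG$ and $r$, which is precisely the content left implicit in the paper.
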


\section{Graph classes without bounded outdegree weak guidance systems}\label{sec-lb}

To better understand obstructions to the existence of weak $r$-guidance systems of bounded maximum outdegree,
it is natural to consider the dual of the linear program from the proof of Lemma~\ref{lem-sollp}, which can be reformulated as follows.
For $uz\in E(G)$, let $R_r(u,z)$ be the set of vertices $v\in V(G)$ such that the distance between $u$ and $v$ is between $2$ and $r$
and $z$ lies on a shortest path from $u$ to $v$ in $G$; i.e., $z\in \gate{G}{u}{v}$.

\begin{lemma}\label{lemma-dual}
Let $G$ be a graph and let $r$ be a positive integer.  Let $c$ be the solution to the following
optimization problem:
\begin{align*}
y_{uv}&\ge 0&\text{ for every $u,v\in V(G)$ at distance between $2$ and $r$}\\
x_u&=\max_{z:uz\in E(G)} \sum_{v\in R_r(u,z)} y_{uv}&\text{ for every $u\in V(G)$}\\
\text{maximize }&\frac{\sum_{uv:2\le d_G(u,v)\le r} y_{uv}}{\sum_{v\in V(G)} x_v}
\end{align*}
Then every fractional or weak $r$-guidance system in $G$ has maximum outdegree at least $c$.
\end{lemma}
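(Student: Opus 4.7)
The plan is to recognize the optimization problem as (essentially) the linear-programming dual of the LP from the proof of Lemma~\ref{lem-sollp}, and to establish the bound by a direct weak-duality calculation. First, by Observation~\ref{obs-tofra}, every weak $r$-guidance system of maximum outdegree $c'$ yields a fractional $r$-guidance system of the same maximum outdegree, so it suffices to treat the fractional case.

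So I would fix a fractional $r$-guidance system $p$ of maximum outdegree $c'$ and any feasible point $(y,x)$ of the optimization problem; the defining equation for $x_u$ should be read as the family of linear inequalities $x_u\ge \sum_{v\in R_r(u,z)} y_{uv}$, one per edge $uz\in E(G)$. The key step is to multiply the defining inequality~(\ref{eq-proba}) by $y_{uv}\ge 0$ for each pair $u,v$ at distance between $2$ and $r$ and sum. The left-hand side becomes $\sum_{uv:\,2\le d_G(u,v)\le r} y_{uv}$, and after swapping the order of summation and using the reindexing $z\in \gate{G}{u}{v}\Leftrightarrow v\in R_r(u,z)$, the right-hand side becomes
\[
\sum_{u\in V(G)}\sum_{z:\,uz\in E(G)} p(u,z)\sum_{v\in R_r(u,z)} y_{uv}.
\]
The innermost sum is at most $x_u$ by the definition of $x_u$, so the whole expression is bounded by $\sum_u x_u\cdot d^+_p(u)\le c'\sum_u x_u$. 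Rearranging gives $c'\ge \sum_{uv} y_{uv}/\sum_u x_u$, and taking the supremum over feasible $(y,x)$ yields $c'\ge c$.

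The only nontrivial point is bookkeeping: one must interpret $x_u=\max_{z}\sum_{v\in R_r(u,z)} y_{uv}$ as the correct system of linear inequalities and track the symmetry between $u$ and $v$ when swapping the order of summation (both the term based at $u$ and the term based at $v$ in the guidance constraint contribute to the outdegree bound at their respective endpoints). Once the indices are handled carefully the calculation reduces to the weak direction of LP duality and there are no further technical obstacles.
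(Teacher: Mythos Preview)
Your proposal is correct and takes essentially the same approach as the paper: the paper writes down the dual of the LP from Lemma~\ref{lem-sollp} and appeals to LP duality, while you spell out the weak-duality inequality explicitly, which amounts to the same computation. The only cosmetic slip is that you swap ``left-hand side'' and ``right-hand side'' relative to how (\ref{eq-proba}) is displayed, but the inequality and the bookkeeping are handled correctly.
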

\begin{proof}
The dual of the linear program from the proof of Lemma~\ref{lem-sollp} is
\begin{align*}
x_u&\ge 0&\text{ for every $u\in V(G)$}\\
y_{uv}&\ge 0&\text{ for every $u,v\in V(G)$ at distance between $2$ and $r$}\\
\sum_{u\in V(G)} x_u&=1\\
\sum_{v\in R_r(u,z)} y_{uv}&\le x_u&\text{ for every $(u,z)$ s.t. $uz\in E(G)$}\\
\text{maximize }&\sum_{uv:2\le d_G(u,v)\le r} y_{uv}
\end{align*}
This is equivalent to the optimization problem from the statement of the lemma.
Hence, its solution $c$ provides a lower bound on the maximum outdegree of a fractional $r$-guidance system in $G$,
and by Observation~\ref{obs-tofra} also a lower bound on the maximum outdegree of a weak $r$-guidance system in $G$.
\end{proof}

As an example, this easily shows that no good weak guidance systems exist for graphs of girth at least five and large maximum average degree
(the \emph{maximum average degree} of a graph is the maximum of the average degrees of its subgraphs).
\begin{lemma}\label{lemma-girth}
Let $G$ be a graph of girth at least five and maximum average degree $d\ge 2$.
Every fractional or weak $2$-guidance system in $G$ has maximum outdegree at least $d/2$.
\end{lemma}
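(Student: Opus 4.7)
The plan is to deduce the lemma from Lemma~\ref{lemma-dual} by exhibiting a feasible pair $(x,y)$ in its dual optimization problem (with $r=2$) whose objective value is at least $d/2$. My first step is to pass from $G$ to a subgraph $G'\subseteq G$ with average degree at least $d$ and minimum degree at least $2$; the standard preprocessing of iteratively deleting degree-$0$ and degree-$1$ vertices preserves $2|E|\ge d|V|$ whenever $d\ge 2$, so such a $G'$ exists.

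The crux is the girth hypothesis. Since $G$ has girth at least five, any two distinct vertices at distance two in $G$ share a unique common neighbor and no two neighbors of a common vertex are themselves adjacent; in particular, $R_2(u,z)=N_G(z)\setminus\{u\}$ for every edge $uz\in E(G)$, and every member of this set is at distance exactly two from $u$. I would then set $y_{\{u,v\}}:=1/(\deg_{G'}(w)-1)$ for every length-two path $uwv$ in $G'$ with $u\ne v$ (where $w$ is the unique common neighbor, and $\deg_{G'}(w)\ge 2$ makes this well-defined), and $y_{\{u,v\}}:=0$ on every other distance-$2$ pair.

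The weights are calibrated precisely so that, for $u\in V(G')$ and $z\in N_{G'}(u)$, the inner sum $\sum_{v\in R_2(u,z)}y_{\{u,v\}}$ equals $1$, because it contains $\deg_{G'}(z)-1$ nonzero terms each equal to $1/(\deg_{G'}(z)-1)$. For $z\in N_G(u)\setminus N_{G'}(u)$ the sum vanishes, so $x_u=1$ on $V(G')$ and $x_u=0$ elsewhere, making the denominator of the dual objective equal to $|V(G')|$. Grouping $y$-contributions by common neighbor, the numerator is $\sum_{w\in V(G')}\binom{\deg_{G'}(w)}{2}/(\deg_{G'}(w)-1)=\tfrac12\sum_{w\in V(G')}\deg_{G'}(w)=|E(G')|$, so the ratio is $|E(G')|/|V(G')|\ge d/2$, as required. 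The main point to get right is the choice of weights $y_{\{u,v\}}=1/(\deg_{G'}(w)-1)$, which is exactly what makes $x_u$ collapse to a constant on $V(G')$ and reduces the objective to the average degree of $G'$; with this calibration in place, the remaining bookkeeping is routine.
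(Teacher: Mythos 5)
Your proof is correct and follows essentially the same route as the paper: both pass to a minimum-degree-$\ge 2$ (induced) subgraph of average degree $\ge d$, weight each distance-two pair $\{u,v\}$ with unique common neighbor $w$ by $1/(\deg w - 1)$ so that the inner sums $\sum_{v\in R_2(u,z)}y_{uv}$ collapse to $1$, and then read off $|E|/|V|\ge d/2$ from the dual objective of Lemma~\ref{lemma-dual}. The only cosmetic difference is that the paper obtains the subgraph as $G[Z]$ for a minimal $Z$ realizing average degree $d$ rather than by iterated low-degree deletion.
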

\begin{proof}
Let $Z\subseteq V(G)$ be a smallest set such that $G[Z]$ has average degree~$d$.  Since $d\ge 2$,
every vertex of $G[Z]$ has degree at least two, since deleting vertices of degree at most one would
not decrease the average degree.

Since $G$ has girth at least $5$, any vertices $u,v\in Z$ at distance two in $G[Z]$ have a unique common neighbor $z\in Z$;
we define
$$y_{uv}=\frac{1}{\deg_{G[Z]} z-1}.$$
For any pair $u,v\in V(G)$ of vertices at distance two in $G$ such that $\{u,v\}\not\subseteq Z$ or the common neighbor of $u$ and $v$
does not belong to $Z$, we define $y_{uv}=0$.  For any edge $uz$ of $G$, if $\{u,z\}\subseteq Z$, then
we have $|R_2(u,z)\cap Z|=\deg_{G[Z]} z-1$, and thus
$$\sum_{v\in R_2(u,z)} y_{uv}=1;$$
while if $\{u,z\}\not\subseteq Z$, then 
$$\sum_{v\in R_2(u,z)} y_{uv}=0.$$
Therefore,
$$x_u=\max_{z:uz\in E(G)} \sum_{v\in R_2(u,z)} y_{uv}=1$$
for $u\in Z$ and $x_u=0$ for $u\in V(G)\setminus Z$, and
$$\frac{\sum_{uv:d_G(u,v)=2} y_{uv}}{\sum_{u\in V(G)} x_u}=\frac{\frac{1}{2}\cdot \sum_{u\in Z} \sum_{z:uz\in E(G[Z])} \sum_{v\in R_2(u,z)} y_{uv}}{|Z|}=\frac{|E(G[Z])|}{|Z|}=d/2.$$
The claim now follows from Lemma~\ref{lemma-dual}.
\end{proof}

This shows that weak guidance systems can be qualitatively different from guidance systems only in graphs of girth at most four.
\begin{corollary}
Let $G$ be a graph of girth at least five.  For any $r\ge 2$, if $G$ admits a weak $r$-guidance system of maximum outdegree at most $c$,
then $G$ also admits an $r$-guidance system of maximum outdegree at most $3c$.
\end{corollary}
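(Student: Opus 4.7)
The plan is to combine Lemma~\ref{lemma-girth} with Observation~\ref{obs-degen} in a short chain. The intuition is that Lemma~\ref{lemma-girth} shows that girth-$\ge 5$ graphs admitting a small-outdegree weak $2$-guidance system must be sparse (hence have small degeneracy), while Observation~\ref{obs-degen} converts weak guidance systems into full guidance systems on degenerate graphs at a small additive cost in outdegree. The target bound $3c=c+2c$ strongly suggests exactly this decomposition, with the ``$c$'' coming from the given weak system and the ``$2c$'' from the degeneracy.

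First, I will observe that since $r\ge 2$, the given weak $r$-guidance system $\vec{H}$ is automatically a weak $2$-guidance system: the defining condition for pairs at distance $\ell$ with $2\le\ell\le r$ specializes, when $\ell=2$, to exactly the condition defining a weak $2$-guidance system. Applying Lemma~\ref{lemma-girth} then yields that the maximum average degree $d$ of $G$ satisfies $d/2\le c$, provided $d\ge 2$. The remaining case $d<2$ forces $G$ to be a forest, so trivially $d<2\le 2c$ in the non-trivial case $c\ge 1$. Either way, $d\le 2c$.

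Next, since every subgraph of $G$ has average degree at most $2c$, every nonempty subgraph contains a vertex of degree at most $2c$; iterating shows that $G$ is $2c$-degenerate. Finally, Observation~\ref{obs-degen}, applied to the weak $r$-guidance system $\vec{H}$ and the degeneracy bound $t=2c$, produces an $r$-guidance system of $G$ of maximum outdegree at most $c+2c=3c$, as required. There is no genuine obstacle here: the entire argument is a short chain of results that are already in hand, and the only care needed is in noticing that a weak $r$-guidance system is ipso facto a weak $2$-guidance system and in dispatching the forest case $d<2$ separately.
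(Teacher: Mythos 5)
Your proof is correct and follows exactly the same route as the paper's own argument: apply Lemma~\ref{lemma-girth} (via the observation that a weak $r$-guidance system is in particular a weak $2$-guidance system) to bound the maximum average degree by $2c$, deduce $2c$-degeneracy, and finish with Observation~\ref{obs-degen}. The only difference is that you spell out the trivial case $d<2$, which the paper leaves implicit.
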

\begin{proof}
By Lemma~\ref{lemma-girth}, $G$ has maximum average degree at most $2c$, and thus $G$ is $2c$-degenerate.
The claim then follows by Observation~\ref{obs-degen}.
\end{proof}

Next, we consider the class of \emph{split graphs}.  A graph $G$ is a split graph if there exists a partition $(A,B)$ of its
vertex set where $A$ is a clique and $B$ is an independent set.
\begin{lemma}\label{lemma-split}
For every $n$ such that $n$ is a power of a prime, there exists a split graph $G_n$ with
$2(n^2+n+1)$ vertices such that every fractional or weak $2$-guidance system in $G$ has maximum outdegree at least $(n+1)/2$.
\end{lemma}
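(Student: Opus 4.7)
The plan is to build $G_n$ from the projective plane of order $n$, which exists whenever $n$ is a prime power. This plane has $n^2+n+1$ points and $n^2+n+1$ lines, each line contains $n+1$ points, each point lies on $n+1$ lines, and any two distinct points are joined by a unique common line. I take $B$ to be the set of points (the independent side), $A$ to be the set of lines (the clique side), and insert an edge between $p\in B$ and $\ell\in A$ exactly when $p$ is incident with $\ell$. This yields a split graph on $2(n^2+n+1)$ vertices in which every two distinct points $p,q\in B$ are at distance exactly two, with $\gate{G_n}{p}{q}=\gate{G_n}{q}{p}=\{\ell_{pq}\}$, where $\ell_{pq}$ is the unique line through $p$ and $q$.

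To bound the outdegree from below I will apply Lemma~\ref{lemma-dual} with the feasible solution $y_{pq}=1$ for every pair of distinct points $p,q\in B$, and $y_{uv}=0$ on every other pair at distance two. Then every variable $y_{\ell v}$ with $\ell\in A$ is zero, so $x_\ell=0$. For a point $p\in B$ and a line $\ell$ through $p$, the set $R_2(p,\ell)\cap B$ consists of exactly the $n$ other points of $\ell$ (since $\gate{G_n}{p}{q}=\{\ell_{pq}\}$, so $q\in R_2(p,\ell)$ iff $\ell=\ell_{pq}$, i.e.\ $q\in\ell\setminus\{p\}$); the remaining elements of $R_2(p,\ell)$ lie in $A$ and contribute nothing. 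Hence $x_p=n$ for every $p\in B$.

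Substituting into Lemma~\ref{lemma-dual} gives
$$\frac{\sum_{uv:\,d_{G_n}(u,v)=2} y_{uv}}{\sum_{u\in V(G_n)} x_u}=\frac{\binom{n^2+n+1}{2}}{n(n^2+n+1)}=\frac{n+1}{2},$$
so every fractional (hence every weak) $2$-guidance system in $G_n$ has maximum outdegree at least $(n+1)/2$, as claimed.

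The main obstacle is the bookkeeping around $R_2(p,\ell)$: I must verify that the distance-two pairs of the form $(p,m)$ with $p\in B$ and $m\in A$ a line not through $p$ do in fact appear in $R_2(p,\ell)$ for every $\ell\ni p$ (because $A$ is a clique, any line through $p$ extends to a shortest $p$--$m$ path), but are assigned weight zero and therefore do not inflate $x_p$. I also need the fact that $A$ contributes nothing to the denominator because no two distinct lines are at distance more than one. Both points are immediate from the split-graph structure, after which the computation goes through cleanly; the projective plane is the right vehicle because its doubly regular incidence is precisely what makes the $y_{pq}=1$ assignment uniform across all points and lets the ratio collapse to $(n+1)/2$.
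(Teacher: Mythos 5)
Your proof is correct and takes essentially the same route as the paper: build the split graph from a projective plane of order $n$ and apply Lemma~\ref{lemma-dual} with unit $y$-weights on pairs of vertices in the independent side. The only cosmetic difference is that you place the points of the plane on the independent side and the lines on the clique side, whereas the paper does the reverse; by the self-duality of projective planes this makes no difference to the computation.
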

\begin{proof}
It is well-known that whenever $n$ is a power of prime, there exists a finite projective plane $B$ of order $n$, i.e.,
a system of $n^2+n+1$ subsets of the set $A=[n^2+n+1]$ with the property that
\begin{itemize}
\item[(i)] $|p_1\cap p_2|=1$ for every distinct $p_1,p_2\in B$ and
\item[(ii)] every element of $A$ belongs to exactly $n+1$ sets from $B$.
\end{itemize}
Let $G_n$ be the graph with vertex set $A\cup B$, vertices in $A$ forming a clique, vertices in $B$ forming an independent
set, and vertices $z\in A$ and $p\in B$ adjacent iff $z\in p$.  Note that distinct vertices of $B$ are at distance two
in $G_n$ by (i), and that for each $p\in B$ and $z\in p$, $|R_2(p,z)\cap B|=n$ by (ii).  Therefore, defining $y_{p_1p_2}=1$
for any distinct $p_1,p_2\in B$ and $y_{uv}=0$ for any other pair $u,v$ of vertices of $G_n$,
we have
$$x_p=\max_{z:z\in p} \sum_{p'\in R_2(p,z)} y_{pp'}=n$$
for $p\in B$ and $x_z=0$ for $z\in A$.
Therefore,
$$\frac{\sum_{uv:d_{G_n}(u,v)=2} y_{uv}}{\sum_{u\in V(G_n)} x_u}=\frac{\binom{|B|}{2}}{|B|n}=\frac{|B|-1}{2n}=\frac{n+1}{2}.$$
The claim now follows from Lemma~\ref{lemma-dual}.
\end{proof}
Let us remark that split graphs are a special case of \emph{chordal graphs} (graphs with no induced cycle of length at least four),
and thus chordal graphs do not in general admit weak guidance systems of bounded maximum outdegree.

Finally, let us consider the graphs of bounded \emph{clique-width}.  A \emph{$k$-labeled graph} is a graph
where each vertex is assigned a label from $[k]$ (several vertices can have the same label,
and not all labels must be used).  A $k$-labeled graph $G$ is \emph{constructible} if it is obtained
by a finite number of applications of the following rules:
\begin{itemize}
\item $|V(G)|=1$, or
\item $G$ is the disjoint union of at least two constructible $k$-labeled graphs, or
\item $G$ is obtained from a constructible $k$-labeled graph $G'$ by, for some $i,j\in[k]$,
changing all labels $i$ to $j$, or
\item $G$ is obtained from a constructible $k$-labeled graph $G'$ by, for some $i,j\in[k]$,
adding all edges between vertices with labels $i$ and $j$.
\end{itemize}
We say a graph has \emph{clique-width} at most $k$ if we can assign labels to its vertices so that the resulting $k$-labeled graph is constructible.
Graphs with bounded shrub-depth also have bounded clique-width (or equivalently, bounded rank-width); indeed,
they can be viewed as graphs of bounded clique-width where the corresponding operation tree has bounded depth.
It is natural to ask whether Lemma~\ref{lemma-orshrub} extends to graphs of bounded clique-width.
We show that this is not the case, even for weak $2$-guidance systems.
\begin{lemma}\label{lemma-cwbad}
For every $d\ge 0$ and $a\ge \max(2,2d-1)$, there exists a constructible 6-labeled graph $H_{d,a}$
with half its vertices labeled $1$ and half its vertices labeled $2$,
such that
\begin{itemize}
\item[(i)] $|V(H_{d,a})|\le 8a^d-6$ and
\item[(ii)] for every partial orientation $\vec{G}$ of $H_{d,a}$ of maximum outdegree less than $d$, there
exist vertices $u$ and $v$ of labels $1$ and $2$, respectively, at distance exactly two, such that
for every common neighbor $x$ of $u$ and $v$, we have $(u,x),(v,x)\not\in E(\vec{G})$.
\end{itemize}
\end{lemma}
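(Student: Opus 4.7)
The plan is to construct $H_{d,a}$ by induction on $d$. The base case is $d=0$: let $H_{0,a}$ consist of two isolated vertices, one labeled $1$ and one labeled $2$. Condition (i) becomes the tight inequality $2 \le 8-6$, and condition (ii) is vacuous since no partial orientation has negative maximum outdegree. For $d \ge 1$, I would form $H_{d,a}$ by taking $a$ disjoint copies $H^{(1)}, \ldots, H^{(a)}$ of $H_{d-1,a}$ and gluing them together with a small gadget of at most $6(a-1)$ new vertices. Two of these new vertices, $u^*$ (label $1$) and $v^*$ (label $2$), are the distinguished pair at distance exactly two in $H_{d,a}$, with one common neighbor sitting inside each copy. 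Auxiliary vertices keep the label balance and serve the purpose discussed below. Six clique-width labels suffice for the construction: labels $1,2$ are the final labels, and labels $3,4,5,6$ are used as working labels that temporarily mark the interface vertices of each copy during assembly, after which everything is relabeled to $\{1,2\}$.

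Condition (i) follows from the recurrence $V_d \le a V_{d-1} + 6(a-1)$ with $V_0 = 2$; telescoping yields $V_d + 6 \le a^d(V_0 + 6) = 8a^d$. For condition (ii), I would argue by induction on $d$. Given any partial orientation $\vec G$ of $H_{d,a}$ with max outdegree less than $d$, consider the pair $(u^*, v^*)$. By Observation~\ref{obs-char}, either it is bad in $\vec G$ (and we are done) or one of $u^*, v^*$ has an out-edge to a common neighbor, which lies inside some copy. Since each of $u^*, v^*$ has outdegree at most $d-1$, together they send out-edges into at most $2(d-1) = 2d-2$ copies; because $a \ge 2d-1$, there is an \emph{untouched} copy $H^{(i^*)}$ receiving no out-edge from $u^*$ or $v^*$. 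Applying the inductive hypothesis to the restriction of $\vec G$ to $H^{(i^*)}$ would produce a bad pair $(u,v)$ inside $H^{(i^*)}$; provided the gadget is arranged so that no common neighbor of an interior pair escapes its copy, $(u,v)$ then remains bad in $H_{d,a}$.

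The main obstacle I anticipate is the standard outdegree gap: the straightforward restriction of $\vec G$ to $H^{(i^*)}$ only guarantees max outdegree at most $d-1$, whereas the inductive hypothesis for $H_{d-1,a}$ requires max outdegree strictly less than $d-1$. The role of the $6(a-1)$ auxiliary gadget vertices is exactly to close this gap: the idea is to use them to introduce additional distance-two pairs of labels $(1,2)$ whose common neighbors lie inside the gadget (outside any individual copy), so that satisfying the $\star$-condition for these new pairs forces certain vertices of the untouched copy to spend an outdegree slot on a gadget edge, thereby reducing their effective in-copy outdegree by one. Designing the gadget so that this forcing propagates correctly, so that the new vertex count stays within $6(a-1)$ and within six labels, and so that the extra common neighbors introduced by the gadget do not accidentally spoil the bad-pair property of the pair produced by induction, is the technically delicate step I expect to be the main challenge.
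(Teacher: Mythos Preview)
Your inductive skeleton (take $a$ disjoint copies, attach a small gadget, use $a\ge 2d-1$ and the outdegree bound to locate an untouched copy) matches the paper's, and you have correctly put your finger on the crux: the restriction of $\vec G$ to the untouched copy has maximum outdegree $<d$, not $<d-1$, so the inductive hypothesis cannot be invoked on it directly.

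Your proposed way of closing this gap, however, does not work as described. To apply the inductive hypothesis to the restriction you would need \emph{every} vertex of the untouched copy to have already spent one out-edge outside the copy. Introducing extra label-$(1,2)$ pairs whose common neighbors lie in the gadget does not achieve this: even if such a pair is satisfied in the sense of~($\star$), it may be the \emph{gadget} endpoint, not the copy endpoint, that supplies the out-edge, so nothing is forced on the copy vertex. And to force one out-edge per copy vertex you would need on the order of $|V(H_{d-1,a})|$ gadget structures, not $O(a)$, destroying the size bound. The step you flag as ``technically delicate'' is in fact the place where this route breaks down.

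The paper sidesteps the gap by running the induction in the \emph{contrapositive} direction. One does not look for a bad pair inside the untouched copy $F$; instead one argues: if every label-$(1,2)$ pair of $F$ at distance two is satisfied (with common neighbors necessarily staying in $F$, which the gadget arranges), then by the inductive hypothesis the restriction of $\vec G$ to $F$ has maximum outdegree at least $d-1$. Hence some vertex $u\in V(F)$, say of label $1$, already has $d-1$ out-edges inside $F$, and since the global outdegree is $<d$ it has \emph{no} out-edge leaving $F$. The gadget now supplies a single vertex $v_4$ adjacent to all label-$1$ vertices of $F$ and to a new global vertex $v_6$ of label $2$, with $v_4$ the unique common neighbor of $u$ and $v_6$ in $H_{d,a}$. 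Then $(u,v_4)\notin E(\vec G)$ because $u$'s budget is exhausted inside $F$, and $(v_6,v_4)\notin E(\vec G)$ because $F$ was chosen untouched by $v_5,v_6$; so $(u,v_6)$ is the desired bad pair. This needs only two extra vertices per copy ($v_3,v_4$) plus two global vertices ($v_5,v_6$), i.e.\ $2a+2$ new vertices, and six labels suffice.
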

\begin{proof}
For $d=0$, we can let $H_{0,a}=K_2$ with one vertex labeled $1$ and the other vertex labeled $2$.  Suppose we already constructed $H_{d-1,a}$,
and let us show how to inductively obtain $H_{d,a}$.  First, let $H'_{d-1,a}$ be the graph obtained from $H_{d-1,a}$ by adding vertices $v_3$ and $v_4$
with labels $3$ and $4$ and adding all edges between vertices with labels $1$ and $4$ and between vertices with labels $2$ and $3$.
Next, we form the disjoint union of $a$ copies of $H'_{d-1,a}$.
Then we add two vertices $v_5$ and $v_6$ with labels $5$ and $6$, and all edges between vertices with labels $i$ and $i+2$ for $i\in\{3,4\}$.
Finally, we relabel vertices with labels $3$ and $5$ to label $1$ and vertices with labels $4$ and $6$ to label $2$.

The construction uses only $6$ labels, and thus $H_{d,a}$ is a constructible 6-labeled graph.  Moreover,
$$|V(H_{d,a})|=a(|V(H_{d-1,a})+2)+2\le a(8a^{d-1}-4)+2\le 8a^d-6,$$
where the last inequality holds since $a\ge 2$.
Consider any partial orientation $\vec{G}$ of $H_{d,a}$ of maximum outdegree less than $d$.
Since $v_5$ and $v_6$ have outdegree less than $d$, for one of the $a\ge 2d-1$ copies of $H'_{d-1,a}$ in $H_{d,a}$, denoted by $F'$,
we have $(v_i,v)\not\in \vec{G}$ for every $i\in\{5,6\}$ and $v\in V(F')$.  Let $F$ be the copy of $H_{d-1,a}$
in $F'$.  Suppose that for any two vertices $u$ and $v$ of $F$ of labels $1$ and $2$,
respectively, at distance exactly two in $H_{d,a}$, there exists a common neighbor $x$ of $u$ and $v$ in $H_{d,a}$ such that
$(u,x)\in E(\vec{G})$ or $(v,x)\in E(\vec{G})$.  The construction of $H'_{d-1,a}$ and $H_{d,a}$ ensures that such a common neighbor $x$ necessarily belongs to $F$,
as we did not add any vertex adjacent both to vertices with label $1$ and with label $2$.  Hence, by the induction hypothesis, the restriction
of $\vec{G}$ to $F$ has maximum outdegree at least $d-1$.  Let $u$ be a vertex of $F$ with at least $d-1$ outneighbors in $\vec{G}$ belonging to $F$.
By symmetry, we can assume $u$ has label $1$.  Since $\vec{G}$ has maximum outdegree less than $d$, we have $(u,v_4)\not\in E(\vec{G})$.
Moreover, by the choice of $F'$, we have $(v_6,v_4)\not\in E(\vec{G})$.  Note that $v_6$ has label $2$ in $H_{d,a}$ and the copy of $v_4$ in $F$
is the only common neighbor of $u$ and $v_6$ in $H_{d,a}$. This shows that $H_{d,a}$ satisfies the property (ii).
\end{proof}

By Lemma~\ref{lemma-cwbad}, letting $n=|V(H_{d,2d-1})|$, we conclude that any weak $2$-guidance system in $H_{d,2d-1}$, a graph of clique-width at most $6$,
has maximum outdegree at least $d=\Omega(\log n/\log\log n)$.  As we will see in Lemma~\ref{lemma-cutcom}, this is nearly tight.  
Before that, let us remark that a similar bound also applies to fractional $2$-guidance systems, which follows from Lemma~\ref{lemma-dual}:
For the purpose of the analysis, let us define both vertices of $H_{0,a}$ to be \emph{foundational}, and when constructing
$H_{d,a}$, we let the foundational vertices be exactly the foundational vertices in the copies of $H_{d-1,a}$;
then, we consider the $y$-weights defined inductively for each copy of $H_{d-1,a}$, and additionally set $y_{v_iz}=1$ for each $i\in\{5,6\}$ and each foundational vertex $z$ at distance two from $v_i$.
Letting $n_d=2a^d$ be the number of foundational vertices of $H_{d,a}$, this results in $x_{v_i}=|n_{d-1}|/2$; and moreover, $x_z=1$ for every foundational vertex $z$.  
Hence, the lower bound we obtain by Lemma~\ref{lemma-dual} is at least
$$\frac{an_{d-1}+a^2n_{d-2}+\ldots+a^dn_0}{(n_{d-1}+an_{d-2}+\ldots+a^dn_0)+n_d}=\frac{2da^d}{2da^{d-1}+2a^d}=\frac{da}{d+a}=\frac{2}{3}d$$
for $a=2d$.

On the positive side, we show that graphs of bounded clique-width admit weak guidance systems of logarithmic outdegree.
Let us start by a useful observation.  Suppose $(A,B)$ is a partition
of the vertex set of a graph $G$.  For $u,v\in V(G)$, we write $u\equiv_{(A,B)} v$ if either $u,v\in A$ and $u$ and $v$ have the same neighbors in $B$,
or $u,v\in B$ and $u$ and $v$ have the same neighbors in~$A$.

\begin{lemma}\label{lemma-cutcom}
Let $r$ be a positive integer or $\infty$.
Suppose $(A,B)$ is a partition of the vertex set of a graph $G$ and $\equiv_{(A,B)}$ has $k$ equivalence classes.
If $G[A]$ and $G[B]$ have a weak $r$-guidance system of maximum outdegree at most $c$, then $G$ has a weak $r$-guidance
system of maximum outdegree at most $c+k$.
\end{lemma}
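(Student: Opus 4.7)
The plan is to augment $\vec{H}_A \cup \vec{H}_B$ with at most one outgoing edge per vertex per equivalence class, directed towards a fixed representative. Concretely, for each equivalence class $C$ of $\equiv_{(A,B)}$ I fix an arbitrary representative $r_C \in C$, and set
\[
\vec{H} = \vec{H}_A \cup \vec{H}_B \cup \bigl\{(u, r_C) : u\in V(G),\ C \text{ a class},\ ur_C \in E(G)\bigr\}.
\]
The outdegree bound is immediate: each vertex $u$ contributes at most $c$ outedges from $\vec{H}_A$ or $\vec{H}_B$ plus at most one added edge per equivalence class (namely the edge $(u,r_C)$ when $u\sim r_C$), for a total of at most $c+k$.

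For correctness I would verify the characterization in Observation~\ref{obs-char}. The structural fact I would lean on is that adjacency between opposite-side equivalence classes is all-or-nothing: if $u_i$ and $u_{i+1}$ lie on different sides of $(A,B)$, then every vertex of $C_{u_{i+1}}$ is adjacent to $u_i$, and in particular $u_i \sim r_{C_{u_{i+1}}}$, placing the edge $(u_i, r_{C_{u_{i+1}}})$ in $\vec{H}$. Moreover, if both $u_i$ and $u_{i+2}$ lie on the side opposite to $u_{i+1}$, then $r_{C_{u_{i+1}}}$ is likewise adjacent to $u_{i+2}$, so $u_{i+1}$ can be swapped with its representative along the path $P$ without changing its length.

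I would then analyse a shortest path $P = u_0 u_1 \cdots u_\ell$ between $u = u_0$ and $v = u_\ell$. If $P$ is contained in $G[A]$ (or in $G[B]$), then $d_{G[A]}(u,v) = \ell$ and $\vec{H}_A$ provides an outneighbor of $u$ or $v$ in $\gate{G[A]}{u}{v}$, which is a subset of $\gate{G}{u}{v}$; this handles one large block of cases outright. Otherwise $P$ contains a crossing edge, and I would argue that the first crossing edge of $P$ gives rise to a representative outedge in $\gate{G}{u}{v}$ (or its symmetric counterpart from $v$), verifying $(\star)$.

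The step I expect to be the main obstacle is ensuring that the representative edge $(u, r_{C_{u_1}})$ really points into $\gate{G}{u}{v}$, i.e., that $d_G(r_{C_{u_1}}, v) = \ell - 1$, in the situation where the shortest path $P$ has $u_1$ and $u_2$ on the same side, so that the swap argument above does not apply directly. Resolving this will likely require either a judicious choice of representatives (for instance, picking $r_C \in C$ to maximise same-side connectivity within $C$) or passing to an alternative shortest path in $G$ whose structure aligns better with the class partition; verifying that this is always achievable under the partition assumptions is the technical crux of the argument.
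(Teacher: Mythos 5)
The gap you flag at the end is real, and the fixed-representative scheme cannot be salvaged by a cleverer global choice of $r_C$: the representative is chosen once and for all, independently of the query pair $(u,v)$, whereas which vertex of a class $C$ can serve as a step on a shortest $u$--$v$ path genuinely depends on $u$. Concretely, when $u_1$ and $u_2$ lie on the same side of the partition, the vertex $r_{C_{u_1}}$ is adjacent to $u_0$ (since $u_0$ is on the side opposite to $C_{u_1}$), but there is no reason for $r_{C_{u_1}}$ to be adjacent to $u_2$; hence $d_G(r_{C_{u_1}},v)$ can be $\ell$ or even $\ell+1$, and the edge $(u,r_{C_{u_1}})$ need not land in $\gate{G}{u}{v}$. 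A second, related hole: if the first crossing of $P$ is $u_iu_{i+1}$ with $i\ge 1$, the representative edge $(u_i,r_{C_{u_{i+1}}})$ is incident with neither $u$ nor $v$, so it cannot by itself witness $(\star)$, and a fallback to $\vec{H}_A$ is unavailable because $d_{G[A]}(u,v)$ may exceed $\ell$ or be infinite.

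The paper's construction avoids both problems by making the added edges depend on $u$ rather than on a global representative: for each vertex $u$ and each class $C$, it picks a vertex $u'_C\in C$ \emph{nearest to $u$ in $G$} and directs $u$ towards some $u_C\in\gate{G}{u}{u'_C}$, i.e., towards the first step of a shortest $u$--$u'_C$ path. This is still at most one extra edge per class, so the outdegree budget $c+k$ is unchanged. In the correctness argument the relevant class $C$ is not the class of $u_1$ but the class of $z$, the last vertex of the longest initial segment $P'$ of $P$ that stays on one side of the partition (after possibly swapping the roles of $u$ and $v$ so that $u$ has an incident non-crossing edge of $P$). Nearness gives $d(u,u'_C)\le|E(P')|$, same-side equivalence $u'_C\equiv_{(A,B)}z$ gives that $u'_C$ is adjacent to the vertex following $z$ on $P$, and the triangle inequality then forces $d(u,u'_C)=|E(P')|$ and $d(u'_C,v)=\ell-|E(P')|$, so $u_C\in\gate{G}{u}{v}$. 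The fix you need is exactly this localization: the ``representative of $C$ relative to $u$'' should be a nearest member of $C$, and the outedge from $u$ should point to a shortest-path step towards it, not to the representative itself.
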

\begin{proof}
Let $\vec{H}_A$ and $\vec{H}_B$ be weak $r$-guidance systems of maximum outdegree at most $c$ in $G[A]$ and $G[B]$, respectively.
Let $\vec{H}$ consist of $\vec{H}_A\cup \vec{H}_B$ and the following edges:
For each $u\in V(G)$ and each equivalence class $C$ of $\equiv_{(A,B)}$ intersecting the component of $G$ containing $u$,
choose a vertex $u'_C$ in $C$ nearest to $u$ in $G$ and a vertex $u_C\in \gate{G}{u}{u'_C}$ arbitrarily, and add the edge
$(u,u_C)$.  Clearly, $\vec{H}$ has maximum outdegree at most $c+k$.

Consider now any vertices $u,v\in V(G)$ at distance $\ell$, where $2\le \ell\le r$, and let $P$ be a shortest path
between $u$ and $v$ in $G$.  If an edge of $P$ incident with $u$ or $v$ belongs to $G[A]\cup G[B]$, switch the names
of vertices $u$ and $v$ if necessary so that such an edge is incident with $u$.  By symmetry, we can assume $u\in A$.
If $P\subseteq G[A]$, then by Observation~\ref{obs-char}, $\vec{H}_A$ (and thus also $\vec{H}$) contains
an edge directed from $u$ to $\gate{G[A]}{u}{v}\subseteq \gate{G}{u}{v}$ or an edge directed from $v$ to $\gate{G[A]}{v}{u}\subseteq \gate{G}{v}{u}$.
Hence, suppose that $P\not\subseteq G[A]$.

If the first edge of $P$ is contained in $G[A]$, then let $P'$ be the longest initial segment
of $P$ contained in $G[A]$.  If the first edge of $P$ is not contained in $G[A]$, then let $P'$ be the longest initial segment
of $P$ contained in $G[B\cup\{u\}]$.  Let $C$ be the equivalence class of $\equiv_{(A,B)}$ containing the last vertex $z$ of $P'$.
Note that $z\neq v$: In the first case, this is because $P$ is not contained in $G[A]$.  In the second case, this
is because $|E(P)|=\ell\ge 2$ and the choice of the names of the vertices $u$ and $v$ implies that the last edge of $P$ is not contained in $G[B]$.
Since $u'_C$ is a nearest vertex from $u$ in $C$, $u'_C$ is at distance at most $|E(P')|$ from $u$ in $G$.
Moreover, $u'_C$ is in the same equivalence class of $\equiv_{(A,B)}$ as $z$, and thus $u'_C$ is adjacent to the vertex following $z$ in $P$.
Hence, $u_C\in\gate{G}{u}{v}$ and $\vec{H}$ contains the edge $(u,u_C)$.

Observation~\ref{obs-char} then implies that $\vec{H}$ is a weak $r$-guidance system in $G$.
\end{proof}

We combine this with the following well-known fact about clique-width.
\begin{observation}
If $G$ is a graph with $n$ vertices and clique-width at most $k$, then there exists a partition $(A,B)$ of vertices of $G$
such that $|A|,|B|\le\tfrac{2}{3}n$ and $\equiv_{(A,B)}$ has at most $2k$ equivalence classes.
\end{observation}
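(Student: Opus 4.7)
The plan is to find a single node $t$ in a clique-width-$k$ expression tree for $G$ that gives both a balanced vertex partition and a partition with few equivalence classes. Locating $t$ is a tree-centroid argument: starting from the root and at each internal node descending to the child whose subtree covers the most vertices of $G$, I would stop the moment the current subtree's vertex count drops to at most $\tfrac{2}{3}n$. Because the unary operations (relabel and add-edges) leave the subtree's vertex set unchanged, the count strictly decreases only at disjoint-union nodes, where the chosen child inherits at least half of the parent's vertices. Consequently, if the parent has more than $\tfrac{2}{3}n$ vertices and the chosen child has at most $\tfrac{2}{3}n$, then the chosen child still has more than $\tfrac{1}{3}n$, giving $|V(G_t)|\in(\tfrac{1}{3}n,\tfrac{2}{3}n]$ as required. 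I then set $A=V(G_t)$ and $B=V(G)\setminus V(G_t)$.

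For the $A$-side of $\equiv_{(A,B)}$, I would use that at $t$ each vertex of $A$ carries a label from $[k]$, so $A$ is split into at most $k$ label classes. Every operation performed above $t$ refers to $V(G_t)$ only through its current labels: relabeling is a global label-to-label map and add-edges connects entire label classes, so the current label of any vertex of $A$ is a deterministic function of its label at $t$. Two vertices of $A$ sharing a label at $t$ therefore acquire exactly the same cross-adjacencies to $B$ throughout the remainder of the construction, and the label partition at $t$ refines $\equiv_{(A,B)}$ restricted to $A$. This bounds the $A$-side by $k$ classes.

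For the $B$-side, my plan is a dual reduction. Form a $k$-labeled graph $G^{\ast}$ by substituting the subexpression producing $G_t$ with $k$ isolated meta-vertices $a_1,\ldots,a_k$, where $a_i$ carries label $i$, and then executing the remainder of the expression. This yields a clique-width-$k$ expression for a graph on $\{a_1,\ldots,a_k\}\cup B$, and by the same uniformity the adjacency of $v\in B$ to $a_i$ in $G^{\ast}$ records exactly whether $v$ was adjacent in $G$ to the (whole) label-$i$ class of $A$ at $t$. The equivalence classes of $\equiv_{(A,B)}$ on $B$ therefore biject with the distinct neighborhoods $N_{G^{\ast}}(v)\cap\{a_1,\ldots,a_k\}$ for $v\in B$. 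A symmetric descent in the expression tree of $G^{\ast}$, chosen to capture $\{a_1,\ldots,a_k\}$ together with a balanced fraction of $B$, provides a $k$-labeling of $B$ at that node that refines the neighborhood-in-$\{a_1,\ldots,a_k\}$ partition, bounding the $B$-side by $k$ as well. Summing gives the required bound of $2k$ classes.

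The main obstacle is the $B$-side argument: a priori the vertices of $B$ could realize up to $2^k$ distinct neighborhood patterns in the $k$-element meta-vertex set, so forcing this count down to $k$ requires genuinely exploiting the clique-width-$k$ structure of $G^{\ast}$ rather than just the cardinality of $\{a_1,\ldots,a_k\}$. Choosing the dual node carefully, and verifying that the induced $k$-labeling on $B$ at that node is fine enough to separate distinct neighborhoods in the meta-vertex set, is where the technical effort of the proof concentrates.
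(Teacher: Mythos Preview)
The paper states this observation as a well-known fact and gives no proof, so there is nothing to compare against directly. Your centroid descent and your $A$-side analysis are standard and correct: the $k$-labelling at the stopping node $t$ refines $\equiv_{(A,B)}$ on $A=V(G_t)$, so $A$ contributes at most $k$ classes.

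The $B$-side argument, however, has a genuine gap, and your proposed ``symmetric descent'' in $G^{\ast}$ does not close it. In the expression tree for $G^{\ast}$, any node containing all of $\{a_1,\dots,a_k\}$ has vertex set $\{a_1,\dots,a_k\}\cup B'$ for some $B'\subseteq B$; the $k$-labelling there controls neighbourhoods of $B'$ into $B\setminus B'$, not into $\{a_1,\dots,a_k\}$, so it does not refine the partition you need, and it says nothing at all about $B\setminus B'$. More fundamentally, the partition $(V(G_t),\,V(G)\setminus V(G_t))$ that your centroid produces can outright fail to have at most $2k$ classes. Take $G$ to be the $4$-cycle $a_1a_2c_1c_2$ together with a universal vertex $d$ (a cograph, so clique-width $2$), built by the expression that forms $\{a_1{:}1,a_2{:}2\}$ and $\{c_1{:}1,c_2{:}2\}$, takes their union, adds all $1$--$2$ edges, relabels $2\to 1$, unions in $d$ with label $2$, and adds $1$--$2$ edges again. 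Your descent stops at $t$ with $V(G_t)=\{a_1,a_2\}$; but then $c_1,c_2,d$ have neighbourhoods $\{a_2\},\{a_1\},\{a_1,a_2\}$ in $A$, so $\equiv_{(A,B)}$ has $2+3=5>2k=4$ classes. The observation is still true for this $G$ (for instance $A=\{a_1,c_1\}$ gives only two classes total), but it is not witnessed by the centroid of this particular expression. A correct argument must either pass to a better-structured expression tree for $G$ or allow partitions that are not of the form $(V(G_t),\,V(G)\setminus V(G_t))$; your proposal does neither.
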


Since any induced subgraph of a graph of clique-width at most $k$ also has clique-width at most $k$, we obtain the following
consequence.

\begin{corollary}\label{lemma-cwgood}
For every $k\ge 0$, every $n$-vertex graph of clique-width at most $k$ has a partial orientation $\vec{H}$
of maximum outdegree $O(k\log n)$ such that $\vec{H}$ is a weak $\infty$-guidance system.
\end{corollary}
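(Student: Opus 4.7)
The plan is to prove Corollary \ref{lemma-cwgood} by induction on $n$, using the balanced partition observation together with Lemma \ref{lemma-cutcom} as the recursive step. Let $f(n)$ denote the maximum outdegree needed for an $n$-vertex graph of clique-width at most $k$ to admit a weak $\infty$-guidance system.

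For the base case, if $n\le 1$ then the empty orientation works, so $f(1)=0$. For the inductive step, assume the result for all graphs of clique-width at most $k$ on fewer than $n$ vertices. Given an $n$-vertex graph $G$ of clique-width at most $k$, apply the observation just above the corollary to obtain a partition $(A,B)$ of $V(G)$ with $|A|,|B|\le \tfrac{2}{3}n$ and such that $\equiv_{(A,B)}$ has at most $2k$ equivalence classes. Since clique-width is preserved under induced subgraphs, both $G[A]$ and $G[B]$ have clique-width at most $k$, so by the induction hypothesis each admits a weak $\infty$-guidance system of maximum outdegree at most $f(\tfrac{2}{3}n)$. Then Lemma \ref{lemma-cutcom} (applied with $r=\infty$) combines these into a weak $\infty$-guidance system for $G$ of maximum outdegree at most $f(\tfrac{2}{3}n)+2k$.

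This yields the recurrence $f(n)\le f(\tfrac{2}{3}n)+2k$, which unrolls to $f(n)\le 2k\log_{3/2} n=O(k\log n)$, completing the induction and hence the proof. There is essentially no obstacle here: every ingredient has been set up in the preceding statements. The only minor thing to verify is that the recursion really produces a single weak $\infty$-guidance system for $G$ (not just for each fixed $r$), but this is immediate because Lemma \ref{lemma-cutcom} is stated uniformly for all $r$ including $r=\infty$, and the construction inside its proof does not depend on $r$.
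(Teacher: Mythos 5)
Your proof is correct and is essentially the paper's intended argument: the paper just states the corollary as an immediate consequence of the preceding observation (the balanced partition with $\le 2k$ equivalence classes), Lemma~\ref{lemma-cutcom}, and the hereditariness of clique-width, exactly the recursion you spell out, with the same recurrence $f(n)\le f(\tfrac{2}{3}n)+2k$.
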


\section{Conclusions}

As we have shown, some interesting graph classes admit weak guidance systems of bounded maximum
outdegree, including
\begin{itemize}
\item interval graphs,
\item classes with structurally bounded expansion, and
\item distance powers of graphs with bounded outdegree weak guidance systems.
\end{itemize}
However, we do not have an exact characterization of the graph classes with this property.

\begin{problem}
Characterize hereditary graph classes $\GG$ such that for every positive integer $r$, every graph from $\GG$ admits a weak $r$-guidance system
of bounded maximum outdegree.
\end{problem}

We have also exhibited several graph classes that only admit weak guidance systems whose outdegree
grows slowly with the number of vertices of the graph, in particular
\begin{itemize}
\item structurally nowhere-dense classes, and
\item graphs of bounded clique-width.
\end{itemize}
Again, we do not have a good description of the graph classes with this property.

\begin{problem}
Characterize hereditary graph classes $\GG$ such that for every positive integer $r$, every graph $G\in \GG$ admits a weak $r$-guidance system
of maximum outdegree at most $|V(G)|^{o(1)}$.
\end{problem}

In sparse graphs, guidance systems and related notions (such as the generalized coloring numbers) have various algorithmic
and structural applications.  We suspect that similar applications can be found for weak guidance systems as well,
generalizing them to dense graphs; we demonstrated this on the example of approximation algorithms
for distance domination and independence number.

\section*{Acknowledgments}

I would like to thank Abhiruk Lahiri and Ben Moore for useful discussions of the subject.

\bibliographystyle{siam}
\bibliography{../data.bib}

\end{document}